\documentclass[english,draftcls, onecolumn]{IEEEtran}
\usepackage[T1]{fontenc}
\usepackage[latin9]{inputenc}
\usepackage{geometry}
\geometry{verbose,tmargin=0.9in,bmargin=0.9in,lmargin=0.9in,rmargin=0.9in}
\usepackage{amsthm}
\usepackage{amsmath}
\usepackage{amssymb}
\usepackage{cite}
\usepackage{graphicx}
\usepackage{epstopdf}
\usepackage{float} 
 
\makeatletter
%%%%%%%%%% Textclass specific LaTeX commands. %%%%%%%%%%

\theoremstyle{plain}
\newtheorem{thm}{\protect\theoremname}
\theoremstyle{plain}
\newtheorem{lem}{\protect\lemmaname}
  \theoremstyle{plain} 
  \newtheorem{prop}{\protect\propositionname}
 
\newcommand{\openone}{\leavevmode\hbox{\small1\normalsize\kern-.33em1}} 
\newcommand{\defeq}{\triangleq}
\newcommand{\rcux}{\mathrm{rcux}}
\newcommand{\var}{\mathrm{Var}} 
\long\def\symbolfootnote[#1]#2{\begingroup\def\thefootnote{\fnsymbol{footnote}}\footnote[#1]{#2}\endgroup}

\newcommand{\Eex}{E_{\mathrm{ex}}}
\newcommand{\Eexiid}{E_{\mathrm{ex}}^{\mathrm{iid}}}

\newcommand{\Eexcc}{E_{\mathrm{ex}}^{\mathrm{cc}}}
\newcommand{\Eexhatcc}{\hat{E}_{\mathrm{ex}}^{\mathrm{cc}}}
\newcommand{\Eexcost}{E_{\mathrm{ex}}^{\mathrm{cost}}}

\newcommand{\Exiid}{E_{\mathrm{x}}^{\mathrm{iid}}}
\newcommand{\Excc}{E_{\mathrm{x}}^{\mathrm{cc}}}
\newcommand{\Excost}{E_{\mathrm{x}}^{\mathrm{cost}}}
\newcommand{\Excoststar}{E_{\mathrm{x}}^{\mathrm{cost}^{*}}}

\newcommand{\SetScc}{\mathcal{S}}
\newcommand{\SetSncc}{\mathcal{S}_{n}}
\newcommand{\SetTcc}{\mathcal{T}}

\newcommand{\SetTncc}{\mathcal{T}_{n}}
\newcommand{\PXv}{P_{\Xv}}

\newcommand{\rbar}{\overline{r}}

\newcommand{\Ptilde}{\widetilde{P}}

\newcommand{\mbar}{\overline{m}}

\newcommand{\xbar}{\overline{x}}
\newcommand{\Xbar}{\overline{X}}
\newcommand{\xv}{\boldsymbol{x}}
\newcommand{\xvbar}{{\overline{\boldsymbol{x}}}}
\newcommand{\yv}{\boldsymbol{y}}
\newcommand{\Xv}{\boldsymbol{X}}
\newcommand{\Xvbar}{\overline{\boldsymbol{X}}}
\newcommand{\Yv}{\boldsymbol{Y}}

\newcommand{\Ac}{\mathcal{A}}
\newcommand{\Cc}{\mathcal{C}}
\newcommand{\Dc}{\mathcal{D}}
\newcommand{\Ec}{\mathcal{E}}
\newcommand{\Fc}{\mathcal{F}}
\newcommand{\Pc}{\mathcal{P}}
\newcommand{\Xc}{\mathcal{X}}
\newcommand{\Yc}{\mathcal{Y}}

\newcommand{\EE}{\mathbb{E}}
\newcommand{\PP}{\mathbb{P}}
\newcommand{\RR}{\mathbb{R}}

\newcommand{\Csf}{\mathsf{C}}

\newcommand{\Lsf}{\mathsf{L}}

\newcommand{\al}{\{a_l\}}

\newcommand{\brl}{\{\overline{r}_l\}} 

\DeclareMathOperator*{\argmax}{arg\,max}
\DeclareMathOperator*{\argmin}{arg\,min}

\newcommand{\Ic}{\mathcal{I}}
\newcommand{\hover}{\overline{h}}

\makeatother 
\usepackage{babel}
\providecommand{\theoremname}{Theorem}
\providecommand{\lemmaname}{Lemma}
\providecommand{\propositionname}{Proposition}

%%%%%%%%%% Title, Authors and Abstract %%%%%%%%%%
\begin{document}
\title{Expurgated Random-Coding Ensembles: Exponents, Refinements and Connections}
\author{Jonathan Scarlett, Li Peng, Neri Merhav, Alfonso Martinez and Albert Guill\'en i F\`abregas}
\maketitle

\begin{abstract} 
    This paper studies expurgated random-coding bounds and exponents for channel coding with a given 
    (possibly suboptimal) decoding rule.  Variations of Gallager's analysis
    are presented, yielding several asymptotic and non-asymptotic bounds 
    on the error probability for an arbitrary codeword distribution.
    A simple non-asymptotic bound is shown to attain an exponent
    of Csisz\'{a}r and K\"{o}rner under constant-composition coding.
    Using Lagrange duality, this exponent is expressed in several forms,
    one of which is shown to permit a direct derivation via cost-constrained
    coding which extends to infinite and continuous alphabets.
    The method of type class enumeration is studied, and it is shown that this 
    approach can yield improved exponents and better tightness guarantees 
    for some codeword distributions.  A generalization of this approach
    is shown to provide a multi-letter exponent which extends immediately 
    to channels with memory.  Finally, a refined analysis expurgated i.i.d. random coding
    is shown to yield a $O\big(\frac{1}{\sqrt n}\big)$ prefactor,
    thus improving on the standard $O(1)$ prefactor.  Moreover, the implied
    constant is explicitly characterized.
\end{abstract}
\begin{IEEEkeywords}
    Expurgated error exponents, reliability function, random coding, 
    mismatched decoding, maximum-likelihood decoding, type class enumeration
\end{IEEEkeywords}

\symbolfootnote[0]{
J. Scarlett and L. Peng are with the Department of Engineering, University of Cambridge, Cambridge, CB2 1PZ, U.K. (e-mails: jmscarlett@gmail.com, lp327@cam.ac.uk).  
N.~Merhav is with the Department of Electrical Engineering, Technion -  Israel Institute of Technology, Technion City, Haifa 32000, Israel. (e-mail: merhav@ee.technion.ac.il).
A. Martinez is with the Department of Information and Communication Technologies, Universitat Pompeu Fabra, 08018 Barcelona, 
Spain (e-mail: alfonso.martinez@ieee.org).  A. Guill\'en i F\`abregas is with the Instituci\'o Catalana de Recerca i Estudis Avan\c{c}ats (ICREA), 
the Department of Information and Communication Technologies, Universitat Pompeu Fabra, 08018 Barcelona, Spain, 
and also with the Department of Engineering, University of Cambridge, Cambridge, CB2 1PZ, U.K. (e-mail: guillen@ieee.org).

This work has been funded in part by the European Research Council under ERC grant agreement 259663, by the European Union's 7th Framework Programme (PEOPLE-2011-CIG) 
under grant agreement 303633 and by the Spanish Ministry of Economy and Competitiveness under grants RYC-2011-08150 and TEC2012-38800-C03-03.
The work of N.~Merhav was partially supported by the Israel Science Foundation (ISF), grant no.\ 412/12. 

This is the extended version of a paper which was accepted to \emph{IEEE Transactions on Information Theory}
(April 2014).  A shortened version was presented at the 2014 International Zurich Seminar
on Communications.} 

\thispagestyle{empty}

\vspace{-1cm}

%----------------------%
%---- Introduction ----%
%----------------------%
\section{Introduction} \label{sec:EXP_INTRODUCTION}

Achievable performance bounds for channel coding are typically obtained by analyzing the average error probability of 
an ensemble of codebooks with independently generated codewords.  For memoryless channels, random  
codes with independent and identically distributed (i.i.d.) symbols achieve the channel capacity \cite{Shannon}, 
characterize the error exponent of the best code at sufficiently high rates \cite[Ch. 5]{Gallager}, 
and provide tight bounds on the finite-length performance \cite{Finite}.

At low rates, the error probability of the best code in the random-coding ensemble can be significantly smaller 
than the average.  In such cases, better performance bounds are obtained by considering an ensemble in which a
subset of the randomly generated codewords are expurgated from the codebook.  In particular,
the error exponents resulting from such techniques are generally higher than the random-coding error exponent at low
rates.  Existing works exploring such techniques include those of Gallager \cite[Sec. 5.7]{Gallager},
Csisz\'{a}r-K\"{o}rner-Marton \cite{ExpurgCKM},\cite[Ex. 10.18]{CsiszarBook} and Csisz\'{a}r-K\"{o}rner \cite{Csiszar1}.  
The advantages of Gallager's approach include its simplicity and the fact that the analysis 
is not restricted to finite alphabets.  On the other hand, as we will see in 
Section \ref{sec:EXP_OVERVIEW}, the exponents of \cite{Csiszar1,ExpurgCKM,CsiszarBook}  can improve  
on that of Gallager for a given input distribution or decoding rule. 

In this paper, we provide techniques that attain the best of each of the above approaches.
Using variations of Gallager's analysis, we obtain several asymptotic and
non-asymptotic bounds for an arbitrary codeword distribution.
Using these bounds, we provide derivations of both new and existing expurgated exponents, each 
yielding various advantages such as simplicity, generality, and guarantees of exponential tightness.  
We explore the method of type class enumeration 
(e.g. see \cite{MerhavErasure,MerhavIC,MerhavPhysics}) for both discrete and continuous channels, 
and show that it can yield improved exponents and tightness guarantees, as well as providing
a multi-letter exponent which extends immediately to channels with memory. 

%%%%% System Setup %%%%%
\subsection{System Setup} \label{sec:EXP_SYSTEM_SETUP}

The input and output alphabets are denoted by $\Xc$ and $\Yc$
respectively.  The channel is assumed to be memoryless, yielding
an $n$-letter transition law given by $W^{n}(\yv|\xv)\defeq\prod_{i=1}^{n}W(y_{i}|x_{i})$
for some conditional distribution $W(y|x)$. 
In the case that both $\Xc$ and $\Yc$ are finite, the channel is a 
discrete memoryless channel (DMC), but we do not assume this to be the case in general.
The encoder takes as input a message $m$ equiprobable on the 
set $\{1,\dotsc,M\}$, and transmits the corresponding codeword $\xv^{(m)}$ 
from a codebook $\Cc=\{\xv^{(1)},\dotsc,\xv^{(M)}\}$.
The decoder receives the vector $\yv$ at the output of
the channel, and forms the estimate 
    \begin{equation}
    \hat{m} = \argmax_{j\in\{1,\dotsc,M\}}q^{n}(\xv^{(j)},\yv), \label{eq:SU_DecodingRule}
    \end{equation}
where $q^{n}(\xv,\yv)\defeq\prod_{i=1}^{n}q(x_{i},y_{i})$, and $q(x,y)$ is
a non-negative function called the \emph{decoding metric}. An error is said to have occurred
if $\hat{m} \ne m$, and we assume that ties are broken as errors.  We let $p_{e,m}(\Cc)$ be the 
error probability induced by $\Cc$ given a particular message $m$, and we denote 
the maximal error probability by $p_{e}(\Cc) \triangleq \max_{m}p_{e,m}(\Cc)$.

When $q(x,y)=W(y|x)$, \eqref{eq:SU_DecodingRule} is the optimal maximum-likelihood (ML) decoding rule.
For other decoding metrics, this setting is that of \emph{mismatched decoding} 
\cite{Merhav,Csiszar2,MMRevisited,JournalSU}, which is
relevant when ML decoding is not feasible, e.g. due to channel uncertainty or
implementation constraints.

Throughout the paper, we consider channels with both constrained and unconstrained inputs.
In the former setting, each codeword $\xv$ must satisfy a constraint of the form  
    \begin{equation}
        \frac{1}{n}\sum_{i=1}^{n} c(x_{i}) \le \Gamma, \label{eq:EXP_SystemCost}
    \end{equation}
where $c(\cdot)$ is referred to as a cost function, and $\Gamma$ is a constant.
Except where stated otherwise, it will be assumed that the input is unconstrained, which
corresponds to $\Gamma=\infty$.

For a given rate $R$, an error exponent $E(R)$ is said to be achievable if  
there exists a sequence of codebooks $\Cc_{n}$ of length $n$ and rate $R$ whose error
probability $p_{e}(\Cc_{n})$ satisfies
    \begin{equation} 
    \liminf_{n\to\infty}-\frac{1}{n}\log p_{e}(\Cc_{n})\ge E(R).
    \end{equation} 
We focus on the maximal error probability rather than the average error probability, 
but the two are equivalent for the purposes of studying error exponents.

%%%%% Previous Work %%%%%
\subsection{Previous Work} \label{sec:EXP_PREVIOUS_WORK}

Considering ML decoding, Gallager \cite[Ch. 5]{Gallager} studied an ensemble
in which $2M-1$ codewords are generated at random, and a subset of $M$ codewords forms the codebook.  Roughly speaking, the
codewords which are kept are those which have the lowest error probability among the original codewords.  A 
different approach was taken by Csisz\'{a}r, K\"{o}rner and Marton \cite{ExpurgCKM} (see also \cite[Ex. 10.18]{CsiszarBook}), 
who began by proving the existence of a collection of constant-composition codewords such that any two codewords have
a joint empirical distribution satisfying certain properties.  By analyzing this collection of codewords using the  
method of types, an error exponent was obtained which coincides with that of Gallager after the optimization of
the input distribution.  An exponent for mismatched decoding was derived by Csisz\'{a}r and K\"{o}rner \cite{Csiszar1}, 
and was shown to coincide with that of \cite{ExpurgCKM} when particularized to the case of ML decoding.

As stated in the introduction, 
the exponents of \cite{ExpurgCKM,Csiszar1} can in fact improve on that of Gallager 
for a given input distribution. However, the proofs rely heavily on techniques which are
valid only when the input and output alphabets are finite.  In particular, \cite{ExpurgCKM} 
uses the type packing lemma \cite[Ch. 10]{CsiszarBook}, and \cite{Csiszar1} uses a combinatorial
graph decomposition lemma.  For other related works, see 
\cite{ExpurgJelenik, ExpurgBlahut, ExpurgOmura, JSCC4}.

Overviews of the mismatched decoding problem can be found in \cite{Merhav,Csiszar2,MMRevisited,JournalSU}. 
Most of the literature has focused on achievable rates, whereas this paper is concerned with the performance
at low rates.  The mismatched decoding paper most relevant to this one is \cite{JournalSU}, which studies
random-coding error exponents for various non-expurgated ensembles.

%%%%% Contributions %%%%%
\subsection{Contributions} \label{sec:EXP_CONTRIBUTIONS}

The main contributions of this paper are as follows:
\begin{itemize}
  \item In Section \ref{sec:EXP_BOUNDS}, we present variations of Gallager's analysis 
        which yield several asymptotic and non-asymptotic bounds on the error probability.
        In particular, we consider the use of a logarithmic function in the expurgation
        argument in place of the power function used by Gallager \cite[Sec. 7.3]{Gallager}.
  \item In Section \ref{sec:EXP_OVERVIEW}, we present an overview of various expurgated
        exponents and the connections between them.  Using the method of Lagrange duality \cite{Convex}, we relate
        the exponents given in \cite{Gallager,ExpurgCKM,Csiszar1}.  Generalizations of the 
        exponents in \cite{Gallager,ExpurgCKM} to the setting of mismatched decoding are given,
        and an alternative form of the exponent in \cite{Csiszar1} is given which extends
        readily to channels with infinite or continuous alphabets.
  \item In Section \ref{sec:EXP_DISCRETE}, we present several methods for deriving both new and existing
        exponents:
        \begin{itemize}
            \item In Section \ref{sec:EXP_DISC_SIMPLE}, we present simple techniques for deriving
                  exponents using a non-asymptotic bound from Section \ref{sec:EXP_BOUNDS}.
                  Applying constant-composition coding and the method of types recovers
                  the exponent in \cite{Csiszar1}, thus providing a simple and concise proof.
                  Furthermore, applying cost-constrained coding with multiple 
                  auxiliary costs \cite{JournalSU} recovers the generalization of this exponent to more general alphabets. 
            \item In Section \ref{sec:EXP_TYPE_ENUM}, we study the method of type
                  class enumeration (e.g. see \cite{MerhavErasure,MerhavIC,MerhavPhysics}),
                  which is shown  to yield better exponents than the simpler approach 
                  for some codeword distributions, as well as better guarantees of exponential tightness.
            \item In Section \ref{sec:EXP_DIST_ENUM}, we extend the type class enumeration
                  analysis to allow for infinite and continuous alphabets.  This is 
                  not only of interest in itself, but also yields 
                  a multi-letter exponent which can be directly applied to channels with memory
                  and more general decoding metrics.
        \end{itemize}
  \item In Section \ref{sec:EXP_PREFACTOR}, we present a refined derivation of Gallager's exponent
        for i.i.d. random coding (and its generalization to mismatched decoding) with a 
        $O\big(\frac{1}{\sqrt{n}}\big)$ prefactor, thus improving on the original $O(1)$ prefactor.
        Similar improvements for the non-expurgated random-coding error exponent have recently been
        obtained by Altu\u{g} and Wagner \cite{RefinementJournal} (see also \cite{PaperRefinement}).
\end{itemize}

%%%%% Notation %%%%%
\subsection{Notation} \label{sec:EXP_NOTATION}

We use bold symbols for vectors (e.g. $\xv$), and denote the corresponding
$i$-th entry using a subscript (e.g. $x_{i}$).

The set of all probability distributions on an alphabet, say $\Xc$,
is denoted by $\Pc(\Xc)$, and the set of all empirical
distributions on a vector in $\Xc^{n}$ (i.e. types \cite[Ch. 2]{CsiszarBook})
is denoted by $\Pc_{n}(\Xc)$. For a given type $Q\in\Pc_{n}(\Xc)$, the type class $T^{n}(Q)$
is defined to be the set of all sequences in $\Xc^{n}$ with type $Q$.

The probability of an event is denoted by $\PP[\cdot]$, and
the symbol $\sim$ means ``distributed as''. The marginals of a
joint  distribution $P_{XY}(x,y)$ are denoted by $P_{X}(x)$ and
$P_{Y}(y)$. We write $P_{X}=\Ptilde_{X}$ to denote element-wise
equality between two probability distributions on the same alphabet.
Expectation with respect to a joint distribution $P_{XY}(x,y)$ is
denoted by $\EE_{P}[\cdot]$, or simply $\EE[\cdot]$ when the associated 
probability distribution is understood from the context.
Similarly, the mutual information with respect
to $P_{XY}$ is written as $I_{P}(X;Y)$, or simply $I(X;Y)$ when
the distribution is understood from the context. Given a distribution
$Q(x)$ and conditional distribution $W(y|x)$, we write $Q\times W$
to denote the joint distribution defined by $Q(x)W(y|x)$.

For two positive sequences $f_{n}$ and $g_{n}$, we write $f_{n}\doteq g_{n}$
if $\lim_{n\to\infty}\frac{1}{n}\log\frac{f_{n}}{g_{n}}=0$, and we write $f_{n}\,\dot{\le}\,g_{n}$
if $\limsup_{n\to\infty}\frac{1}{n}\log\frac{f_{n}}{g_{n}}\le0$, and analogously for $\dot{\ge}$.
We write $f_{n}=O(g_{n})$ if $|f_{n}|\le c|g_{n}|$
for some $c$ and sufficiently large $n$. 
All logarithms have base $e$, and all rates are in units of nats
except in the examples, where bits are used. We define $[c]^{+}=\max\{0,c\}$,
and denote the indicator function by $\openone\{\cdot\}$.

%---------------------------%
%---- Expurgated Bounds ----%
%---------------------------%
\section{Expurgated Bounds} \label{sec:EXP_BOUNDS}

In this section, we present a number of variations of Gallager's bounds and techniques
which will provide the starting points of the derivations of the exponents in Section
\ref{sec:EXP_DISCRETE}.  We let
$P_{\Xv}$ denote a codeword distribution, and we define the random variables
$(\Xv,\Yv,\Xvbar)$ distributed according to
    \begin{equation}
    (\Xv,\Yv,\Xvbar) \sim \PXv(\xv)W^{n}(\yv|\xv)\PXv(\xvbar).
    \end{equation}
In the case that a cost constraint of the form \eqref{eq:EXP_SystemCost} is present,
we assume that $P_{\Xv}$ is chosen such that $\Xv$ satisfies the constraint with probability one. 

We let $\Csf=\{\Xv^{(1)},\dotsc,\Xv^{(M')}\}$ be a random codebook of size $M'$ with each 
codeword independently generated according to $P_{\Xv}$.  The symbol $\Cc$ is used to 
denote a fixed expurgated codebook containing $M \le M'$ codewords.  

We begin with the following straightforward generalization of \cite[Lemma, p. 151]{Gallager}.

\begin{lem} \label{lem:EXP_GalLemma}
    Fix a function $f\,:\,[0,1] \to \RR$ and a codeword distribution $P_{\Xv}$ such
    that $f(p_{e,m}(\Csf))$ is non-negative for all $m$ with probability one.  
    For any $\eta>0$, there exists a codebook $\Cc$ of size $M$ such that
    $M'\frac{\eta}{1+\eta} < M \le M'$ and
    \begin{equation}
        f\big(p_{e,m}(\Cc)\big) \le (1+\eta)\EE\big[f(p_{e,m}(\Csf))\big] \label{eq:EXP_GalLemma}
    \end{equation}
    for $m=1,\dotsc,M$.
\end{lem}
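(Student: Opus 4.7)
The plan is to adapt Gallager's expurgation argument \cite[Sec.~5.7]{Gallager}, replacing his factor of $2$ by the sharper factor $(1+\eta)$. The proof would have two stages: an averaging/Markov counting step that produces many codewords with small $f(p_{e,m})$, and a monotonicity observation that transfers the bound from the full ensemble $\Csf$ to the expurgated codebook $\Cc$.

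First, by the symmetry of the i.i.d.\ ensemble, $B \defeq \EE[f(p_{e,m}(\Csf))]$ is independent of $m$, so by linearity of expectation
\begin{equation}
\EE\Bigl[\sum_{m=1}^{M'} f(p_{e,m}(\Csf))\Bigr] = M' B.
\end{equation}
I would then fix a deterministic realization of $\Csf$ for which the sum does not exceed its mean, call an index $m$ ``bad'' if $f(p_{e,m}(\Csf)) > (1+\eta) B$, and let $K$ denote the number of bad indices in this realization. Since each bad term contributes strictly more than $(1+\eta) B$ to the sum, if $K>0$ then $K (1+\eta) B < M' B$, giving $K < M'/(1+\eta)$; the case $K=0$ is trivial because $\eta>0$. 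Consequently, the number of ``good'' codewords $M \defeq M' - K$ satisfies the stated range $M' \eta/(1+\eta) < M \le M'$.

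Forming $\Cc$ from only the $M$ good codewords, I would finish by observing that removing codewords can only shrink the error event, since the $\argmax$ in \eqref{eq:SU_DecodingRule} is then taken over a smaller set of alternatives; hence $p_{e,m}(\Cc) \le p_{e,m}(\Csf)$ for each retained $m$. Under the (tacit) assumption that $f$ is non-decreasing, which holds both for Gallager's power-function choice $f(p)=p^{1/\rho}$ with $\rho \ge 1$ and for the logarithmic variant introduced in this section, this monotonicity yields $f(p_{e,m}(\Cc)) \le f(p_{e,m}(\Csf)) \le (1+\eta) B$, which is precisely \eqref{eq:EXP_GalLemma}. The counting portion of the argument is essentially textbook; the only subtlety worth flagging is this implicit monotonicity requirement on $f$, which is what allows the expurgation (which monotonically reduces the error probability of each retained message) to translate into the target bound on $f(p_{e,m}(\Cc))$.
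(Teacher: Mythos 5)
Your proposal is correct and follows essentially the same route as the paper, which simply invokes Gallager's expurgation lemma: a first-moment/Markov counting step that locates a realization in which at most a fraction $\frac{1}{1+\eta}$ of the messages are ``bad,'' followed by the observation that discarding codewords cannot increase the error probability of the retained messages. The only cosmetic difference is in the bookkeeping: Gallager (and hence the paper) applies Markov's inequality to $f(p_{e,m}(\Csf))$ message-by-message over the ensemble and then bounds the expected number of bad messages, whereas you average the sum $\sum_{m} f(p_{e,m}(\Csf))$, fix a below-average realization, and count the bad indices deterministically inside it; both versions yield the strict bound $M > M'\frac{\eta}{1+\eta}$, yours via the strict inequality defining ``bad'' (with the degenerate case $\EE[f(p_{e,m}(\Csf))]=0$ handled separately, since there the division step is vacuous and one gets $K=0$ directly). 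Your closing caveat is apt: the transfer from $p_{e,m}(\Csf)$ to $p_{e,m}(\Cc)$ does use that $f$ is non-decreasing, an assumption not written into the lemma (and glossed over by the remark that it is ``valid for any $f$''), but it is satisfied by both functions the paper actually uses, $f(p)=p^{1/\rho}$ and $f(p)=E(n)+\log p$, while the stated non-negativity hypothesis plays exactly the role you give it, namely validating the Markov/averaging step.
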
  
\begin{proof}
    The proof is identical to \cite[Lemma, p. 151]{Gallager}, with the assumption
    of $f(p_{e,m}(\Csf))$ being non-negative ensuring the validity of Markov's inequality.
\end{proof}

While Lemma \ref{lem:EXP_GalLemma} is valid for any function $f(\cdot)$, it is
primarily of interest when $f(\cdot)$ is monotonically increasing, so that 
\eqref{eq:EXP_GalLemma} can be inverted in order to obtain an upper bound on $p_{e,m}(\Cc)$.
Gallager \cite{Gallager} presented the lemma with the choices
$\eta=1$ and $f(\cdot)=(\cdot)^{1/\rho}$, where $\rho>0$, thus proving the existence of   
a codebook $\Cc$ of size $M$ such that
\begin{equation}
    p_{e}(\Cc) \le \Big(2\EE\big[p_{e,m}(\Csf)^{1/\rho}\big]\Big)^{\rho}, \label{eq:EXP_GalBound}
\end{equation}
where $\Csf$ contains $M'=2M-1$ codewords.
In the following theorem, we provide non-asymptotic bounds on the error 
probability which follow in a straightforward fashion from \eqref{eq:EXP_GalBound}.   
The proof alters Gallager's arguments for the purpose of better characterizing 
the non-asymptotic performance, and also for dealing with suboptimal decoding rules.    

\begin{thm} \label{thm:EXP_Finite}
    For any pair $(n,M)$, codeword distribution $\PXv$, and parameters $\rho \ge 1$ and $s\ge0$, there exists 
    a codebook $\Cc_{n}$ with $M$ codewords of length $n$ whose maximal error probability satisfies
        \begin{equation}
        p_{e}(\Cc_{n}) \le \rcux_{\rho}(n,M) \le \rcux_{\rho,s}(n,M) \label{eq:EXP_Finite}
        \end{equation}
    where
        \begin{align}
        \rcux_{\rho}(n,M) & \defeq \bigg(4(M-1)\EE\bigg[\PP\Big[q^{n}(\Xvbar,\Yv)\ge q^{n}(\Xv,\Yv)\,\Big|\,\Xv,\Xvbar\Big]^{1/\rho}\bigg]\bigg)^{\rho} \label{eq:EXP_RCX}\\
        \rcux_{\rho,s}(n,M) & \defeq \Bigg(4(M-1)\EE\Bigg[\EE\Bigg[\bigg(\frac{q^{n}(\Xvbar,\Yv)}{q^{n}(\Xv,\Yv)}\bigg)^{s}\,\Bigg|\,\Xv,\Xvbar\Bigg]^{1/\rho}\Bigg]\Bigg)^{\rho}.\label{eq:EXP_RCX_s}
        \end{align}
\end{thm}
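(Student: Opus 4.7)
The plan is to combine Lemma~\ref{lem:EXP_GalLemma} with a union bound and then a Markov-type (Chernoff) inequality. The proof is essentially Gallager's argument, but book-kept so that the non-asymptotic constants are explicit and so that nothing relies on $q(\cdot,\cdot)$ being a transition probability.

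\emph{Step 1: Invoke the expurgation lemma.} I would apply Lemma~\ref{lem:EXP_GalLemma} with $f(t)=t^{1/\rho}$, $\eta=1$, and initial ensemble size $M'=2M-1$. Since $\rho\ge 1$, $p_{e,m}(\Csf)^{1/\rho}$ is non-negative, and $M'/2<M\le M'$, the lemma yields a deterministic codebook $\Cc$ of size $M$ with
\[
p_{e,m}(\Cc)^{1/\rho}\le 2\,\EE\!\big[p_{e,m}(\Csf)^{1/\rho}\big]\qquad\text{for all }m=1,\dotsc,M,
\]
hence $p_{e}(\Cc)\le\big(2\,\EE[p_{e,m}(\Csf)^{1/\rho}]\big)^{\rho}$ upon maximising over $m$ on the left.

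\emph{Step 2: Union bound and subadditivity.} Conditionally on $\Csf$, the decoding-error event for message $m$ (with ties counted as errors) is contained in $\bigcup_{j\ne m}\{q^{n}(\Xv^{(j)},\Yv)\ge q^{n}(\Xv^{(m)},\Yv)\}$, so the union bound gives
\[
p_{e,m}(\Csf)\le\sum_{j\ne m}\PP\!\big[q^{n}(\Xv^{(j)},\Yv)\ge q^{n}(\Xv^{(m)},\Yv)\,\big|\,\Csf\big].
\]
Because $\rho\ge 1$, the map $t\mapsto t^{1/\rho}$ is subadditive on $[0,\infty)$, which yields the same inequality after raising every term to the $1/\rho$. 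Taking expectations and using the fact that for each $j\ne m$ the pair $(\Xv^{(m)},\Xv^{(j)})$ has the same law as $(\Xv,\Xvbar)$, each of the $M'-1=2(M-1)$ summands contributes the same quantity, so
\[
\EE\!\big[p_{e,m}(\Csf)^{1/\rho}\big]\le 2(M-1)\,\EE\!\Big[\PP[q^{n}(\Xvbar,\Yv)\ge q^{n}(\Xv,\Yv)\mid\Xv,\Xvbar]^{1/\rho}\Big].
\]
Substituting into Step~1 produces the bound $\rcux_{\rho}(n,M)$ in \eqref{eq:EXP_RCX}.

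\emph{Step 3: Chernoff trick.} For any $s\ge 0$ and non-negative $a,b$, the inequality $\openone\{b\ge a\}\le(b/a)^{s}$ holds (with the usual convention handling the degenerate case $a=0$, which is vacuous). Conditioning on $(\Xv,\Xvbar)$ and taking the expectation over $\Yv$ gives
\[
\PP\!\big[q^{n}(\Xvbar,\Yv)\ge q^{n}(\Xv,\Yv)\,\big|\,\Xv,\Xvbar\big]\le\EE\!\bigg[\Big(\frac{q^{n}(\Xvbar,\Yv)}{q^{n}(\Xv,\Yv)}\Big)^{\!s}\,\bigg|\,\Xv,\Xvbar\bigg].
\]
Since $t\mapsto t^{1/\rho}$ is monotone on $[0,\infty)$, applying it to both sides and then taking the outer expectation and multiplying by $\big(4(M-1)\big)^{\rho}$ proves $\rcux_{\rho}(n,M)\le\rcux_{\rho,s}(n,M)$.

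There is no real obstacle here: the only items to be careful about are that $\rho\ge 1$ is exactly what is required for $t\mapsto t^{1/\rho}$ to be subadditive (so that the union bound survives the exponent), and that the event $\{q^{n}(\Xv,\Yv)=0\}$ in Step~3 is handled by convention rather than by any genuine argument.
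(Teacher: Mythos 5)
Your proposal is correct and follows essentially the same route as the paper: Lemma~\ref{lem:EXP_GalLemma} with $f(t)=t^{1/\rho}$, $\eta=1$, $M'=2M-1$, then the union bound combined with the subadditivity inequality \eqref{eq:EXP_GalEq} to get $\rcux_{\rho}$, and finally Markov's inequality on the inner probability to get $\rcux_{\rho,s}$. The only difference is presentational (you make the exchangeability of the $2(M-1)$ pairs and the handling of $q^{n}(\Xv,\Yv)=0$ explicit), so there is nothing substantive to add.
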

\begin{IEEEproof}
    We obtain \eqref{eq:EXP_RCX} from \eqref{eq:EXP_GalBound} by weakening the expectation as follows:
    \begin{align}
    \EE\big[p_{e,m}(\Csf)^{1/\rho}\big] 
        &\le \EE\bigg[\bigg(\sum_{\mbar\ne m}\PP\Big[q^{n}(\Xv^{(\mbar)},\Yv)\ge q^{n}(\Xv^{(m)},\Yv)\,\Big|\,\Xv^{(m)},\Xv^{(\mbar)}\Big]\bigg)^{1/\rho}\bigg] \label{eq:EXP_FiniteStep1} \\
        &\le \EE\bigg[2(M-1)\PP\Big[q^{n}(\Xvbar,\Yv)\ge q^{n}(\Xv,\Yv)\,\Big|\,\Xv,\Xvbar\Big]^{1/\rho}\bigg], \label{eq:EXP_FiniteStep2}
    \end{align}
    where \eqref{eq:EXP_FiniteStep1} follows from the union bound, and \eqref{eq:EXP_FiniteStep2} 
    follows using $M'=2M-1$ along with the inequality
    \begin{equation}
        \Big(\sum_{i}a_{i}\Big)^{1/\rho} \le \sum_{i}a_{i}^{1/\rho}, \label{eq:EXP_GalEq}
    \end{equation}
    which holds for any $\rho\ge1$. We obtain \eqref{eq:EXP_RCX_s} by applying 
    Markov's inequality to the inner probability in \eqref{eq:EXP_RCX}.
\end{IEEEproof}

Following the terminology of Polyanskiy \emph{et al.} \cite{Finite}, we refer to the bounds in 
\eqref{eq:EXP_RCX}--\eqref{eq:EXP_RCX_s} as \emph{expurgated random-coding union} (RCUX) bounds.
These bounds are computable for sufficiently symmetric setups, and are thus of independent interest
for characterizing the finite-length performance \cite{Finite}.  It should be noted that
both $\rcux_{\rho}$ and $\rcux_{\rho,s}$ extend immediately to channels with memory and
general decoding metrics (not necessarily single-letter).

The bound $\rcux_{\rho,s}$ was presented by Gallager \cite{Gallager} under ML decoding  
with $s=\frac{1}{2}$. For the random-coding ensembles we consider, it will be seen that this 
choice of $s$ is optimal for ML decoding, at least in terms of the error exponent.
However, for mismatched decoding it is important to allow for an arbitrary choice of $s\ge0$.  

The following theorem gives an asymptotic bound which follows by using Lemma 
\ref{lem:EXP_GalLemma} with a choice of $f(\cdot)$ which differs from that of Gallager.

\begin{thm} \label{thm:EXP_LogBound}
    Consider a sequence of codebooks $\Csf_{n}$ containing $M_{n}'=e^{nR}$ 
    codewords which are generated independently according to $P_{\Xv}$. Suppose that
    there exists a non-negative sequence $E(n)$ growing subexponentially in $n$ (i.e. $E(n)\doteq1$) such that
    \begin{equation}
    \PP\big[q^n(\xvbar,\Yv) \ge q^n(\xv,\Yv) \,\big|\, \Xv=\xv\big] \ge e^{-E(n)} \label{eq:EXP_TechCond}
    \end{equation}
    for all $\xv$ and $\xvbar$ on the support of $P_{\Xv}$.  Then there
    exists a sequence of codebooks $\Cc_{n}$ with $M_{n}$ codewords such that
    \begin{equation}
    \lim_{n\to\infty} \frac{1}{n}\log M_{n} = R \label{eq:EXP_PropR}
    \end{equation}
    and
    \begin{align}
    p_{e}(\Cc_{n}) &\,\,\dot{\le}\, \exp\Big(\EE[\log p_{e,m}(\Csf_{n})]\Big) \label{eq:EXP_PropPe} \\
                   &\le \exp\Big(\rho\,\EE\Big[\log \EE\big[p_{e,m}(\Csf_{n})^{1/\rho}\,\big|\,\Xv^{(m)}\big]\Big]\Big), \label{eq:EXP_JensenPe}
    \end{align} 
    where \eqref{eq:EXP_JensenPe} holds for any $\rho>0$. 
\end{thm}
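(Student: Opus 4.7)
The plan is to invoke Lemma \ref{lem:EXP_GalLemma} with the choice $f(x)=\log x+E(n)$, rather than Gallager's power function $x^{1/\rho}$. The point of this choice is twofold: $f$ is monotonically increasing, so the conclusion of the lemma can be inverted into an upper bound on $p_{e,m}(\Cc_n)$ of the logarithmic form in \eqref{eq:EXP_PropPe}; and the shift by $E(n)$ is exactly what is needed for $f(p_{e,m}(\Csf_n))$ to be non-negative almost surely. The latter is where hypothesis \eqref{eq:EXP_TechCond} enters.

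To verify non-negativity, I would observe that for any single fixed $\mbar\ne m$, the error probability $p_{e,m}(\Csf_n)$ of the random codebook is lower bounded, given $(\Xv^{(m)},\Xv^{(\mbar)})$, by the probability that $\Xv^{(\mbar)}$ beats $\Xv^{(m)}$ in the decoding metric. By \eqref{eq:EXP_TechCond} this is at least $e^{-E(n)}$ for every admissible pair, so $p_{e,m}(\Csf_n)\ge e^{-E(n)}$ almost surely. Lemma \ref{lem:EXP_GalLemma} with $M_n'=e^{nR}$ then produces, for any $\eta>0$, a codebook $\Cc_n$ of size $M_n>M_n'\eta/(1+\eta)$ satisfying
\begin{equation*}
\log p_{e,m}(\Cc_n)\le(1+\eta)\,\EE[\log p_{e,m}(\Csf_n)]+\eta E(n)
\end{equation*}
for every $m$; since the right-hand side does not depend on $m$, the same bound holds for $\log p_e(\Cc_n)$.

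To deduce \eqref{eq:EXP_PropR}--\eqref{eq:EXP_PropPe}, I would take $\eta=\eta_n=1/(1+E(n))$. Since $p_{e,m}(\Csf_n)\in[e^{-E(n)},1]$ forces $\EE[\log p_{e,m}(\Csf_n)]\in[-E(n),0]$, the gap between the bound above and the target exponent in \eqref{eq:EXP_PropPe} is $\eta_n\EE[\log p_{e,m}(\Csf_n)]+\eta_n E(n)\in[0,\eta_n E(n)]\subseteq[0,1]$, which is trivially $o(n)$; and the rate is preserved because $\frac{1}{n}\log\bigl(\eta_n/(1+\eta_n)\bigr)=-\frac{1}{n}\log(2+E(n))\to 0$ by the subexponentiality of $E(n)$. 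Finally, \eqref{eq:EXP_JensenPe} is a direct consequence of a conditional Jensen inequality: concavity of $\log$ gives
\begin{equation*}
\EE\bigl[\log p_{e,m}(\Csf_n)^{1/\rho}\bigm|\Xv^{(m)}\bigr]\le\log\EE\bigl[p_{e,m}(\Csf_n)^{1/\rho}\bigm|\Xv^{(m)}\bigr]
\end{equation*}
for any $\rho>0$, and multiplying by $\rho$, taking outer expectation, and exponentiating produces the claimed comparison with the right-hand side of \eqref{eq:EXP_PropPe}.

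The main obstacle is balancing the choice of $\eta_n$: it must vanish fast enough to damp both the multiplicative perturbation $(1+\eta_n)$ of $\EE[\log p_{e,m}(\Csf_n)]$ (which itself may be as negative as $-E(n)$) and the additive term $\eta_n E(n)$, yet must decay only subexponentially in order to preserve the rate. The hypothesis $E(n)\doteq 1$ is precisely the sharp threshold that makes such a choice possible; the argument would fail for genuinely exponential $E(n)$, reflecting the fact that codewords which are too indistinguishable cannot be rescued by expurgation.
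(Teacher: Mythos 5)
Your proposal is correct and follows essentially the same route as the paper's proof: the choice $f(\cdot)=E(n)+\log(\cdot)$ in Lemma \ref{lem:EXP_GalLemma}, non-negativity via \eqref{eq:EXP_TechCond}, a vanishing $\eta_n$ of order $1/E(n)$ to absorb the additive $\eta_n E(n)$ term while preserving the rate, and the conditional Jensen step for \eqref{eq:EXP_JensenPe}. The only (immaterial) difference is your choice $\eta_n=1/(1+E(n))$ in place of the paper's $\eta_n=1/E(n)$.
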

\begin{proof}
    The error probability associated with the transmitted codeword $\xv$ is lower bounded
    by the left-hand side of \eqref{eq:EXP_TechCond}, where $\xvbar$ is any incorrect codeword.
    The assumption in \eqref{eq:EXP_TechCond} thus 
    implies that the function $f(p_{e,m}(\Csf))=E(n)+\log p_{e,m}(\Csf)$
    is non-negative for $m=1,\dotsc,M$.  Applying Lemma \ref{lem:EXP_GalLemma}, we obtain
    that for each $n$ and any $\eta_n > 0$ there exists a codebook $\Cc_{n}$ of size 
    $M_{n}=e^{nR}\frac{\eta_{n}}{1+\eta_{n}}$ such that
    \begin{equation}
    E(n) + \log p_{e}(\Cc_{n}) \le (1+\eta_{n})\big(E(n) + \EE[\log p_{e,m}(\Csf_{n})]\big).
    \end{equation}
    Since $\log\alpha\le0$ for $\alpha\in(0,1]$, it follows that
    \begin{equation}
    \log p_{e}(\Cc_{n}) \le \eta_{n}E(n) + \EE[\log p_{e,m}(\Csf_{n})].
    \end{equation}
    Choosing $\eta_{n}=\frac{1}{E(n)}$, we obtain \eqref{eq:EXP_PropPe},
    and the assumption that $E(n)\doteq1$ implies \eqref{eq:EXP_PropR}. 
    We obtain \eqref{eq:EXP_JensenPe} by writing $\log\alpha=\rho\log(\alpha^{1/\rho})$, 
    writing $\EE[\,\cdot\,]=\EE[\EE[\,\cdot\, | \Xv^{(m)}]]$, and applying Jensen's inequality.
\end{proof}

The assumption of Theorem \ref{thm:EXP_LogBound} is mild, allowing
ensembles for which the error probability associated with any two permissible codewords
decays nearly \emph{double}-exponentially fast.  However, it is a multi-letter 
condition, and may therefore be difficult to verify directly.  A single-letter 
sufficient condition depending only on the channel, metric and cost constraint \eqref{eq:EXP_SystemCost}
is that 
    \begin{equation}
        \lim_{\gamma\to\infty} \frac{1}{\gamma}\log\log\frac{1}{\pi(\gamma)} = 0, \label{eq:EXP_SingleLetterCond}
    \end{equation}  
where
    \begin{align}
    \pi(\gamma)  &\defeq \min_{(x,\xbar)\,:\,c(x)\le\gamma,c(\xbar)\le\gamma} \PP[Y_{x}\in\Ec(x,\xbar)] \label{eq:EXP_DefPi}\\
    \Ec(x,\xbar) &\defeq \big\{y \,:\, q(\xbar,y) \ge q(x,y)\big\},
    \end{align}
where in \eqref{eq:EXP_DefPi} we define $Y_{x} \sim W(\cdot|x)$.
Under this assumption, the probability in \eqref{eq:EXP_TechCond} is 
lower bounded by the probability that $Y_{i} \in \Ec(X_i,\Xbar_i)$ for $i=1,\dotsc,n$,
which in turn is lower bounded by $\pi(n\Gamma)^n$.  Since $n$ times a subexponential
sequence is also subexponential, the condition of Theorem \ref{thm:EXP_LogBound} follows from \eqref{eq:EXP_SingleLetterCond}.
Further discussion is given in Appendix \ref{sec:EXP_TECH_COND}, along with some examples.

From \eqref{eq:EXP_PropPe}, we can see the advantage of the expurgated
ensemble over the non-expurgated one.  The former yields the exponent corresponding to
$-\frac{1}{n}\EE[\log p_{e,m}(\Csf_{n})]$, which is higher in general than that of 
$-\frac{1}{n}\log \EE[p_{e,m}(\Csf_{n})]$ due to Jensen's inequality. 

Using L'H\^{o}pital's rule, it is easily shown that 
$\lim_{\rho\to\infty} \rho\log\EE[Z^{1/\rho}]=\EE[\log Z]$ for any random
variable $Z$.  It follows that the inequality in \eqref{eq:EXP_JensenPe} is
actually an equality in the limit as $\rho\to\infty$.  At first glance, it 
may appear that a similar argument can be used to show that \eqref{eq:EXP_GalBound}
yields the same exponent as \eqref{eq:EXP_PropPe}.  However, there is an issue
with the order of the limits of $n$ and $\rho$.  If we take $\rho\to\infty$ in 
\eqref{eq:EXP_GalBound}, the factor $2^{\rho}$ makes the right-hand
side equal $\infty$.  Letting $\rho$ grow slowly with $n$ is also 
potentially problematic, since the random variable $p_{e,m}(\Csf)$ varies with $n$.

The bounds in Theorem \ref{thm:EXP_LogBound} will prove useful for deriving 
improved exponents compared to Theorem \ref{thm:EXP_Finite}
for some codeword distributions, and for extending the type class enumeration
method beyond the finite-alphabet setting. 

%-------------------------------%
%---- Overview of Exponents ----%
%-------------------------------%
\section{Expurgated Ensembles and Exponents} \label{sec:EXP_OVERVIEW}

In this section, we present an overview of various expurgated exponents and the connections between them.   
Our focus here is primarily on existing exponents or simple variations thereof, though we
also provide a dual form of the exponent in \cite{Csiszar1} which is new to the best of 
our knowledge.  Further exponents which appear for the first time in this paper
are given in Theorems \ref{thm:EXP_Dual} and \ref{thm:EXP_General} in Section \ref{sec:EXP_DISCRETE}.

Throughout the paper, we consider three expurgated ensembles, each of
which depends on an input distribution $Q$:
\begin{enumerate}
  \item The i.i.d. ensemble is characterized by
        \begin{equation}
        P_{\Xv}(\xv) = \prod_{i=1}^{n}Q(x_{i}). \label{eq:EXP_Px_IID}
        \end{equation}
        This codeword distribution is valid for both discrete and continuous alphabets, but it is
        not suitable for channels with cost constraints, since in all non-trivial cases there is
        a non-zero probability of violating the constraint.
  \item The constant-composition ensemble is characterized by
        \begin{equation}
        P_{\Xv}(\xv) = \frac{1}{|T^{n}(Q_{n})|} \openone\big\{\xv\in T^{n}(Q_{n})\big\}, \label{eq:EXP_Px_CC}
        \end{equation}
        where $Q_{n}$ is a type with the same support as $Q$ such that $|Q_n(x)-Q(x)|=O\big(\frac{1}{n}\big)$  
        for all $x$.  This codeword distribution relies on the input being finite.  It is directly applicable 
        to channels with cost constraints, since each codeword satisfies \eqref{eq:EXP_SystemCost}
        provided that $\EE_{Q_n}[c(X)]\le\Gamma$, which in turn can be achieved
        provided that $\EE_{Q}[c(X)]\le\Gamma$.
  \item The cost-constrained ensemble is characterized by
            \begin{equation}
            \PXv(\xv)=\frac{1}{\mu_{n}}\prod_{i=1}^{n}Q(x_{i})\openone\big\{\xv\in\Dc_{n}\big\}, \label{eq:EXP_Px_Multi}
            \end{equation}
        where 
            \begin{equation}
            \Dc_{n} \defeq \bigg\{\xv\,:\,\frac{1}{n}\sum_{i=1}^{n}c(x_{i}) \le \Gamma, \bigg|\frac{1}{n}\sum_{i=1}^{n}a_{l}(x_{i})-\phi_{l}\bigg|\le\frac{\delta}{n},\, l=1,\dotsc,L\bigg\}, \label{eq:EXP_SetDn} \\
            \end{equation}
        and where $\delta$ is a positive constant (independent of $n$), 
        $\{a_{l}(\cdot)\}_{l=1}^{L}$ are functions with means $\phi_{l}\defeq\EE_{Q}[a_{l}(X)]$,
        and $\mu_n$ is a normalizing constant.
        This codeword distribution is valid for both discrete and continuous alphabets, 
        and ensures that each codeword satisfies \eqref{eq:EXP_SystemCost}.
        Both $c(\cdot)$ and $\{a_{l}(\cdot)\}$ can be thought of as cost functions, and we
        will distinguish between the two by referring to them as the \emph{system cost}
        and \emph{auxiliary costs} respectively.
        In contrast to the system cost, the auxiliary costs are functions 
        which can be optimized.  That is, while the system cost is given as part of the
        problem statement, the auxiliary costs are introduced to improve the performance
        of the random-coding ensemble itself \cite{MMRevisited,PaperITA,JournalSU}.
\end{enumerate}

We proceed by stating and comparing the exponents obtained by the above ensembles;
derivations will be given in Section \ref{sec:EXP_DISCRETE}.  Except where 
stated otherwise, we assume that the channel is a DMC with unconstrained inputs.

A straightforward generalization of Gallager's i.i.d. exponent to the setting 
of mismatched decoding is as follows:
    \begin{equation}
    \Eexiid(Q,R) \defeq \sup_{\rho\ge1}\Exiid(Q,\rho)-\rho R, \label{eq:EXP_Eex_IID}
    \end{equation}
where
    \begin{equation}
    \Exiid(Q,\rho) \defeq \sup_{s\ge0}-\rho\log\sum_{x,\xbar}Q(x)Q(\xbar)\Bigg(\sum_{y}W(y|x)\bigg(\frac{q(\xbar,y)}{q(x,y)}\bigg)^{s}\Bigg)^{1/\rho}. \label{eq:EXP_Ex_IID}
    \end{equation}
The objective in \eqref{eq:EXP_Ex_IID} is concave in $s$, and under ML decoding 
(i.e. $q(x,y)=W(y|x)$), it is also unchanged when $s$ is replaced by $1-s$.  
From these properties, it follows that $s=\frac{1}{2}$ is optimal for ML decoding, and
thus the exponent is the same as that of Gallager \cite{Gallager}.

Csisz\'{a}r and K\"{o}rner \cite{Csiszar1} make use of the constant-composition
codeword distribution in \eqref{eq:EXP_Px_CC}.  The analysis is significantly different 
to that of Gallager, and yields an exponent in a different form, namely
    \begin{equation}
    \Eexcc(Q,R) \defeq \min_{\substack{P_{X\Xbar Y}\in\SetTcc(Q)\\I_P(X;\Xbar) \le R}}
    D(P_{X\Xbar Y}\|Q \times Q \times W)-R,\label{eq:EXP_PrimalAlt}
    \end{equation}
where the notation $Q \times Q \times W$ denotes the distribution $Q(x)Q(\xbar)W(y|x)$, and
    \begin{equation}
    \SetTcc(Q) \defeq \Big\{ P_{X \Xbar Y}\in\Pc(\Xc\times\Xc\times\Yc)\,:\, P_{X}=Q,P_{\Xbar}=Q,\EE_{P}[\log q(\Xbar,Y)]\ge\EE_{P}[\log q(X,Y)]\Big\}. \label{eq:EXP_SetT}
    \end{equation} 
The objective in \eqref{eq:EXP_PrimalAlt} follows from \cite[Eq. (32)]{Csiszar1}
and the identity
    \begin{equation}
    D(P_{X\Xbar Y}\|Q \times Q \times W) = D(P_{X\Xbar Y}\|P_{X\Xbar} \times W) + I_{P}(X;\Xbar), \label{eq:EXP_ChainRule}
    \end{equation}
which holds for any $P_{X\Xbar Y}$ such that $P_{X}=P_{\Xbar}=Q$.
Defining $P_{Y}(y) \defeq \sum_{x}Q(x)W(y|x)$, we observe that $\Eexcc$ 
is positive for sufficiently small $R$ provided that $\EE_{Q\times W}[\log q(X,Y)] > \EE_{Q\times P_Y}[\log q(X,Y)]$.
It was shown in \cite{Csiszar2} that the mismatched capacity is in fact zero
unless this condition holds for some $Q$.

The following theorem provides the means for comparing the above two exponents, as well
as that of \cite{ExpurgCKM}.

\begin{thm} \label{thm:EXP_Duality}
    For any input distribution $Q$ and rate $R$, we have
        \begin{align}
        \Eexcc(Q,R) & = \sup_{s\ge0} \min_{\substack{P_{X\Xbar} \,:\, P_{X}=Q, P_{\Xbar}=Q,\\ I_{P}(X;\Xbar) \le R}}
        \EE_{P}[d_{s}(X,\Xbar)] + I_{P}(X;\Xbar)-R \label{eq:EXP_Duality1}\\
        & = \sup_{\rho\ge1}\Excc(Q,\rho)-\rho R, \label{eq:EXP_Eex_CC}
        \end{align}
        where
        \begin{align}
        d_{s}(x,\xbar) &\defeq -\log\sum_{y}W(y|x)\bigg(\frac{q(\xbar,y)}{q(x,y)}\bigg)^s \label{eq:EXP_ChernoffDist} \\
        \Excc(Q,\rho)  &\defeq \sup_{s\ge0,a(\cdot)}-\rho\sum_{x}Q(x)\log\sum_{\xbar}Q(\xbar)\Bigg(\sum_{y}W(y|x)\bigg(\frac{q(\xbar,y)}{q(x,y)}\bigg)^{s}\frac{e^{a(\xbar)}}{e^{a(x)}}\Bigg)^{1/\rho}. \label{eq:EXP_Ex_CC}
        \end{align}
\end{thm}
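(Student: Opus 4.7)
The plan is to derive the two stated equalities in sequence: first \eqref{eq:EXP_Duality1} by eliminating the channel variable $Y$ from the primal \eqref{eq:EXP_PrimalAlt}, then \eqref{eq:EXP_Eex_CC} by Lagrangian dualization of the remaining constraints on $P_{X\Xbar}$.

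For the first equality, I would apply the chain rule \eqref{eq:EXP_ChainRule} to split the objective as $D(P_{Y|X\Xbar}\|W\,|\,P_{X\Xbar}) + I_P(X;\Xbar)$, so the primal becomes a nested minimization. For fixed $P_{X\Xbar}$ with $P_X = P_{\Xbar} = Q$, the inner minimization of $D(P_{Y|X\Xbar}\|W\,|\,P_{X\Xbar})$ subject to the single linear inequality $\EE_P[\log q(\Xbar,Y)/q(X,Y)] \ge 0$ is a standard convex problem. Introducing a Lagrange multiplier $s \ge 0$, the Lagrangian decouples across $(x,\xbar)$, and the Gibbs-type minimizer $P^*_{Y|X\Xbar}(y|x,\xbar) \propto W(y|x)(q(\xbar,y)/q(x,y))^s$ produces per-letter value $d_s(x,\xbar)$. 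Strong duality (Slater's condition holds whenever $\EE_P[\log q(\Xbar,Y)/q(X,Y)] > 0$ is achievable for some $P_{Y|X\Xbar}$) then gives $\min_{P_{Y|X\Xbar}} D(\cdot) = \sup_{s \ge 0}\EE_{P_{X\Xbar}}[d_s(X,\Xbar)]$. This leaves a $\min_{P_{X\Xbar}}\sup_s$ problem; because the integrand is linear in $s$, convex in $P_{X\Xbar}$ (recall $I_P(X;\Xbar) = 2H(Q) - H(P_{X\Xbar})$ is convex when both marginals are fixed to $Q$), and the feasible set of $P_{X\Xbar}$ is compact and convex, a minimax theorem (e.g.\ Sion's) permits exchanging min and sup, yielding \eqref{eq:EXP_Duality1}.

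For the second equality, I would dualize the rate constraint $I_P(X;\Xbar) \le R$ for each fixed $s$ via a multiplier $\lambda \ge 0$, setting $\rho = 1+\lambda \ge 1$; strong duality holds because $P = Q \times Q$ gives $I_P = 0 < R$, a Slater point. Swapping the commuting suprema over $s$ and $\rho$ reduces the task to identifying $\sup_{s\ge 0}\min_{P_{X\Xbar}:\,P_X=P_{\Xbar}=Q}\{\EE_P[d_s(X,\Xbar)] + \rho I_P(X;\Xbar)\}$ with $\Excc(Q,\rho)$. To that end, parameterize $P_{X\Xbar}(x,\xbar) = Q(x)V(\xbar|x)$ so that $P_X = Q$ is automatic, and attach a Lagrange multiplier function $a(\xbar)$ to the remaining constraint $\sum_x Q(x)V(\xbar|x) = Q(\xbar)$. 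Minimizing the resulting Lagrangian over $V$ gives the Gibbs tilt $V^*(\xbar|x) \propto Q(\xbar)e^{-(d_s(x,\xbar)+a(\xbar))/\rho}$, and back-substitution produces the dual value $\sup_a\bigl\{-\sum_\xbar Q(\xbar)a(\xbar) - \rho\sum_x Q(x)\log\sum_\xbar Q(\xbar)e^{-(d_s(x,\xbar)+a(\xbar))/\rho}\bigr\}$. Using the symmetry $\sum_x Q(x)a(x) = \sum_\xbar Q(\xbar)a(\xbar)$ to absorb the first term into the logarithm as a factor $e^{-a(x)/\rho}$, and recognizing $e^{-d_s(x,\xbar)/\rho}$ as the inner sum in \eqref{eq:EXP_Ex_CC} raised to the $1/\rho$, reproduces exactly $\Excc(Q,\rho)$ after the innocuous relabeling $a \mapsto -a$.

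The main obstacle is the algebraic bookkeeping in the last step: the asymmetric-looking ratio $e^{(a(\xbar)-a(x))/\rho}$ in \eqref{eq:EXP_Ex_CC} is not manifest from the Lagrangian alone, and emerges only after merging the multiplier term with the outer logarithm through the marginal identity. A secondary concern is rigorously justifying the three separate invocations of strong duality (for the metric constraint in Phase A, and for the rate constraint and the marginal equalities in Phase B), but in each case the primal is convex, the feasible set compact, and an explicit Slater point is available—either $P_{X\Xbar} = Q\times Q$ or a $P_{Y|X\Xbar}$ placing mass on $y$ for which $q(\xbar,y) > q(x,y)$ (and the claim is trivial in the degenerate case where no such $y$ exists for any active $(x,\xbar)$, since then $\Eexcc = 0$).
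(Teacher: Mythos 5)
Your proposal is correct and follows essentially the same route as the paper's proof: for fixed $P_{X\Xbar}$ you dualize the metric constraint to obtain $\EE_{P}[d_{s}(X,\Xbar)]$, interchange the min and sup by a minimax theorem, and then dualize the rate and marginal constraints to arrive at the form \eqref{eq:EXP_Ex_CC}. The remaining differences are cosmetic — the paper enforces both marginals via multipliers $\nu_1,\nu_2$ rather than the $Q(x)V(\xbar|x)$ parameterization, and it secures the reverse inequality by an explicit log-sum-inequality argument instead of Slater's condition (note also that the objective is concave, not linear, in $s$, which is the property the minimax interchange actually requires).
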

\begin{IEEEproof}
    See Appendix \ref{sec:EXP_DUALITY_PROOF}.
\end{IEEEproof}

Equations \eqref{eq:EXP_Eex_CC} and \eqref{eq:EXP_Ex_CC} strongly resemble 
\eqref{eq:EXP_Eex_IID}--\eqref{eq:EXP_Ex_IID}.  Equation \eqref{eq:EXP_Duality1}
is a generalization of the exponent in \cite{ExpurgCKM}, which is 
recovered by setting $q(x,y)=W(y|x)$ and $s=\frac{1}{2}$.  Using the same
argument as the one following \eqref{eq:EXP_Ex_IID}, it can be shown that the latter
choice is optimal.  From the proof of Theorem \ref{thm:EXP_Duality},
this implies the optimality of $s=\frac{1}{2}$
in \eqref{eq:EXP_Ex_CC} under ML decoding, though the optimal choice of $a(\cdot)$
is unclear in general.  To our knowledge, the expression in \eqref{eq:EXP_Ex_CC}
has not appeared previously even for ML decoding.

As noted in \cite{ExpurgOmura,Csiszar1}, we can write \eqref{eq:EXP_Duality1}
in the language of rate-distortion theory \cite[Ch. 10]{Cover}.  Fix $s\ge0$ and define   
    \begin{equation}
    D_s(Q,R) \defeq \min_{\substack{P_{X\Xbar} \,:\, P_{X}=Q, P_{\Xbar}=Q,\\ I_{P}(X;\Xbar) \le R}} \EE_{P}[d_{s}(X,\Xbar)]. \label{eq:EXP_Ds}
    \end{equation}
This can be interpreted as the distortion-rate function of a source $X$ with
a reproduction variable $\Xbar$, subject to the additional constraint that
each reproduction codeword $\xvbar$ has empirical distribution $Q$.  
For any $s\ge0$, the constraint on the mutual information in \eqref{eq:EXP_Duality1} is
active for sufficiently small $R$.  The supremum of all such rates is given by
    \begin{equation}
    R_{s}(Q) \defeq I_{P^{*}}(X;\Xbar) \label{eq:EXP_Rs},
    \end{equation}
where
    \begin{equation}
    P_{X\Xbar}^{*} \defeq \argmin_{P_{X\Xbar} \,:\, P_{X}=Q, P_{\Xbar}=Q} \EE_{P}[d_{s}(X,\Xbar)] + I_{P}(X;\Xbar).
    \end{equation}
For $R\le R_{s}$ we have $I_{P}(X;\Xbar)=R$ under the minimizing $P_{X\Xbar Y}$, whereas
for $R\ge R_{s}$ the minimum in \eqref{eq:EXP_Duality1} decreases linearly with $R$
for any fixed $s$.  It follows that
\begin{equation}
    \Eexcc(Q,R) = \sup_{s\ge0} \Eexcc(Q,R,s),
\end{equation}
where
\begin{equation}
    \Eexcc(Q,R,s) \defeq \begin{cases} D_{s}(Q,R) & R \le R_{s}(Q) \\ D_{s}(Q,R_{s}) + R_{s}(Q) - R & R > R_{s}(Q). \end{cases} \label{eq:EXP_Eex_RD}
\end{equation}

By applying Jensen's inequality to \eqref{eq:EXP_Ex_CC} and setting $a(x)=0$,
we immediately obtain 
\begin{equation}
    \Eexcc(Q,R) \ge \Eexiid(Q,R). \label{eq:EXP_Comparison}
\end{equation}
It was shown in \cite[Ex. 10.18]{CsiszarBook} that \eqref{eq:EXP_Comparison} holds
with equality under ML decoding with an optimized input distribution $Q$.  However,
when either the decoding rule or input distribution is fixed,
the inequality in \eqref{eq:EXP_Comparison} can be strict; an example is given
at the end of this section.  In Section \ref{sec:EXP_DISC_SIMPLE}, we show that 
the stronger exponent $\Eexcc$, in the form given in \eqref{eq:EXP_Eex_CC}, remains achievable
in the case of continuous alphabets, with the summations in \eqref{eq:EXP_Ex_CC}
replaced by integrals.  This is proved using the cost-constrained ensemble in \eqref{eq:EXP_Px_Multi}.

The following proposition generalizes Gallager's expression 
for the expurgated exponent as $R\to0^{+}$ for channels whose zero-error capacity 
\cite{ShannonZero} is zero, and shows that the inequality in \eqref{eq:EXP_Comparison} 
becomes an equality in the limit.

\begin{prop} \label{prop:EXP_RateZero}
Fix any input distribution $Q$ such that all pairs $(x,\xbar)$ with 
$Q(x)Q(\xbar)>0$ share a common output, i.e. $W(y|x)W(y|\xbar)>0$ for some $y$.  Then
    \begin{equation}
        \lim_{R\rightarrow0^{+}}\Eexcc(Q,R)
        = \lim_{R\rightarrow0^{+}} \Eexiid(Q,R)
        = \sup_{s\geq0} \EE[d_{s}(X,\Xbar)],
    \end{equation} 
where $d_{s}$ is defined in \eqref{eq:EXP_ChernoffDist}, and the expectation is taken
with respect to $Q(x)Q(\xbar)$.
\end{prop}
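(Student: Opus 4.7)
The strategy is to sandwich each exponent between matching upper and lower bounds that converge to $\sup_{s\ge 0}\EE[d_s(X,\Xbar)]$ as $R\to 0^+$. The main analytical tool is the identity $\lim_{\rho\to\infty}\rho\log\EE[Z^{1/\rho}]=\EE[\log Z]$ already noted after Theorem \ref{thm:EXP_LogBound}. The common-output hypothesis is used precisely to guarantee that $d_s(x,\xbar)$ is finite for every pair $(x,\xbar)$ in the support of $Q\times Q$ and every $s\ge 0$, which makes every limit and Lagrangian computation below well defined.

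I would first treat the i.i.d. ensemble. Writing $Z(x,\xbar;s)\defeq\sum_y W(y|x)\bigl(q(\xbar,y)/q(x,y)\bigr)^s$, the convexity of $t\mapsto\log\EE[Z^t]$ (vanishing at $t=0$) implies that $-\rho\log\EE[Z^{1/\rho}]$ is non-decreasing in $\rho\ge 1$ and converges upward to $\EE[-\log Z]=\EE[d_s(X,\Xbar)]$ as $\rho\to\infty$; interchanging the monotone limit with the supremum over $s$ then gives $\lim_{\rho\to\infty}\Exiid(Q,\rho)=\sup_{s\ge 0}\EE[d_s(X,\Xbar)]$. To transfer this to $\Eexiid$, I would combine the trivial upper bound $\Eexiid(Q,R)\le\sup_{\rho\ge 1}\Exiid(Q,\rho)$ with the pointwise lower bound $\Eexiid(Q,R)\ge\Exiid(Q,\rho)-\rho R$ (valid for every $\rho\ge 1$); sending $R\to 0^+$ and then taking the sup over $\rho$ pinches the limit to $\sup_{s\ge 0}\EE[d_s(X,\Xbar)]$.

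For the constant-composition exponent, the inequality $\liminf_{R\to 0^+}\Eexcc(Q,R)\ge\sup_{s\ge 0}\EE[d_s(X,\Xbar)]$ follows immediately from \eqref{eq:EXP_Comparison} together with the i.i.d. result. For the matching upper bound I would use the primal form \eqref{eq:EXP_PrimalAlt} and restrict the minimization to product distributions $P_{X\Xbar Y}(x,\xbar,y)=Q(x)Q(\xbar)\widetilde{W}(y|x,\xbar)$. Such $P$ automatically satisfy $I_P(X;\Xbar)=0\le R$, and the objective reduces to $\sum_{x,\xbar}Q(x)Q(\xbar)D\bigl(\widetilde{W}(\cdot|x,\xbar)\,\|\,W(\cdot|x)\bigr)-R$. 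Minimizing this over $\widetilde{W}$ subject to $\EE_{Q\times Q\times\widetilde{W}}[\log q(\Xbar,Y)/q(X,Y)]\ge 0$ is a convex program that I would dualize with multiplier $s\ge 0$: the Lagrangian decouples across pairs $(x,\xbar)$ and the per-pair minimizer is the tilted $\widetilde{W}(y|x,\xbar)\propto W(y|x)(q(\xbar,y)/q(x,y))^s$, of value $d_s(x,\xbar)$. The dual value is therefore $\sup_{s\ge 0}\EE[d_s(X,\Xbar)]$, so $\Eexcc(Q,R)\le\sup_{s\ge 0}\EE[d_s(X,\Xbar)]-R$, and sending $R\to 0^+$ closes the sandwich.

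The main technical point is the justification of strong duality in the final step; I would handle it by a Slater-type argument, choosing $\widetilde{W}(y|x,\xbar)$ concentrated at the $y$ maximizing $q(\xbar,y)/q(x,y)$ over the $W(\cdot|x)$-support to produce a strictly feasible point (the common-output hypothesis and the usual mismatched-decoding assumption $q(x,y)>0$ on the support of $W(\cdot|x)$ ensure such a $y$ exists and that $Z$ is strictly positive). Everything else---the monotonicity in $\rho$, the sup/limit interchange, and the continuity of $d_s$ on the finite support of $Q\times Q$---is routine once finiteness has been secured by the hypothesis.
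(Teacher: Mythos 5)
Your treatment of the i.i.d. exponent is essentially the paper's own argument: the paper likewise exchanges the limit $R\to0^{+}$ with the supremum over $(\rho,s)$ via the same two-sided bound, and then evaluates the $\rho\to\infty$ limit for fixed $s$ (using L'H\^{o}pital's rule where you use monotonicity of $\log\EE[Z^{t}]/t$; this difference is cosmetic).

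For the constant-composition exponent you take a genuinely different route. The paper stays in the dual domain: it writes $\Eexcc(Q,R)=\sup_{\rho\ge1}\Excc(Q,\rho)-\rho R$, exchanges limits exactly as in the i.i.d. case, and lets $\rho\to\infty$ for fixed $(s,a(\cdot))$, whereupon the auxiliary-cost terms cancel (since $\EE_Q[a(X)]=\EE_Q[a(\Xbar)]$) and the limit is again $\EE[d_{s}(X,\Xbar)]$. Your sandwich --- lower bound from \eqref{eq:EXP_Comparison} plus the i.i.d. result, upper bound from the primal form \eqref{eq:EXP_PrimalAlt} restricted to $P_{X\Xbar}=Q\times Q$ --- is equally legitimate and arguably more transparent, but the step you yourself flag as the crux does not hold as stated. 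Strict feasibility of the constraint $\EE[\log q(\Xbar,Y)]\ge\EE[\log q(X,Y)]$ can be impossible under the proposition's hypotheses: if $q(\xbar,y)/q(x,y)$ is constant in $y$ on the relevant supports for every supported pair (e.g.\ ML decoding with $Q$ supported on inputs having identical channel rows, or any metric of product form $q(x,y)=c(x)r(y)$), the constraint functional is identically zero for every $\widetilde{W}$, so no strictly feasible point exists, and your proposed Slater point (mass on the $y$ maximizing the ratio) attains value $0$ rather than a strictly positive value; in general, pairing $(x,\xbar)$ with $(\xbar,x)$ at a common output only shows the supremum of the constraint functional is $\ge0$, and it can equal $0$. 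The inequality you need is nevertheless true, and the cleanest repair is to bypass the duality rederivation altogether: \eqref{eq:EXP_Duality1} of Theorem \ref{thm:EXP_Duality}, already proved in the paper (with strong duality handled there via KKT attainment for a convex problem with affine constraints together with a log-sum-inequality converse), gives $\Eexcc(Q,R)\le\EE_{Q\times Q}[d_{s}(X,\Xbar)]-R$ for every $s\ge0$ simply by evaluating the inner minimum at $P_{X\Xbar}=Q\times Q$, which is feasible since $I_{P}(X;\Xbar)=0\le R$. With that substitution your argument closes. A minor further remark: finiteness of $d_{s}$ at a common output $y_{0}$ also uses $q(\xbar,y_{0})>0$, i.e.\ the standard assumption that the metric is positive on the channel support, which both you and the paper invoke implicitly.
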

\begin{IEEEproof}
    See Appendix \ref{sec:EXP_RZERO_PROOF}.
\end{IEEEproof}

We conclude this section with a numerical example. 
The channel is defined by the entries of the $|\Xc|\times|\Yc|$ matrix 
\begin{equation}
    \left[\begin{array}{ccc}
        1-2\delta_{0} & \delta_{0} & \delta_{0}\\
        \delta_{1} & 1-2\delta_{1} & \delta_{1}\\
        \delta_{2} & \delta_{2} & 1-2\delta_{2}
    \end{array}\right], \label{eq:SU_MatrixW}
\end{equation}
and the decoding metric is defined similarly with a fixed $\delta\in(0,\frac{1}{3})$
in place of each $\delta_i$ ($i=1,2,3$), yielding a minimum Hamming distance rule.
Figure \ref{fig:EXP_Exponents} plots
the exponents in the case that $\delta_{0}=0.01$, $\delta_{1}=0.05$, $\delta_{2}=0.25$
and $Q=\big(\frac{1}{3},\frac{1}{3},\frac{1}{3}\big)$.  We observe that
$\Eexcc > \Eexiid$ at all positive rates, and the gap is particularly significant
in the mismatched case.  However, consistent with Proposition
\ref{prop:EXP_RateZero}, the two coincide in the limit as $R\to0$.

As noted in \cite{Csiszar1}, if $Q$ is optimized, then the two exponents 
coincide for ML decoding. However, the strict inequality 
$\Eexcc > \Eexiid$ remains possible for other decoding rules. 

\begin{figure}
    \begin{centering}
        \includegraphics[width=0.35\paperwidth]{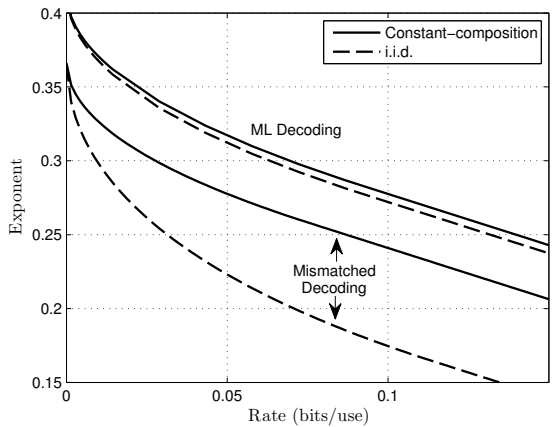}
        \par
    \end{centering}
    
    \vspace*{-3mm}
    \caption{Expurgated exponents for the channel described in \eqref{eq:SU_MatrixW}
             with minimum Hamming distance decoding and ML decoding.  The parameters are 
             $\delta_{0}=0.01$, $\delta_{1}=0.05$, $\delta_{2}=0.25$
             and $Q=\big(\frac{1}{3},\frac{1}{3},\frac{1}{3}\big)$} \label{fig:EXP_Exponents}
    \vspace*{-3mm}
\end{figure} 

%----------------------------%
%---- Discrete Alphabets ----%
%----------------------------%

\section{Derivations of the Expurgated Exponents} \label{sec:EXP_DISCRETE}

In this section, we provide several techniques for deriving the expurgated
exponents, including those introduced in Section \ref{sec:EXP_OVERVIEW} and
a further two in Theorems \ref{thm:EXP_Dual} and \ref{thm:EXP_General} below.
The approaches given here have various advantages which were outlined in Section
\ref{sec:EXP_CONTRIBUTIONS}, and which are discussed further in Section \ref{sec:EXP_DISC_CMP}.
Throughout the section, expectations are written using summations for notational simplicity 
(e.g. $\EE_Q[f(X)]=\sum_{x}Q(x)f(x)$).  However, we will highlight that certain results
apply in the case of continuous alphabets upon replacing the summations by integrals.

\subsection{Derivations Using Theorem \ref{thm:EXP_Finite}}  \label{sec:EXP_DISC_SIMPLE}

\subsubsection{i.i.d. ensemble}

We immediately obtain the exponent in \eqref{eq:EXP_Eex_IID}, as well as its generalization 
to continuous alphabets, by substituting the i.i.d. distribution in \eqref{eq:EXP_Px_IID} 
into $\rcux_{\rho,s}$ in \eqref{eq:EXP_RCX_s}.

\subsubsection{Constant-composition Ensemble}

In the case of finite alphabets, the method of types \cite[Ch. 2]{CsiszarBook} can 
be used to obtain the exact exponents corresponding to $\rcux_{\rho}$ and 
$\rcux_{\rho,s}$ for each of the ensembles defined in \eqref{eq:EXP_Px_IID}--\eqref{eq:EXP_Px_Multi}.  
The analysis is similar for each of these, so we focus on the constant-composition
ensemble described by \eqref{eq:EXP_Px_CC}.  We define
    \begin{align}
    \SetScc(Q)             &\defeq \Big\{ \Ptilde_{X\Xbar}\in\Pc(\Xc\times\Xc)\,:\, \Ptilde_{X}=Q,\Ptilde_{\Xbar}=Q\Big\} \label{eq:EXP_SetS} \\
    \SetTcc(\Ptilde_{X \Xbar}) &\defeq \Big\{ P_{X \Xbar Y}\in\Pc(\Xc\times\Xc\times\Yc)\,:\, P_{X\Xbar}=\Ptilde_{X\Xbar},\EE_{P}[\log q(\Xbar,Y)]\ge\EE_{P}[\log q(X,Y)]\Big\} \label{eq:EXP_SetT2} \\
    \SetSncc(Q)             &\defeq \SetScc(Q) \,\cap\, \Pc_{n}(\Xc\times\Xc) \\
    \SetTncc(\Ptilde_{X \Xbar}) &\defeq \SetTcc(\Ptilde_{X \Xbar})\,\cap\, \Pc_{n}(\Xc\times\Xc\times\Yc).
    \end{align}
where we overload the symbol $\SetTcc$ (see \eqref{eq:EXP_SetT}).  It follows  
that $P_{X\Xbar Y}\in \SetTcc(Q)$ (defined in \eqref{eq:EXP_SetT}) if and only if 
$P_{X\Xbar Y}\in \SetTcc(\Ptilde_{X \Xbar})$ (defined in \eqref{eq:EXP_SetT2}) 
for some $\Ptilde_{X \Xbar} \in \SetScc(Q)$. We note the following properties of types \cite[Ch. 2]{CsiszarBook}:
\begin{enumerate}
    \item For any $\Ptilde_{X\Xbar}\in\SetSncc(Q_{n})$, 
          \begin{equation}
          \PP\big[(\Xv,\Xvbar) \in T^{n}(\Ptilde_{X\Xbar})\big] \doteq e^{-nI_{\Ptilde}(X;\Xbar)}. \label{eq:EXP_Property1}
          \end{equation}
    \item If $(\xv,\xvbar) \in T^{n}(\Ptilde_{X\Xbar})$, then for any $P_{X \Xbar Y}\in\SetTncc(\Ptilde_{X\Xbar})$,
          \begin{equation}
          \PP\big[(\xv,\xvbar,\Yv) \in T^{n}(P_{X \Xbar Y})\,\big|\,\Xv=\xv\big] \doteq e^{-nD(P_{X \Xbar Y}\|\Ptilde_{X\Xbar}\times W)}. \label{eq:EXP_Property2}
          \end{equation}
\end{enumerate}

\begin{thm} \label{thm:EXP_CC1}
    Consider a discrete memoryless channel, and let the codeword distribution $P_{\Xv}$ be 
    the constant-composition distribution in \eqref{eq:EXP_Px_CC} for some input distribution $Q$. The bound
    $\rcux_{\rho}$ in \eqref{eq:EXP_RCX} satisfies the following for any rate $R>0$:
    \begin{equation}
        \inf_{\rho\ge1}\rcux_{\rho}(n,e^{nR}) \doteq e^{-n\Eexcc(Q,R)}.
    \end{equation}
\end{thm}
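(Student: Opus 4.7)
The plan is to evaluate $\rcux_\rho(n,e^{nR})$ exponentially using the method of types and then match the resulting expression to $\Eexcc(Q,R)$ via Lagrangian duality applied to the primal form \eqref{eq:EXP_PrimalAlt}.

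First, I would substitute \eqref{eq:EXP_Px_CC} into \eqref{eq:EXP_RCX} and decompose the outer expectation according to the joint type $\Ptilde_{X\Xbar}\in\SetSncc(Q_n)$ of $(\Xv,\Xvbar)$. By \eqref{eq:EXP_Property1}, each joint type class carries probability $\doteq e^{-nI_{\Ptilde}(X;\Xbar)}$. Since the inner conditional probability in \eqref{eq:EXP_RCX} depends on $(\Xv,\Xvbar)$ only through $\Ptilde_{X\Xbar}$, I would expand it as a sum over conditional types $P_{X\Xbar Y}\in\SetTncc(\Ptilde_{X\Xbar})$ and apply \eqref{eq:EXP_Property2}; the polynomial number of types together with the standard ``sum equals largest term'' estimate then yields
\begin{equation*}
\PP\big[q^n(\Xvbar,\Yv)\ge q^n(\Xv,\Yv)\,\big|\,(\Xv,\Xvbar)\in T^n(\Ptilde_{X\Xbar})\big]\doteq\exp\Big(-n\min_{P\in\SetTcc(\Ptilde_{X\Xbar})}D(P\|\Ptilde_{X\Xbar}\times W)\Big).
\end{equation*}

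Combining these two exponential estimates, raising to the $\rho$-th power, and collapsing the pair $(\Ptilde_{X\Xbar},P)$ into a single $P\in\SetTcc(Q)$ via $\Ptilde_{X\Xbar}=P_{X\Xbar}$, I obtain $\rcux_\rho(n,e^{nR})\doteq e^{-nE_\rho(R)}$ with
\begin{equation*}
E_\rho(R)=\min_{P\in\SetTcc(Q)}\Big[\rho I_P(X;\Xbar)+D(P\|P_{X\Xbar}\times W)\Big]-\rho R.
\end{equation*}
Applying the chain rule \eqref{eq:EXP_ChainRule}, which holds because $P_X=P_{\Xbar}=Q$ for $P\in\SetTcc(Q)$, this rearranges to $E_\rho(R)=\min_{P\in\SetTcc(Q)}\big[D(P\|Q\times Q\times W)+(\rho-1)(I_P(X;\Xbar)-R)\big]-R$. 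Taking $\sup_{\rho\ge1}$ produces precisely the Lagrangian dual of \eqref{eq:EXP_PrimalAlt}, with $\rho-1\ge0$ serving as the multiplier for the constraint $I_P(X;\Xbar)\le R$. Since $D(P\|Q\times Q\times W)$ and $I_P(X;\Xbar)$ are both convex in $P$ and the remaining constraints defining $\SetTcc(Q)$ are linear, standard convex-analytic strong duality delivers $\sup_{\rho\ge1}E_\rho(R)=\Eexcc(Q,R)$.

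Finally, passing from fixed $\rho$ to $\inf_{\rho\ge1}$ inside the $\doteq$ relation is routine: the $\dot{\le}$ direction follows from $\inf_\rho\rcux_\rho\le\rcux_{\rho^*}\doteq e^{-nE_{\rho^*}(R)}$ for a near-optimal $\rho^*$, while the matching $\dot{\ge}$ direction uses uniformity of the polynomial subexponential factors in $\rho$ over the relevant bounded range. The main obstacle will be the clean verification of strong duality, particularly in the high-rate regime $R>R_s(Q)$ where the mutual-information constraint becomes inactive and the supremum over $\rho$ collapses to $\rho=1$. Minor technicalities include replacing $Q_n$ by $Q$ via continuity of divergence and mutual information, and the aforementioned uniformity of the subexponential factors.
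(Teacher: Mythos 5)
Your proposal is correct and follows essentially the same route as the paper: the type-class expansion and exponential evaluation of $\rcux_{\rho}$ are identical, and your final identification of $\sup_{\rho\ge1}$ of the resulting exponent with $\Eexcc(Q,R)$ via Lagrangian duality (multiplier $\rho-1$ for the constraint $I_{P}(X;\Xbar)\le R$) is a repackaging of the paper's step, which interchanges $\sup_{\rho}$ and $\min_{P}$ by Fan's minimax theorem and then evaluates the inner supremum over $\rho$ explicitly before invoking \eqref{eq:EXP_ChainRule}. The strong-duality verification you flag as the main obstacle is exactly this minimax interchange, and it holds for all $R>0$ by compactness and convexity of $\SetTcc(Q)$ together with linearity in $\rho$, so no Slater-type constraint qualification is needed.
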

\begin{proof}   
    Using the codeword distribution in \eqref{eq:EXP_Px_CC} and expanding \eqref{eq:EXP_RCX} in terms of types, we obtain
        \begin{align}
        & \rcux_{\rho}(n,M)^{1/\rho} \nonumber \\
        &= 4(M-1)\sum_{\Ptilde_{X \Xbar}\in\SetSncc(Q_{n})}\PP\Big[(\Xv,\Xvbar)\in T^{n}(P_{X\Xbar})\Big]\sum_{P_{X \Xbar Y}\in\SetTncc(\Ptilde_{X\Xbar})}\PP\Big[(\xv,\xvbar,\Yv)\in T^{n}(P_{X \Xbar Y}) \,\Big|\, \Xv=\xv\Big]^{1/\rho} \label{eq:EXP_PrimalCC1} \\
        & \doteq M\max_{\Ptilde_{X \Xbar}\in\SetSncc(Q_{n})}\max_{P_{X \Xbar Y}\in\SetTncc(\Ptilde_{X\Xbar})}\exp\Big(-nI_{\Ptilde}(X;\Xbar)\Big)\exp\Big(-n\cdot\frac{1}{\rho}D\big(P_{X \Xbar Y}\|\Ptilde_{X\Xbar}\times W\big)\Big) \label{eq:EXP_PrimalCC2} \\
        & \doteq M\max_{P_{X \Xbar Y}\in\SetTcc(Q)}\exp\Big(-n\Big(\frac{1}{\rho}D\big(P_{X \Xbar Y}\|P_{X\Xbar}\times W\big)+I_{P}(X;\Xbar)\Big)\Big), \label{eq:EXP_PrimalCC3}
        \end{align}
    where in \eqref{eq:EXP_PrimalCC1} we define $(\xv,\xvbar)$ to be an arbitrary pair 
    with joint type $\Ptilde_{X\Xbar}$, \eqref{eq:EXP_PrimalCC2} follows from the properties 
    of types in \eqref{eq:EXP_Property1}--\eqref{eq:EXP_Property2} and the fact that the number of
    joint types is polynomial in $n$, and \eqref{eq:EXP_PrimalCC3} follows from the definitions 
    of $\SetSncc$, $\SetTncc$ and $\SetTcc$, and by using a standard continuity argument to
    expand the maximization from types to general distributions (e.g. see \cite{DyachkovCC}).  
    We thus obtain the exponent
        \begin{align}
        \sup_{\rho \ge 1} &\min_{P_{X \Xbar Y}\in\SetTcc(Q)}  D(P_{X \Xbar Y}\|P_{X\Xbar}\times W) + \rho\big(I_{P}(X;\Xbar)-R\big) \label{eq:EXP_Etilde1} \\ 
                          &=\min_{P_{X \Xbar Y}\in\SetTcc(Q)} \sup_{\rho \ge 1} D(P_{X \Xbar Y}\|P_{X\Xbar}\times W) + \rho\big(I_{P}(X;\Xbar)-R\big), \label{eq:EXP_Etilde2}
        \end{align}
    where \eqref{eq:EXP_Etilde2} follows from Fan's minimax theorem \cite{Minimax}, the 
    conditions of which are satisfied here since the objective is linear in $\rho$ and convex
    in $P_{X \Xbar Y}$.  Using
        \begin{equation}
        \sup_{\rho \ge 1} \rho\alpha = \begin{cases} \infty & \alpha > 0 \\ \alpha & \alpha \le 0 \end{cases} \label{eq:EXP_Etilde3}
        \end{equation}
    and the identity in \eqref{eq:EXP_ChainRule}, it follows that \eqref{eq:EXP_Etilde2} coincides with \eqref{eq:EXP_PrimalAlt}.
\end{proof}

The preceding derivation of $\Eexcc$ provides a simple alternative to that of
Csisz\'{a}r and K\"{o}rner \cite{Csiszar1}, while yielding the exponent in the same form.

\subsubsection{Cost-constrained Ensemble}

Here we provide a derivation of $\Eexcc$ in the form given in \eqref{eq:EXP_Ex_CC},
as well as its generalization to continuous alphabets, using the cost-constrained 
ensemble in \eqref{eq:EXP_Px_Multi}.  We allow for a system cost constraint 
of the form given in \eqref{eq:EXP_SystemCost}.  A key property of the ensemble 
which will prove useful in the derivations is
    \begin{equation}
        \xv \in \Dc_n \implies e^{r\big(\sum_{i=1}^{n}a(x_{i})-n\phi_{a}\big)}e^{|r|\delta} \ge 1, \label{eq:EXP_CostProperty2} 
    \end{equation}
which holds for any real number $r$, and follows immediately from 
\eqref{eq:EXP_SetDn}.  Furthermore, we have the following.

\begin{prop} \emph{\cite[Prop. 1]{JournalSU}} \label{prop:EXP_SubExp}
    Fix any input distribution $Q$ and set of cost functions $\{a_{l}\}_{l=1}^{L}$
    such that $E_{Q}[c(X)]\le\Gamma$, $E_{Q}[c(X)^2] < \infty$ and 
    $E_{Q}[a_{l}(X)^2] < \infty$ for $l=1,\dotsc,L$.  Then the normalizing
    constant $\mu_{n}$ in \eqref{eq:EXP_Px_Multi} satisfies 
        \begin{equation}
        \lim_{n\to\infty}\frac{1}{n}\log\mu_{n} = 0. \label{eq:EXP_CostProperty1}
        \end{equation}
\end{prop}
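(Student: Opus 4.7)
The proposition makes two claims: $\limsup_n \tfrac{1}{n}\log \mu_n \le 0$ and $\liminf_n \tfrac{1}{n}\log \mu_n \ge 0$. The upper bound is immediate: writing $\mu_n = \PP_Q[\Xv \in \Dc_n]$, where $\Xv$ has i.i.d.\ components distributed as $Q$, we have $\mu_n \le 1$ and so $\limsup_n \tfrac{1}{n}\log \mu_n \le 0$. The substantive part is therefore the lower bound, which reduces to showing that $\mu_n$ decays no faster than polynomially in $n$.

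My plan is to translate membership in $\Dc_n$ into a statement about the vector-valued sum $S_n \defeq \sum_{i=1}^n V_i$, where $V_i \defeq (c(X_i), a_1(X_i), \dotsc, a_L(X_i)) \in \RR^{L+1}$. The $V_i$ are i.i.d.\ with finite second moments by hypothesis and with means $(\EE_Q[c(X)], \phi_1, \dotsc, \phi_L)$. Membership $\Xv \in \Dc_n$ is equivalent to $S_n$ lying in a product region: the first coordinate in the half-line $(-\infty, n\Gamma]$, and each of the remaining $L$ coordinates in a fixed-length interval of width $2\delta$ about $n\phi_l$. I would handle the inequality constraint using $\EE_Q[c(X)] \le \Gamma$ together with the weak law of large numbers (when $\EE_Q[c(X)] < \Gamma$) or the central limit theorem (in the boundary case $\EE_Q[c(X)] = \Gamma$), to obtain $\PP[\tfrac{1}{n}\sum_i c(X_i) \le \Gamma] \ge c_0 > 0$ uniformly in $n$. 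For the $L$ equality-type constraints, I would apply a multivariate local central limit theorem to $(a_1(X),\dotsc,a_L(X))$, yielding $\PP[|\sum_i a_l(X_i) - n\phi_l| \le \delta,\, l=1,\dotsc,L] \ge c_1 n^{-L/2}$. Combining these two estimates, together with a short conditioning argument to handle the joint distribution of the inequality and equality events (e.g.\ conditioning on the auxiliary coordinates and using that the conditional probability of the cost-constraint event remains bounded below), yields $\mu_n \ge c\, n^{-L/2}$, so $\tfrac{1}{n}\log\mu_n \to 0$.

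The main obstacle is that $(a_1(X),\dotsc,a_L(X))$ can be supported on a lattice (the generic situation when $\Xc$ is finite), in which case a box of fixed side $2\delta$ need not always contain attainable values of $\sum_i (a_1(X_i),\dotsc,a_L(X_i))$. I would address this by appealing to a lattice local CLT of Bhattacharya--Rao type, possibly preceded by a short preliminary step eliminating linearly dependent auxiliary coordinates so that the limiting covariance becomes non-degenerate on the affine sublattice generated by the support; this still gives the required $n^{-L/2}$ lower bound once $n$ is large compared to the lattice span. Apart from this, the finite-second-moment hypotheses $\EE_Q[c(X)^2]<\infty$ and $\EE_Q[a_l(X)^2]<\infty$ are used precisely to guarantee the applicability of both the CLT (for $c$) and the local CLT (for the $a_l$'s), so no further assumptions are needed.
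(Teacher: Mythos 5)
The paper does not actually prove this proposition; it imports it verbatim from \cite{JournalSU}, and the argument there is exactly the one you outline: write $\mu_n = \PP[\Xv'\in\Dc_n]$ with $\Xv'\sim Q^n$ i.i.d., note $\mu_n\le 1$ for the upper bound, and lower-bound the probability that the $L$ auxiliary-cost sums land in the fixed-width boxes by $\Theta(n^{-L/2})$ via a multivariate (lattice or non-lattice) local limit theorem, the finite second moments being precisely what that requires. So your strategy is the intended one, but one step would fail as written: the combination of the system-cost event $A=\{\sum_i c(X_i)\le n\Gamma\}$ with the auxiliary box event $B$. Conditioning on $B$ destroys the independence of the coordinates, so ``the conditional probability of $A$ remains bounded below'' is not something the WLLN or CLT hands you; establishing it would require essentially the same joint local analysis you are trying to sidestep. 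The elementary alternative $\PP[A\cap B]\ge\PP[B]-\PP[A^c]$ also fails for $L\ge2$: with only $\EE_Q[c(X)^2]<\infty$, Chebyshev gives $\PP[A^c]=O(1/n)$ (and the true decay can be as slow as roughly $n^{-2}$), which is not negligible against $\PP[B]=\Theta(n^{-L/2})$. The clean repair stays inside your framework: include $c$ as an $(L+1)$-st coordinate of $V_i$ and ask the joint sum to lie in a box whose $c$-window is $[n\EE_Q[c(X)]-\delta',\,n\EE_Q[c(X)]+\delta']$ when $\EE_Q[c(X)]<\Gamma$ (this window sits inside $(-\infty,n\Gamma]$ for large $n$), or the one-sided window $[n\Gamma-\delta',\,n\Gamma]$ in the boundary case $\EE_Q[c(X)]=\Gamma$ (the degenerate case $\var[c(X)]=0$ is trivial). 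A joint local CLT then yields $\mu_n=\Omega(n^{-(L+1)/2})$, which is all the proposition needs.

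Your lattice caveat is real but the proposed cure is misdirected: the issue is not resolved by taking ``$n$ large compared to the lattice span.'' If some $a_l$ is lattice-valued with span exceeding $2\delta$ --- for instance $a_l(x)=x$ on $\{0,1\}$ with $Q(1)$ irrational and $\delta<\tfrac12$ --- then the interval $[n\phi_l-\delta,\,n\phi_l+\delta]$ contains no attainable value of $\sum_i a_l(X_i)$ for a positive fraction of all $n$, and $\mu_n=0$ for those $n$, no matter how large. What is actually needed is a condition on $\delta$ relative to the spans of the lattice-valued costs (e.g. $2\delta$ at least the span of each), a proviso that concerns the statement itself rather than any particular proof and is implicit in how $\delta$ is to be chosen in \eqref{eq:EXP_SetDn}. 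With that proviso, the Bhattacharya--Rao-type lattice local CLT you invoke, after the dimension reduction you mention (noting that when eliminating an affinely dependent $a_l$ the surviving window must be shrunk by the corresponding scale factor), does deliver the claimed polynomial lower bound and hence \eqref{eq:EXP_CostProperty1}.
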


The following theorem gives an achievable error exponent for a fixed set of auxiliary costs.

\begin{thm} \label{thm:EXP_Dual}
    Consider a memoryless (possibly continuous) channel, 
    and fix any input distribution $Q$ and functions $\al$ satisfying the 
    assumptions of Proposition \ref{prop:EXP_SubExp}.  Under the cost-constrained distribution 
    in \eqref{eq:EXP_Px_Multi}, we have
        \begin{equation}
        \inf_{\rho\ge1,s\ge0}\rcux_{\rho,s}(n,e^{nR}) \,\,\dot{\le}\, e^{-n\Eexcost(Q,R,\{a_{l}\})}
        \end{equation}
    for any rate $R>0$, where
        \begin{equation} 
        \Eexcost(Q,R,\{a_{l}\}) \defeq\sup_{\rho\ge1}\Excost(Q,\rho,\{a_{l}\})-\rho R, \label{eq:EXP_Eex_Cost}
        \end{equation}
    and\footnote{In the case of continuous alphabets, the summations over sequences should be replaced by integrals.}
        \begin{equation} 
        \Excost(Q,R,\{a_{l}\}) \defeq \sup_{s\ge0,\{r_{l}\},\{\rbar_{l}\}}-\rho\log\sum_{x,\xbar}Q(x)Q(\xbar)\Bigg(\sum_{y}W(y|x)\bigg(\frac{q(\xbar,y)}{q(x,y)}\bigg)^{s}\frac{e^{\sum_{l=1}^{L}\rbar_{l}(a_{l}(\xbar)-\phi_{l})}}{e^{\sum_{l=1}^{L}r_{l}(a_{l}(x)-\phi_{l})}}\Bigg)^{1/\rho}. \label{eq:EXP_Ex_Cost}
        \end{equation}
\end{thm}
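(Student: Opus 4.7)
The plan is to start from the bound $\rcux_{\rho,s}$ in Theorem \ref{thm:EXP_Finite} and substitute the cost-constrained codeword distribution in \eqref{eq:EXP_Px_Multi}. Because the channel is memoryless and the Gallager-style quantity $\EE\big[(q^n(\Xvbar,\Yv)/q^n(\Xv,\Yv))^s \,|\, \Xv=\xv,\Xvbar=\xvbar\big]$ factorizes over the coordinates $i$, we get
\begin{equation}
\rcux_{\rho,s}(n,M)^{1/\rho} = \frac{4(M-1)}{\mu_n^2}\sum_{\xv\in\Dc_n,\xvbar\in\Dc_n}\prod_{i=1}^n Q(x_i)Q(\xbar_i)\Bigg(\sum_y W(y|x_i)\bigg(\frac{q(\xbar_i,y)}{q(x_i,y)}\bigg)^s\Bigg)^{1/\rho}.
\end{equation}
The outer sum over $(\xv,\xvbar)$ does not factor because of the indicators $\openone\{\xv,\xvbar\in\Dc_n\}$, and this is the main obstacle.

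To remove the indicators, I would use the standard tilting trick supplied by \eqref{eq:EXP_CostProperty2}: for any real $\{r_l\},\{\rbar_l\}$ we have
\[
\openone\{\xv\in\Dc_n\}\le e^{\sum_l |r_l|\delta}\prod_{i=1}^n e^{-\sum_l r_l(a_l(x_i)-\phi_l)},
\]
and similarly for $\xvbar$ with $\{\rbar_l\}$. Substituting these, the sum over $(\xv,\xvbar)$ decouples per coordinate, giving
\begin{equation}
\rcux_{\rho,s}(n,M) \le \frac{(4(M-1))^{\rho}\,e^{\rho\delta\sum_l(|r_l|+|\rbar_l|)}}{\mu_n^{2\rho}}\,\Psi(\rho,s,\{r_l\},\{\rbar_l\})^{n\rho},
\end{equation}
where
\[
\Psi(\rho,s,\{r_l\},\{\rbar_l\})\defeq \sum_{x,\xbar}Q(x)Q(\xbar)\Bigg(\sum_y W(y|x)\bigg(\frac{q(\xbar,y)}{q(x,y)}\bigg)^s\Bigg)^{1/\rho}\frac{e^{\sum_l\rbar_l(a_l(\xbar)-\phi_l)}}{e^{\sum_l r_l(a_l(x)-\phi_l)}}.
\]
Since the derivation used only factorization and the inequality \eqref{eq:EXP_CostProperty2}, it carries over verbatim to continuous alphabets with integrals in place of sums.

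To finish, I would apply Proposition \ref{prop:EXP_SubExp} so that $\mu_n\doteq 1$, and note that the factor $(4(M-1))^{\rho}e^{\rho\delta\sum_l(|r_l|+|\rbar_l|)}$ contributes only $e^{n\rho R}$ on the exponential scale (with $M=e^{nR}$), the tilting constant being independent of $n$. Taking $-\frac{1}{n}\log$ thus yields
\[
-\frac{1}{n}\log \rcux_{\rho,s}(n,e^{nR})\ge -\rho R -\rho\log\Psi(\rho,s,\{r_l\},\{\rbar_l\}) + o(1).
\]
Because the bound is valid for every admissible $(s,\{r_l\},\{\rbar_l\})$, one may take the supremum over these parameters on the right-hand side; recognizing that $\Excost(Q,\rho,\{a_l\})=\sup_{s,\{r_l\},\{\rbar_l\}}(-\rho\log\Psi)$, the right-hand side becomes $\Excost(Q,\rho,\{a_l\})-\rho R$. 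Finally, since we are taking $\inf_{\rho\ge 1,s\ge 0}$ on the left, we may take $\sup_{\rho\ge 1}$ of the right, which yields $\Eexcost(Q,R,\{a_l\})$ as in \eqref{eq:EXP_Eex_Cost}, completing the proof.
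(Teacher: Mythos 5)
Your proposal is correct and takes essentially the same route as the paper: substitute the cost-constrained distribution \eqref{eq:EXP_Px_Multi} into $\rcux_{\rho,s}$ from Theorem \ref{thm:EXP_Finite}, remove the cost-shell restriction via the tilting bound \eqref{eq:EXP_CostProperty2}, absorb $\mu_n$ and the $\delta$-dependent constants using Proposition \ref{prop:EXP_SubExp} and the $\dot\le$ notation, factorize over the $n$ letters, and optimize $(\rho,s,\{r_l\},\{\rbar_l\})$. The only cosmetic difference is that you place the tilting factors outside the $(\cdot)^{1/\rho}$ power while \eqref{eq:EXP_Ex_Cost} has them inside; since the factors do not depend on $y$ and the supremum ranges over all real $\{r_l\},\{\rbar_l\}$, the reparametrization $r_l\mapsto r_l/\rho$, $\rbar_l\mapsto\rbar_l/\rho$ shows your $\sup(-\rho\log\Psi)$ coincides with $\Excost$, so this is immaterial.
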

\begin{IEEEproof}
    Let $a_{l}^{n}(\xv) \defeq \sum_{i=1}^{n}a_{l}(x_{i})$ and
    $Q^{n}(\xv) \defeq \prod_{i=1}^{n}Q(x_i)$.   We start with \eqref{eq:EXP_RCX_s}, and write 
        \begin{align}
        \rcux_{\rho,s}(n,M)^{1/\rho} & = 4(M-1)\sum_{\xv,\xvbar}\PXv(\xv)\PXv(\xvbar)\Bigg(\sum_{\yv}W^{n}(\yv|\xv)\bigg(\frac{q^{n}(\xvbar,\yv)}{q^{n}(\xv,\yv)}\bigg)^{s}\Bigg)^{1/\rho}\\
        & \,\,\dot{\le}\, M\sum_{\xv,\xvbar}\PXv(\xv)\PXv(\xvbar)\Bigg(\sum_{\yv}W^{n}(\yv|\xv)\bigg(\frac{q^{n}(\xvbar,\yv)}{q^{n}(\xv,\yv)}\bigg)^{s}\frac{e^{\sum_{l=1}^{L}\rbar_{l}(a_{l}^{n}(\xvbar)-n\phi_{l})}}{e^{\sum_{l=1}^{L}r_{l}(a_{l}^{n}(\xv)-n\phi_{l})}}\Bigg)^{1/\rho} \label{eq:EXP_DualCost2}\\
        & \,\,\dot{\le}\, M\sum_{\xv,\xvbar}Q^{n}(\xv)Q^{n}(\xvbar)\Bigg(\sum_{\yv}W^{n}(\yv|\xv)\bigg(\frac{q^{n}(\xvbar,\yv)}{q^{n}(\xv,\yv)}\bigg)^{s}\frac{e^{\sum_{l=1}^{L}\rbar_{l}(a_{l}^{n}(\xvbar)-n\phi_{l})}}{e^{\sum_{l=1}^{L}r_{l}(a_{l}^{n}(\xv)-n\phi_{l})}}\Bigg)^{1/\rho},\label{eq:Exp_DualCost3}
        \end{align}
    where \eqref{eq:EXP_DualCost2} holds for any $\{r_l\}$ and $\brl$ from \eqref{eq:EXP_CostProperty2},
    and \eqref{eq:Exp_DualCost3} follows  from \eqref{eq:EXP_Px_Multi} and Proposition \ref{prop:EXP_SubExp}.  
    The proof is concluded by expanding each term in \eqref{eq:Exp_DualCost3} as a product from 
    $1$ to $n$ and optimizing $\rho$, $s$, $\{r_l\}$ and $\brl$. 
\end{IEEEproof}

We now show that we can recover $\Eexcc$ from $\Eexcost$ upon setting $L=2$ and 
optimizing the auxiliary costs; an analogous statement was shown to be true
for the random-coding exponent in \cite{JournalSU}. Setting
$\rbar_{1}=r_{2}=1$ and $\rbar_{2}=r_{1}=0$, and optimizing 
$a_{1}(\cdot)$ and $a_{2}(\cdot)$, we obtain
    \begin{align}
        \Excost(Q,\rho) &= \sup_{s\ge0,a_{1}(\cdot), a_{2}(\cdot)}-\rho\log\sum_{x,\xbar}Q(x)Q(\xbar)\Bigg(\sum_{y}W(y|x)\bigg(\frac{q(\xbar,y)}{q(x,y)}\bigg)^{s}\frac{e^{a_1(\xbar)-\phi_{1}}}{e^{a_2(x)-\phi_{2}}} \Bigg)^{1/\rho} \label{eq:EXP_CCStep1} \\
        & \le \sup_{s\ge0,a_{1}(\cdot), a_{2}(\cdot)}-\rho\sum_{x}Q(x)\log\sum_{\xbar}Q(\xbar)\Bigg(\sum_{y}W(y|x)\bigg(\frac{q(\xbar,y)}{q(x,y)}\bigg)^{s}\frac{e^{a_1(\xbar)-\phi_{1}}}{e^{a_2(x)-\phi_{2}}} \Bigg)^{1/\rho}, \label{eq:EXP_CCStep2}
    \end{align}
where \eqref{eq:EXP_CCStep2} follows from Jensen's inequality.  For any $s$ and $a_1(\cdot)$, 
there exists a choice of $a_2(\cdot)$ that makes Jensen's inequality hold with equality 
in $\eqref{eq:EXP_CCStep2}$, and hence the same is true after taking the supremum.
Hence, and by writing 
    \begin{equation}
    -\sum_{x}Q(x)\log\bigg(\frac{e^{-\phi_{1}}}{e^{a_2(x)-\phi_{2}}}\bigg)^{1/\rho} = -\sum_{x}Q(x)\log\big(e^{-a_{1}(x)}\big)^{1/\rho} = \frac{\phi_1}{\rho}, \label{eq:EXP_CCStep3}
    \end{equation}
we see that the $a_{2}(\cdot)$ achieving the supremum in \eqref{eq:EXP_CCStep1} is the one yielding 
equality in \eqref{eq:EXP_CCStep2}.  Renaming $a_{1}(\cdot)$ as $a(\cdot)$ and using the first equality
in \eqref{eq:EXP_CCStep3}, we obtain \eqref{eq:EXP_Ex_CC}.

It should be noted that, in accordance with Proposition \ref{prop:EXP_SubExp}, 
the supremum over $s$ and $a(\cdot)$ in \eqref{eq:EXP_Ex_CC} is restricted to
choices such that $E_{Q}[a(X)^2]<\infty$, and such that $E_{Q}[a_{2}(X)^2]<\infty$
for the choice of $a_{2}(\cdot)$  which makes Jensen's inequality hold with 
equality in \eqref{eq:EXP_CCStep2} (expressed in terms of $s$ and $a(\cdot)$).
This may rule out some parameters in the case of infinite or continuous alphabets.

While the parameters $\{r_l\}$ and $\brl$ are not necessary for obtaining \eqref{eq:EXP_CCStep1},
they can improve the exponent for a given set of auxiliary costs \cite{JournalSU}.  
That is, the more general exponent of Theorem \ref{thm:EXP_Dual} serves as an indicator of 
the performance when the auxiliary costs are chosen suboptimally.
Using a similar argument to that of \eqref{eq:EXP_CCStep1}--\eqref{eq:EXP_CCStep3}, 
it is easily shown that $\Eexcost\le\Eexcc$, and hence one 
cannot improve on the exponent obtained using $L=2$ optimally chosen auxiliary costs.

\subsection{Derivation Using Type Class Enumerators} \label{sec:EXP_TYPE_ENUM}

In the proof of Theorem \ref{thm:EXP_CC1}, we gave an exponentially tight analysis of
$\rcux_{\rho}$.  In this subsection, we show that an exponentially tight analysis can
be provided starting from an earlier step using the method of  
type class enumeration (e.g. see \cite{MerhavErasure,MerhavIC,MerhavPhysics}). 
Once again, the analysis is similar for each of the ensembles in
\eqref{eq:EXP_Px_IID}--\eqref{eq:EXP_Px_Multi}, so we focus on the 
constant-composition ensemble described by \eqref{eq:EXP_Px_CC}.

Substituting \eqref{eq:EXP_FiniteStep1} into \eqref{eq:EXP_GalBound} and defining
\begin{equation}
    d_{q}(\xv,\xvbar) \defeq -\log\PP\big[q^{n}(\xvbar,\Yv)\ge q^{n}(\xv,\Yv) \,\big|\, \Xv=\xv\big], \label{eq:EXP_dq}
\end{equation}
we obtain the bound
\begin{align}
    p_{e}(\Cc) &\le \big(2A_{n}(R,\rho)\big)^{\rho},  \label{eq:EXP_TypeEnumStart}
\end{align}
where
\begin{equation}
    A_{n}(R,\rho) \triangleq \EE\Bigg[\bigg(\sum_{\mbar\ne m}e^{-d_{q}(\Xv^{(m)},\Xv^{(\mbar)})}\bigg)^{1/\rho}\Bigg].
\end{equation}
This bound provides the starting point for our analysis.  Note that since
we have not used the inequality in \eqref{eq:EXP_GalEq}, we may allow
for $\rho\ge0$ rather than just $\rho\ge1$.

\begin{thm} \label{thm:EXP_TypeEnum}
    Consider a discrete memoryless channel, and let the codeword distribution $P_{\Xv}$ be 
    the constant-composition distribution in \eqref{eq:EXP_Px_CC} for some input distribution $Q$.  
    Then the following holds for any rate $R>0$:
    \begin{equation}
        \inf_{\rho\ge0}\big(2A_{n}(R,\rho)\big)^{\rho} \doteq e^{-n\Eexcc(Q,R)}.
    \end{equation}
\end{thm}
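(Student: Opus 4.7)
The plan is to evaluate $A_n(R,\rho)$ via the method of type class enumeration and then optimize over $\rho$. For each joint type $\Ptilde_{X\Xbar}\in\SetSncc(Q_n)$, define the enumerator $N(\Ptilde_{X\Xbar}\,|\,\xv)\defeq\big|\{\bar m\ne m\,:\,(\xv,\Xv^{(\bar m)})\in T^n(\Ptilde_{X\Xbar})\}\big|$. Because the channel is memoryless, $d_q(\xv,\xvbar)$ in \eqref{eq:EXP_dq} depends on its arguments only through their joint type; a routine method-of-types computation (essentially \eqref{eq:EXP_Property2}) shows that $e^{-d_q(\xv,\xvbar)}\doteq e^{-n\delta_q(\Ptilde)}$ whenever $(\xv,\xvbar)\in T^n(\Ptilde)$, where $\delta_q(\Ptilde)\defeq\min_{P_{X\Xbar Y}\in\SetTcc(\Ptilde)}D(P_{X\Xbar Y}\|\Ptilde\times W)$. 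Grouping the inner sum in $A_n$ by joint type gives
\[
A_n(R,\rho)\doteq\EE\bigg[\Big(\sum_{\Ptilde\in\SetSncc(Q_n)}N(\Ptilde\,|\,\Xv^{(m)})\,e^{-n\delta_q(\Ptilde)}\Big)^{1/\rho}\bigg].
\]
Because the number of joint types is polynomial in $n$, the sum may be traded for a maximum, and the outer expectation likewise, both at a subexponential cost, yielding $A_n(R,\rho)\doteq\max_{\Ptilde}\EE\big[N(\Ptilde)^{1/\rho}\big]\,e^{-n\delta_q(\Ptilde)/\rho}$.

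The crux is then the asymptotics of $\EE\big[N(\Ptilde)^{1/\rho}\big]$. Conditional on $\Xv^{(m)}$ the remaining $M-1$ codewords are i.i.d.\ uniform on $T^n(Q_n)$, so $\EE[N(\Ptilde)]\doteq e^{n(R-I_\Ptilde(X;\Xbar))}$, and pairwise independence of the indicators gives $\EE[N(\Ptilde)^2]\doteq\EE[N(\Ptilde)]^2+\EE[N(\Ptilde)]$. Chebyshev's inequality then yields $N\doteq\EE[N]$ w.h.p.\ when $I_\Ptilde(X;\Xbar)<R$, while Paley--Zygmund yields $\PP[N\ge 1]\doteq\EE[N]$ when $I_\Ptilde(X;\Xbar)>R$. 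Using that $N$ is integer-valued so that $\openone\{N\ge 1\}\le N^{1/\rho}\le N$ for $\rho\ge 1$, these combine to
\[
\EE\big[N(\Ptilde)^{1/\rho}\big]\doteq\begin{cases}e^{n(R-I_\Ptilde(X;\Xbar))/\rho},&I_\Ptilde(X;\Xbar)\le R,\\[2pt] e^{-n(I_\Ptilde(X;\Xbar)-R)},&I_\Ptilde(X;\Xbar)>R.\end{cases}
\]

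Substituting back and raising to the $\rho$th power, the bound $(2A_n(R,\rho))^\rho$ has exponent $n\big(I_\Ptilde(X;\Xbar)+\delta_q(\Ptilde)-R\big)$ in the first regime (independent of $\rho$) and $n\big(\rho(I_\Ptilde(X;\Xbar)-R)+\delta_q(\Ptilde)\big)$ in the second. Taking $\inf_{\rho\ge 1}$ drives the second contribution to zero exponentially, while a separate and easier check rules out any improvement from $\rho\in[0,1)$ (where the $(\sum\cdot)^{1/\rho}$ step used above is unavailable, but the resulting bounds on $\EE[N^{1/\rho}]$ are all weaker than at $\rho=1$). The resulting exponent is
\[
\min_{\Ptilde\in\SetScc(Q):\,I_\Ptilde(X;\Xbar)\le R}\big[I_\Ptilde(X;\Xbar)+\delta_q(\Ptilde)-R\big],
\]
which, upon expanding $\delta_q(\Ptilde)$ and using the chain rule $D(P_{X\Xbar Y}\|Q\times Q\times W)=D(P_{X\Xbar Y}\|P_{X\Xbar}\times W)+I_P(X;\Xbar)$ (valid because $P_X=P_{\Xbar}=Q$), coincides with $\Eexcc(Q,R)$ in the form \eqref{eq:EXP_PrimalAlt}.

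The main obstacle is the asymptotic evaluation of $\EE\big[N(\Ptilde)^{1/\rho}\big]$ in the low-density regime $I_\Ptilde(X;\Xbar)>R$: establishing $\PP[N\ge 1]\doteq\EE[N]$ requires the second-moment (Paley--Zygmund) estimate and hinges on the near pairwise independence of the events $\{(\Xv^{(m)},\Xv^{(\bar m_i)})\in T^n(\Ptilde)\}$ for distinct $\bar m_1,\bar m_2$, a property special to the constant-composition ensemble. Everything else---the method-of-types evaluation of $d_q$, the interchange of sums and maxima modulo subexponential factors, and the identification of the final minimization with \eqref{eq:EXP_PrimalAlt}---reuses tools already employed in the proof of Theorem \ref{thm:EXP_CC1}.
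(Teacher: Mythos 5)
Your proposal is correct and follows essentially the same route as the paper's own proof: the same type-class enumerator decomposition, the same reduction to $\max_{\Ptilde}\EE\big[N(\Ptilde)^{1/\rho}\big]e^{-nd_q(\Ptilde)/\rho}$ via the polynomial number of joint types, the same fractional-moment asymptotics of the enumerator, and the same $\rho\to\infty$ argument showing the $I_{\Ptilde}(X;\Xbar)\ge R$ branch never falls below the $I_{\Ptilde}(X;\Xbar)\le R$ branch. The only difference is that where the paper quotes the enumerator-moment formula from \cite{MerhavErasure} (noting $N_m(\Ptilde_{X\Xbar})$ is a sum of independent binary variables with means $\doteq e^{-nI_{\Ptilde}(X;\Xbar)}$), you re-derive it inline via Chebyshev and Paley--Zygmund with the sandwich $\openone\{N\ge1\}\le N^{1/\rho}\le N$ for $\rho\ge1$ --- equivalent in substance; note only that the indicators are exactly (not ``nearly'') pairwise independent given $\Xv^{(m)}$ for any ensemble with independent codewords, constant composition entering instead through the exponent $I_{\Ptilde}(X;\Xbar)$ in \eqref{eq:EXP_Property1} without the extra $D(\Ptilde_X\|Q)$ term.
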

\begin{proof}
    For $m=1,\dotsc,M$ and each joint type $\Ptilde_{X\Xbar}$, we define the random variable
        \begin{equation}
        N_m(\Ptilde_{X\Xbar}) \defeq \sum_{\mbar \ne m} \openone\big\{ (\Xv^{(m)},\Xv^{(\mbar)}) \in T^{n}(\Ptilde_{X\Xbar}) \big\}. \label{eq:EXP_Nm}
        \end{equation} 
    Under the random-coding distribution in \eqref{eq:EXP_Px_CC}, we have 
    $N_m(\Ptilde_{X\Xbar})=0$ with probability one if $\Ptilde_{X\Xbar}\notin\SetSncc(Q_n)$.
    That is, the marginal distribution of each codeword must agree with $Q$. Since $d_{q}$ depends 
    only on the joint type of its arguments, we define 
    $d_{q}(\Ptilde_{X\Xbar}) \defeq \frac{1}{n}d_{q}(\xv,\xvbar)$, 
    where $(\xv,\xvbar) \in T^{n}(\Ptilde_{X\Xbar})$.
    
    Making repeated use of the fact that the number of joint
    types is polynomial in $n$, we have the following:
        {\allowdisplaybreaks 
        \begin{align}
        A_n(R,\rho)
        &= \EE\Bigg[\bigg(\sum_{\Ptilde_{X\Xbar}}N_m(\Ptilde_{X\Xbar})e^{-n d_{q}(\Ptilde_{X\Xbar})}\bigg)^{1/\rho}\Bigg]\\
        &\doteq \EE \bigg[\max_{\Ptilde_{X\Xbar}}N_m(\Ptilde_{X\Xbar})^{1/\rho}e^{-n d_{q}(\Ptilde_{X\Xbar})/\rho}\bigg]\\
        &\doteq \EE\bigg[\sum_{\Ptilde_{X\Xbar}}N_m(\Ptilde_{X\Xbar})^{1/\rho}e^{-n d_{q}(\Ptilde_{X\Xbar})/\rho}\bigg]\\
        &\doteq \max_{\Ptilde_{X\Xbar}}\EE\Big[N_m(\Ptilde_{X\Xbar})^{1/\rho}\Big]e^{-n d_{q}(\Ptilde_{X\Xbar})/\rho}, \label{eq:EXP_AnBound}
        \end{align} }
    where \eqref{eq:EXP_AnBound} follows by first taking the summation outside the expectation.
    It follows from \eqref{eq:EXP_AnBound} that
        \begin{equation}
        \big(2A_n(R,\rho)\big)^{\rho} \doteq \max_{\Ptilde_{X\Xbar}}\bigg(\EE\Big[N_m(\Ptilde_{X\Xbar})^{1/\rho}\Big]\bigg)^\rho e^{-n d_{q}(\Ptilde_{X\Xbar})}. \label{eq:EXP_AnRhoBound}
        \end{equation}
    Similarly to \cite[Eq. (34)]{MerhavErasure}, we have for 
    all $\Ptilde_{X\Xbar}\in\SetSncc(Q_n)$ that
        \begin{equation}
        \EE\Big[N_m(\Ptilde_{X\Xbar})^{1/\rho}\Big] \doteq
        \left\{\begin{array}{ll}
               \exp\big(n\big(R-I_{\Ptilde}(X;\Xbar)\big)\big) & R < I_{\Ptilde}(X;\Xbar)\\
               \exp\big(n\big(R-I_{\Ptilde}(X;\Xbar)\big)/\rho\big) & R \ge I_{\Ptilde}(X;\Xbar).
               \end{array}
        \right.
        \end{equation}
    This follows from the fact that given $\Xv^{(m)}=\xv$,
    $N_m(\Ptilde_{X\Xbar})$ is the sum of $e^{nR}-1$ binary independent 
    random variables, 
        \begin{equation}
        U_{\mbar}\defeq\openone\Big\{(\xv,\Xv^{(\mbar)}) \in T^{n}(\Ptilde_{X\Xbar})\Big\}, \label{eq:EXP_Um}
        \end{equation}
    whose expectations are of the exponential order of $e^{-nI_{\Ptilde}(X;\Xbar)}$ (see \eqref{eq:EXP_Property1}).
    Furthermore, expanding \eqref{eq:EXP_dq} in terms of types and using the property in
    \eqref{eq:EXP_Property2}, we obtain
        \begin{align}
        e^{-n d_{q}(\Ptilde_{X\Xbar})} &\doteq \exp\bigg(-n \min_{P_{X \Xbar Y}\in\SetTcc(\Ptilde_{X\Xbar})} D\big(P_{X \Xbar Y}\|\Ptilde_{X\Xbar}\times W\big)\bigg) \\
                                                 &\defeq e^{-n D_{q}(\Ptilde_{X\Xbar})}. \label{eq:EXP_DefF}   
        \end{align}
    Upon taking into account all the possible empirical distributions
    $\{\Ptilde_{X\Xbar}\}$ in \eqref{eq:EXP_Um}, we obtain
        \begin{equation}
        \big(2A_n(R,\rho)\big)^{\rho} \doteq e^{-n\min\{E_1(R,\rho),E_2(R)\}}, \label{eq:EXP_AnExponent}
        \end{equation}
    where
        \begin{equation}
        E_{1}(R,\rho) \defeq \min_{\substack{\Ptilde_{X\Xbar}\in\SetScc(Q)\\I_{\Ptilde}(X;\Xbar)\ge R}} D_q(\Ptilde_{X\Xbar})+\rho\big(I_{\Ptilde}(X;\Xbar)-R\big) \label{eq:EXP_E1}
        \end{equation}
    and
        \begin{equation}
        E_{2}(R) = \min_{\substack{\Ptilde_{X\Xbar}\in\SetScc(Q)\\I_{\Ptilde}(X;\Xbar)\le R}} D_q(\Ptilde_{X\Xbar}) + I_{\Ptilde}(X;\Xbar) - R. \label{eq:EXP_E2}
        \end{equation}
    Combining \eqref{eq:EXP_ChainRule}, \eqref{eq:EXP_DefF} and \eqref{eq:EXP_E2}, 
    we see that $E_{2}(R)$ coincides with $\Eexcc$ in the form given in \eqref{eq:EXP_PrimalAlt}.
    It remains to show that $E_1(R,\rho)$, for the optimum choice of $\rho$, is
    never smaller than $E_2(R)$.  This can be seen by noting that
    since \eqref{eq:EXP_E1} contains the constraint $I_{\Ptilde}(X;\Xbar)\ge R$,
    the term multiplying $\rho$ in \eqref{eq:EXP_E1} is non-negative.
    Thus, the best choice of $\rho$ is to take the limit as $\rho\to\infty$, and
    hence the minimum in \eqref{eq:EXP_E1} is achieved by some $\Ptilde_{X\Xbar}$ 
    satisfying $I_{\Ptilde}(X;\Xbar)=R$.  Since this joint distribution also
    satisfies the constraints in \eqref{eq:EXP_E2}, we conclude that $E_{1} \ge E_{2}$,
    thus completing the proof.
\end{proof} 

While the exponents of Theorems \ref{thm:EXP_CC1} and \ref{thm:EXP_TypeEnum} coincide
for the constant-composition ensemble, the type enumeration approach can yield
strictly higher exponents for other codeword distributions; see Section 
\ref{sec:EXP_DISC_CMP} for details.  

\subsection{Derivation Using Distance Enumerators} \label{sec:EXP_DIST_ENUM}

In this subsection, we extend the preceding type enumeration analysis
to channels with infinite or continuous alphabets, and then discuss the
further extension to channels with memory. We make use of Theorem \ref{thm:EXP_LogBound},
and we assume that the technical assumption therein is satisfied
(see Appendix \ref{sec:EXP_TECH_COND} for discussion).  
We fix $s\ge0$ and make use of $d_s$ in \eqref{eq:EXP_ChernoffDist}
(or its counterpart for continuous outputs with an integral in place of the summation), as 
well as its multi-letter extension
    \begin{equation}
    d_{s}^{n}(\xv,\xvbar) \defeq \sum_{i=1}^{n} d_{s}(x_i,\xbar_i). \label{eq:EXP_ChernoffDMulti}
    \end{equation}

\begin{thm} \label{thm:EXP_General}
    Consider a memoryless (possibly continuous) channel, and fix any codeword distribution 
    $P_{\Xv}$ satisfying the assumption of Theorem \ref{thm:EXP_LogBound}.  The exponent
    \begin{equation}
        \Eex(R) \triangleq \EE\bigg[\inf_{D\,:\,R(D,\Xv) \le R} D + R(D,\Xv)-R\bigg] \label{eq:EXP_FinalExpGen}
    \end{equation} 
    is achievable for any function $R(D,\xv)$ such that 
    $\PP\big[d_{s}^{n}(\xv,\Xvbar) < nD] \,\,\dot{\le}\, e^{-nR(D,\xv)}$ uniformly in $\xv$,
    and such that $R(\cdot,\xv)$ is continuous for any given $\xv$.
\end{thm}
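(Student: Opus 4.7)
The plan is to start from Theorem~\ref{thm:EXP_LogBound} and carry out a continuous-alphabet analogue of the type-class enumeration argument used in the proof of Theorem~\ref{thm:EXP_TypeEnum}. Since the hypothesis of Theorem~\ref{thm:EXP_LogBound} is in force, we have $p_e(\Cc_n)\,\dot{\le}\,\exp\big(\EE[\log p_{e,m}(\Csf_n)]\big)$ along a sequence of expurgated codebooks of rate $R$, so it suffices to show $\EE[\log p_{e,m}(\Csf_n)]\le -n\Eex(R)+o(n)$. The union bound over competing messages combined with the Markov/Chernoff step with parameter $s\ge 0$ used in the derivation of $\rcux_{\rho,s}$ gives
\begin{equation*}
p_{e,m}(\Csf_n)\le\min\Big\{1,\sum_{\mbar\ne m}e^{-d_s^n(\Xv^{(m)},\Xv^{(\mbar)})}\Big\},
\end{equation*}
where the cap at $1$ is crucial, since it ensures $\log\min\{1,\cdot\}\le 0$.

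Conditioning on $\Xv^{(m)}=\xv$, I introduce the distance enumerator
\begin{equation*}
N(D,\xv)\defeq\big|\{\mbar\ne m:d_s^n(\xv,\Xv^{(\mbar)})/n\le D\}\big|,
\end{equation*}
which, for each fixed $D$, is a sum of $M_n-1$ i.i.d.\ Bernoulli variables with common mean of exponential order $e^{-nR(D,\xv)}$ by hypothesis. Discretizing $D$ on a grid of polynomial-in-$n$ size---admissible thanks to the assumed continuity of $R(\cdot,\xv)$---lets me write
\begin{equation*}
\sum_{\mbar\ne m}e^{-d_s^n(\xv,\Xv^{(\mbar)})}\doteq\max_{D}N(D,\xv)\,e^{-nD}.
\end{equation*}
A standard Chernoff bound on the Bernoulli tail then shows $N(D,\xv)\doteq e^{n(R-R(D,\xv))}$ with doubly-exponentially small failure probability when $R>R(D,\xv)$, and $N(D,\xv)=0$ with equally small failure probability when $R<R(D,\xv)$; the maximization therefore runs effectively over $\{D:R(D,\xv)\le R\}$, yielding, in probability,
\begin{equation*}
\tfrac{1}{n}\log p_{e,m}(\Csf_n)\big|_{\Xv^{(m)}=\xv}\,\dot{\le}\,-\inf_{D\,:\,R(D,\xv)\le R}\big(D+R(D,\xv)-R\big).
\end{equation*}
Averaging over $\xv\sim P_{\Xv}$ and substituting into Theorem~\ref{thm:EXP_LogBound} produces the claimed exponent $\Eex(R)$.

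The main obstacle is upgrading the high-probability statement above to one about $\EE[\log p_{e,m}(\Csf_n)]$, since $\log$ is unbounded below and could blow up on the rare event where some $N(D,\xv)$ is atypical (in particular, where it vanishes even though $R>R(D,\xv)$). This is precisely where Theorem~\ref{thm:EXP_LogBound}'s technical hypothesis pays off: it supplies the deterministic floor $p_{e,m}(\Csf_n)\ge e^{-nE(n)}$ with $E(n)\doteq1$, so that the logarithm is bounded below by $-nE(n)$, and multiplying by the doubly-exponentially small bad-event probability yields a vanishing contribution to the expectation. Uniformity in $\xv$ of the hypothesis on $R(D,\xv)$ then enables the final exchange of $\EE_\Xv$ with the enumerator analysis, and apart from the routine care required to discretize $D$, no further obstacle arises.
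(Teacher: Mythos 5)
Your overall skeleton matches the paper's: start from Theorem \ref{thm:EXP_LogBound}, pass to the pairwise sum $\sum_{\mbar\ne m}e^{-d_s^n(\Xv^{(m)},\Xv^{(\mbar)})}$, and analyze it with distance enumerators. But your middle section substitutes a genuinely different device: the paper works from \eqref{eq:EXP_JensenPe} and computes \emph{fractional moments} $\EE[N_m(k,\xv)^{1/\rho}]$ of the enumerators (then sends $\rho\to\infty$ and checks $E_1\ge E_2$), whereas you work from \eqref{eq:EXP_PropPe} and invoke concentration of $N(D,\xv)$. The moment route needs no concentration at all, which is why the paper gets by with only the assumed upper bound $\PP[d_s^n(\xv,\Xvbar)<nD]\,\dot{\le}\,e^{-nR(D,\xv)}$. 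Your version asserts two-sided behaviour ($N(D,\xv)\doteq e^{n(R-R(D,\xv))}$, and vanishing of $N$ when $R<R(D,\xv)$ with ``doubly-exponentially small'' failure probability); neither is available from the hypothesis, which only upper-bounds the pairwise probability, and the failure probabilities obtainable from Markov and the union bound are merely exponentially small. Only upper-tail control of $N(D,\xv)$ is in fact needed for an upper bound on $\EE[\log p_{e,m}]$, but then your ``main obstacle'' paragraph is aimed at the wrong tail: enumerators that vanish only make $\log p_{e,m}$ more negative, which cannot hurt an achievability bound, so the floor $p_{e,m}\ge e^{-E(n)}$ (note: \eqref{eq:EXP_TechCond} gives $e^{-E(n)}$, not $e^{-nE(n)}$) is not what rescues the argument. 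What actually has to be said is that the event where some enumerator is atypically large contributes at most $0$ to $\EE[\log\min\{1,\cdot\}]$ thanks to your cap at $1$, while the complementary good event has probability $1-o(1)$ uniformly in $\xv$; your write-up gestures at this but never supplies it, and supplies instead a patch for the harmless direction.

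The second gap is the discretization range, which is exactly where the technical condition and the paper's $[\cdot]^{+}$ device earn their keep. Under mismatch $d_s$ can be negative, and over continuous alphabets $d_s^n(\xv,\xvbar)/n$ need be bounded neither above nor below, so ``a grid of polynomial-in-$n$ size'' does not follow from continuity of $R(\cdot,\xv)$ alone. The paper caps each term as $e^{-[d_s^n]^{+}}$ (its footnote records that omitting this was an error in the conference version), so effective distances lie in $[0,E(n)]$: the upper end follows because Markov's inequality together with \eqref{eq:EXP_TechCond} forces $d_s^n(\xv,\xvbar)\le E(n)$ on the support of $P_{\Xv}$, giving subexponentially many slabs and making the max-over-slabs step legitimate. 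Your global $\min\{1,\cdot\}$ does not truncate individual terms, so your slab decomposition must still cope with arbitrarily negative $D$. This can be patched (e.g.\ pick $D_{\min}$ with $R(D_{\min},\xv)>R$ and fold the event of a closer codeword into the capped bad event, or simply adopt the per-term cap as the paper does), but as written it is a real hole rather than ``routine care,'' and it is precisely the point on which the paper is careful.
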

\begin{proof}
    We claim that \eqref{eq:EXP_JensenPe} implies the following analog of 
    \eqref{eq:EXP_TypeEnumStart} for a sequence of codebook $\Cc_n$ of rate approaching $R$:
        \begin{equation}
            p_{e}(\Cc_n) \,\,\dot{\le}\, \exp\Big(\rho\,\EE\big[\log A_{n}(R,\rho,\Xv^{(m)})\big]\Big), \label{eq:EXP_DistEnumStart2}
        \end{equation}
    where
        \begin{equation}
        A_{n}(R,\rho,\Xv^{(m)}) \defeq \EE\Bigg[\bigg(\sum_{\mbar\ne m}e^{[-d_{s}^{n}(\Xv^{(m)},\Xv^{(\mbar)})]^{+}}\bigg)^{1/\rho}\,\bigg|\,\Xv^{(m)}\Bigg]. \label{eq:EXP_DistEnumStart}
        \end{equation}
    In the absence of the $[\cdot]^{+}$ function in the exponent, this follows
    directly from the union bound and Markov's inequality, similarly to
    the proof of Theorem \ref{thm:EXP_Finite}. The introduction of
    the $[\cdot]^{+}$ function corresponds to instead taking the better of Markov's 
    inequality and the trivial bound $\PP[\cdot] \le 1$.\footnote{This analysis corrects an error in the conference version of this work \cite{PaperIZS}, where the $[\cdot]^{+}$ function was omitted.  This omission does not affect the analysis for ML decoding, since the Bhattacharyya distance is non-negative.  However, in general, the function $d_s(\cdot,\cdot)$ may be negative.}

    For a fixed transmitted codeword $\Xv^{(m)}=\xv$, we analyze $A_{n}(R,\rho,\xv)$
    using \emph{distance enumerators}:
        \begin{equation}
        \sum_{\mbar\ne m}e^{-[d_{s}^{n}(\xv,\Xv^{(\mbar)})]^{+}} \le \sum_{k=0}^{\infty}e^{-nk\delta}N_{m}(k,\xv),
        \end{equation}
    where $\delta>0$ is arbitrary, and
        \begin{align}
        N_{m}(0,\xv) &\defeq \sum_{\mbar\ne m} \openone\big\{d_{s}^{n}(\xv,\Xv^{(\mbar)}) < n\delta \big\}  \\
        N_{m}(k,\xv) &\defeq \sum_{\mbar\ne m} \openone\big\{nk\delta \le d_{s}^{n}(\xv,\Xv^{(\mbar)}) < n(k+1)\delta \big\} \quad (k\ge1). \label{eq:EXP_GEN_Nm}
        \end{align} 
        
    Using Markov's inequality, we can upper-bound the left-hand side of \eqref{eq:EXP_TechCond} 
    by $e^{-d_{s}^{n}(\xv,\xvbar)}$.  It thus follows from the assumption of Theorem \ref{thm:EXP_LogBound}
    that the highest value of $k$,
        \begin{equation}
        k_{\mathrm{max}}(n) \defeq \max_{\xv \,:\, P_{\Xv}(\xv)>0}\max\big\{k\,:\,\PP\big[N_{m}(k,\xv) > 0\big] \ne 0\big\}, \label{eq:EXP_kmax}
        \end{equation}
    grows subexponentially in $n$ for all $s\ge0$.  
    Thus, analogously to \eqref{eq:EXP_AnRhoBound}, the quantity $A_{n}(R,\rho,\xv)$ 
    defined in \eqref{eq:EXP_DistEnumStart} satisfies
        \begin{equation}
        A_{n}(R,\rho,\xv)^{\rho} \,\,\dot{\le}\, \max_{k\ge0}\Big(\EE\big[N_{m}(k,\xv)^{1/\rho}\big]\Big)^{\rho}e^{-nk\delta}. 
        \end{equation}
    We further upper bound this expression by removing the lower inequality in the indicator function in 
    \eqref{eq:EXP_GEN_Nm}. The key issue is now to assess the exponential rate of decay of the binary random variable
        \begin{equation}
            U_{\mbar}(\xv) \defeq \openone\big\{d_{s}^{n}(\xv,\Xv^{(\mbar)}) < n(k+1)\delta \big\}
        \end{equation}
    for a given transmitted codeword $\xv$, i.e. to find the exponent of 
    $\PP\big[d_{s}^{n}(\xv,\Xvbar) < nD]$.  This can be done
    using standard large deviations techniques such as the Chernoff bound.
    Letting $R(D,\xv)$ be as defined in the theorem statement, we have 
    similarly to \eqref{eq:EXP_AnExponent} that
        \begin{equation}
        A_{n}(R,\rho,\xv)^{\rho} \,\,\dot{\le}\, e^{-n\min\{E_1(R,\rho,\delta,\xv),E_2(R,\delta,\xv)\}}, \label{eq:EXP_AnExponent2}
        \end{equation}
    where
        \begin{align}
            E_{1}(R,\rho,\delta,\xv) &\defeq \min_{k\,:\,R((k+1)\delta,\xv) \ge R} k\delta + \rho\big(R((k+1)\delta,\xv)-R\big) \\
            E_{2}(R,\delta,\xv) &\defeq \min_{k\,:\,R((k+1)\delta,\xv) \le R} k\delta + R((k+1)\delta,\xv)-R.
        \end{align}
    Upon taking the limit $\delta\to0$ and using the assumption that $R(\cdot,\xv)$ is lower semicontinuous, these become
        \begin{align}
            E_{1}(R,\rho,\xv) &\defeq \inf_{D\,:\,R(D,\xv) \ge R} D + \rho\big(R(D,\xv)-R\big) \\
            E_{2}(R,\xv) &\defeq \inf_{D\,:\,R(D,\xv) \le R} D + R(D,\xv)-R. \label{eq:EXP_E2Gen}
        \end{align}
    Analogously to Section \ref{sec:EXP_TYPE_ENUM}, the optimal choice of $\rho$ is
    in the limit as $\rho\to\infty$, and we obtain $E_{2} \le E_{1}$, and hence
        \begin{equation}
        \inf_{\rho\ge0} A_{n}(R,\rho,\xv)^{\rho} \,\,\dot{\le}\, e^{-n E_2(R,\xv)}. \label{eq:EXP_AnExponent3}
        \end{equation}
    Substituting \eqref{eq:EXP_AnExponent3} into \eqref{eq:EXP_DistEnumStart2}, we obtain
    $p_{e}(\Cc) \,\,\dot{\le}\, e^{-n\EE[E_2(R,\Xv)]}$,
    thus yielding \eqref{eq:EXP_FinalExpGen}.
\end{proof}

After a suitable modification of the definition of $d_{s}^{n}(\xv,\xvbar)$, \eqref{eq:EXP_FinalExpGen}
extends immediately to more general channels and metrics (e.g. channels with memory).
The ability to simplify the exponent (e.g. to a single-letter expression) depends
on the form of $R(D,\xv)$, which in turn depends strongly on the codeword distribution $P_{\Xv}$.
In some cases, $P_{\Xv}$ can be chosen in such a way that $R(D,\xv)$ is the same for
all $\xv$ with $P_{\Xv}(\xv)>0$, thus greatly simplifying \eqref{eq:EXP_FinalExpGen}.

In Appendix \ref{sec:EXP_GEN_ANALYSIS}, we particularize Theorem \ref{thm:EXP_General}
to the cost-constrained ensemble with a single auxiliary cost $a_{1}(x)=a(x)$, 
and show that after optimizing $a(\cdot)$, \eqref{eq:EXP_FinalExpGen} yields 
the exponent $\Eexcc(Q,R)$ in \eqref{eq:EXP_Eex_CC}.  
In accordance with Proposition \ref{prop:EXP_SubExp}, 
we require the auxiliary cost $a(\cdot)$ to satisfy $\EE_Q[a(X)^2]<\infty$.

\subsection{Comparison of Techniques} \label{sec:EXP_DISC_CMP}

For the constant-composition codeword distribution, the
approaches of Sections \ref{sec:EXP_DISC_SIMPLE} and \ref{sec:EXP_TYPE_ENUM}
led to the same exponent, namely $\Eexcc$.  It should be noted, however, 
that the type enumeration approach can yield a strictly higher 
exponent than that of $\rcux_{\rho}$ in Theorem \ref{thm:EXP_Finite} for some codeword distributions.  
Here we discuss the simple example of the i.i.d. distribution in \eqref{eq:EXP_Px_IID}.  
Applying properties of types in the same way in Section \ref{sec:EXP_DISC_SIMPLE}, 
it is easily verified that the exponent of $\rcux_{\rho}$  is
    \begin{equation}
    \min_{\substack{P_{X\Xbar Y} \,:\, D(P_{X\Xbar}\|Q \times Q) \le R, \\ \EE_{P}[\log q(\Xbar,Y)]\ge\EE_{P}[\log q(X,Y)]}}
    D(P_{X\Xbar Y}\|Q\times Q\times W)-R.  \label{eq:EXP_PrimalIID}
    \end{equation}
On the other hand, the analysis of Section \ref{sec:EXP_TYPE_ENUM} yields
an exponent of the same form as \eqref{eq:EXP_PrimalIID} with an additional 
constraint $P_{X}=Q$ in the minimization.  To see this, we note that the 
quantity $N_{m}(\Ptilde_{X\Xbar})$ defined in \eqref{eq:EXP_Nm} satisfies
    \begin{align}
    \EE\Big[N_m(\Ptilde_{X\Xbar})^{1/\rho}\Big] &= \PP\big[\Xv^{(m)}\in T^{n}(\Ptilde_{X})\big]\EE\Big[N_m(\Ptilde_{X\Xbar})^{1/\rho}\,\Big|\,\Xv^{(m)}\in T^{n}(\Ptilde_{X})\Big] \\
    &\doteq \left\{\begin{array}{ll}
            \exp\big(-nD(\Ptilde_{X}\|Q)\big)\cdot\exp\big(n\big(R-D(\Ptilde_{X\Xbar}\|\Ptilde_{X} \times Q)\big)\big) & R < I_{\Ptilde}(X;\Xbar)\\
            \exp\big(-nD(\Ptilde_{X}\|Q)\big)\cdot\exp\big(n\big(R-D(\Ptilde_{X\Xbar}\|\Ptilde_{X} \times Q)\big)/\rho\big) & R \ge I_{\Ptilde}(X;\Xbar).
            \end{array}
            \right.
    \end{align}
The additional factor $\exp\big(-nD(\Ptilde_{X}\|Q)\big)$ leads to an additive
$\rho D(\Ptilde_{X}\|Q)$ term in the exponent $E_{2}$ in \eqref{eq:EXP_E2}.
The optimal choice of $\rho$ is again in the limit as $\rho\to\infty$, and under
this choice the minimizing $\Ptilde_{X\Xbar}$ must satisfy $\Ptilde_{X}=Q$ so that
the divergence is forced to zero.

Depending on the channel, metric and input distribution, adding the constraint 
$P_X=Q$ to \eqref{eq:EXP_PrimalIID} may yield a strict improvement in the exponent. 
Since both derivations are exponentially tight from the step at which they start,
we conclude that the weakness of the simpler derivation is in the inequality in 
\eqref{eq:EXP_FiniteStep2}, or more precisely, the use of \eqref{eq:EXP_GalEq}.  While this
step simplifies the derivations, the above example shows that it is
not exponentially tight in general.

Another approach to recovering the constraint $\Ptilde_{X}=Q$ in the above
example is to follow the steps of Theorem \ref{thm:EXP_Finite} and Section \ref{sec:EXP_DISC_SIMPLE}
starting with Theorem \ref{thm:EXP_LogBound}.  Since the expectation of the transmitted codeword is outside
the logarithm in \eqref{eq:EXP_JensenPe}, we obtain the constraint 
$\Ptilde_{X}=Q$ in the final minimization using the fact that the empirical distribution
of $\Xv$ is close to $Q$ with high probability.  
We conclude that the inequality in \eqref{eq:EXP_GalEq} is exponentially tight 
for the i.i.d. ensemble when we start with \eqref{eq:EXP_JensenPe}, even  
though it is not tight when we start with \eqref{eq:EXP_GalBound}. 

We have provided two derivations of $\Eexcc$ using the cost-constrained 
ensemble, namely, those in Sections \ref{sec:EXP_DISC_SIMPLE} and 
\ref{sec:EXP_DIST_ENUM} (along with Appendix \ref{sec:EXP_GEN_ANALYSIS}).  
A notable difference between the derivations is
the method for ensuring that the average over $x$ is outside the 
logarithm in \eqref{eq:EXP_Ex_CC}, which is desirable due to Jensen's 
inequality.  In Theorem \ref{thm:EXP_Dual}, the expectation is
inside the logarithm, but the desired result is obtained by choosing  
$a_{2}(x)$ to make Jensen's inequality hold with equality.
On the other hand, in Appendix \ref{sec:EXP_GEN_ANALYSIS} the expectation arises outside the
logarithm even in the case that $L=1$.  

Provided that the assumption of Theorem \ref{thm:EXP_LogBound} is
met, we can combine the two approaches and apply the techniques
of Theorem \ref{thm:EXP_Finite} and Section \ref{sec:EXP_DISC_SIMPLE} to
\eqref{eq:EXP_JensenPe}, in which case $\Excost$ in \eqref{eq:EXP_Ex_Cost}
is improved to
    \begin{equation} 
    \Excoststar(Q,R,\{a_{l}\}) \defeq \sup_{s\ge0,\{\rbar_{l}\}}-\rho\sum_{x}Q(x)\log\sum_{\xbar}Q(\xbar)\Bigg(\sum_{y}W(y|x)\bigg(\frac{q(\xbar,y)}{q(x,y)}\bigg)^{s}e^{\sum_{l=1}^{L}\rbar_{l}(a_{l}(\xbar)-\phi_{l})}\Bigg)^{1/\rho}, \label{eq:EXP_Ex_Cost2}
    \end{equation}
where the outer-most summation arises using Proposition \ref{prop:EXP_LimitQ}
in Appendix \ref{sec:EXP_GEN_ANALYSIS}.
This exponent can also be derived by extending the analysis of Appendix 
\ref{sec:EXP_GEN_ANALYSIS} to include multiple auxiliary costs.

In the case that $L=0$ (i.e. i.i.d. coding), the Lagrange duality techniques
of Theorem \ref{thm:EXP_Duality} reveal that \eqref{eq:EXP_Ex_Cost2} is in fact identical to 
\eqref{eq:EXP_PrimalIID} with the added constraint $P_X=Q$. 
 That is, the additional constraint $P_X=Q$ in the primal expression 
 corresponds to an average over $x$ outside the logarithm in the dual expression.

\subsection{Connections with Statistical Mechanics} \label{sec:EXP_STAT_MECH}

It is instructive to look at the analysis of Sections \ref{sec:EXP_TYPE_ENUM} and \ref{sec:EXP_DIST_ENUM}
from the statistical-mechanical perspective. Let us take another look at the expression
    \begin{equation}
    Z(\xv)=\sum_{\mbar\ne m} e^{-d(\xv,\Xv^{(\mbar)})}, \label{REM}
    \end{equation}
where $d$ can represent either $d_{q}$ (see \eqref{eq:EXP_dq}) or $[d_{s}^{n}]^{+}$ 
(see \eqref{eq:EXP_ChernoffDMulti}).  
From the viewpoint of statistical physics, $Z$ can be interpreted as the
partition function of a physical system, where for a fixed $\xv^{(m)}=\xv$, the
various configurations (microstates) are $\{\xv^{(\mbar)}\}_{\mbar \ne m}$ and
the energy function (Hamiltonian) is given by $d(\xv,\xvbar)$. 
The various ``configurational energies'' $\{d(\xv,\Xv^{(\mbar)})\}$ are 
independent random variables, since the codewords are generated independently. 
As explained in \cite[Ch. 5-6]{PhysicsMezard} (see also \cite[Ch. 6-7]{MerhavPhysics} and
references therein), this setting is analogous to the random energy model (REM) in
the literature of statistical physics of magnetic materials. The REM
was invented by Derrida \cite{Derrida80a,Derrida80b,Derrida81} 
as a model of extremely disordered spin glasses. This model is exactly solvable and exhibits a
phase transition: Below a certain critical temperature, the partition function
becomes dominated by a subexponential number of configurations in the ground-state
energy, which means that the system freezes and its
entropy vanishes in the thermodynamic limit. This combination of freezing and
disorder resembles the behavior of a glass, so this low temperature
phase of zero entropy is called the {\it glassy phase}.
Above the critical temperature, the partition function is dominated by an
exponential number of configurations, so its entropy is positive. This
high temperature phase is called the {\it paramagnetic phase.}

In the case that $P_{\Xv}$ is the constant-composition distribution in \eqref{eq:EXP_Px_CC}
and $d(\cdot,\cdot)$ represents $[d_s^n(\cdot,\cdot)]^{+}$, we can link 
these phases to the exponent $\Eexcc$ in the form given in \eqref{eq:EXP_Eex_RD}.  
The graph of $\Eexcc(Q,R,s)$ is curved at rates below $R_s$ (see \eqref{eq:EXP_Rs}), 
and is a straight line at rates above $R_s$. The curved part
corresponds to the glassy phase of the REM associated with
\eqref{REM}, because the dominant contribution to $\EE[Z(\xv)^{1/\rho}]$ (see \eqref{REM})
is due to a subexponential number of codewords
whose ``distance'' from $\xv$ (i.e. their ``energy'') is roughly $nD_s(Q,R)$. The
straight-line part, on the other hand, corresponds to the paramagnetic phase, where
roughly $e^{n(R-R_s)}$ incorrect codewords at distance $nD_s(Q,R_s)$ dominate the
behavior. Thus, the passage between the curved part and the straight-line
part at $R=R_s$ can be interpreted as a glassy phase transition.  A similar
discussion applies for the multi-letter distance $d_q$ used in 
Section \ref{sec:EXP_TYPE_ENUM}, with $D_{s}(Q,R)$ replaced by
    \begin{equation}
    D_{q}(Q,R) \defeq \min_{\Ptilde_{X\Xbar}\in\SetScc(Q) \,:\, I_{\Ptilde}(X;\Xbar)\le R} D_{q}(\Ptilde_{X\Xbar}),
    \end{equation}
where $D_{q}(\Ptilde_{X\Xbar})$ is defined in \eqref{eq:EXP_DefF}.

\section{Prefactor to the i.i.d. Expurgated Exponent} \label{sec:EXP_PREFACTOR}

Error exponents characterize the rate of decay of the error probability as the
block length increases.  At finite block lengths, the effect of the
subexponential prefactor can be significant, and it is therefore of interest
to characterize its behavior.  There exist several works studying this 
prefactor for the random-coding exponent \cite{TwoChannels,Dobrushin,RefinementJournal,JournalSU}
and the sphere-packing exponent \cite{TwoChannels,Dobrushin,RefinementSP}.
In this section, we characterize the prefactor for the i.i.d. expurgated exponent.
We will see that, under some technical conditions, the prefactor to $\rcux_{\rho}$ 
in \eqref{eq:EXP_RCX} behaves as $O\big(\frac{1}{\sqrt n}\big)$, thus improving
on Gallager's $O(1)$ prefactor. Our analysis builds on that of \cite{JournalSU,PaperRefinement}. 

%%% Preliminary Lemmas %%%
\subsection{Preliminary Definitions} \label{sec:EXP_PRELIMINARY_LEMMAS}

We define the sets 
    \begin{align}
    \Yc_1(x,\xbar) & \defeq \Big\{ y \,:\, W(y|x)W(y|\xbar)>0  \Big\} \label{eq:EXP_SetY1} \\
    \Ac(Q) & \defeq \bigg\{ (x,\xbar) \,:\, Q(x)Q(\xbar)>0, \, \frac{q(\xbar,y)}{q(x,y)} \ne \frac{q(\xbar,y^\prime)}{q(x,y^\prime)} \text{ for some } y,y^\prime \in \Yc_1(x,\xbar) \bigg\} \label{eq:EXP_SetA}
    \end{align}
and make the following technical assumptions:
    \begin{equation}
    q(x,y) = 0 \iff W(y|x) = 0 \label{eq:EXP_Assumption1}
    \end{equation}
    \begin{equation}
    \Ac(Q) \ne \emptyset. \label{eq:EXP_Assumption2}
    \end{equation}
In the case that $q(x,y)=W(y|x)$ (i.e. ML decoding), \eqref{eq:EXP_Assumption1} 
is trivial, and \eqref{eq:EXP_Assumption2} reduces to the \emph{non-singularity}
assumption of \cite{RefinementJournal}.  A notable example where this condition
fails is the binary erasure channel (BEC) with $Q=\big(\frac{1}{2},\frac{1}{2}\big)$.

We write
    \begin{equation}
    \Exiid(Q,\rho,s) \defeq -\rho\log\sum_{x,\xbar}Q(x)Q(\xbar)\Bigg(\sum_{y}W(y|x)\bigg(\frac{q(\xbar,y)}{q(x,y)}\bigg)^{s}\Bigg)^{1/\rho} \label{eq:EXP_Ex_IID_s}
    \end{equation} 
to denote the objective in \eqref{eq:EXP_Ex_IID} with a fixed value of $s$.
We define the tiled distribution
    \begin{align}
        V_{s}(y|x,\xbar) & \defeq \frac{W(y|x)\Big(\frac{q(\xbar,y)}{q(x,y)}\Big)^s}{\sum_{y^\prime}W(y^\prime|x)\Big(\frac{q(\xbar,y^\prime)}{q(x,y^\prime)}\Big)^s} \\
        V_{s}^{n}(\yv|\xv,\xvbar) & \defeq \prod_{i=1}^{n}V_{s}(y_i|x_i,\xbar_i),
    \end{align}
and the generalized information density
    \begin{align}
        j_{s}(x,\xbar,y) &\defeq \log\frac{V_{s}(y|x,\xbar)}{W(y|x)} \label{eq:EXP_jSingle} \\
        j_{s}^{n}(\xv,\xvbar,\yv) &\defeq \sum_{i=1}^{n}j_{s}(x_i,\xbar_i,y_i). \label{eq:EXP_jMulti} 
    \end{align}
Furthermore, we define the joint tilted distribution
    \begin{equation}
    P_{\rho,s}^{*}(x,\xbar) = \frac{Q(x)Q(\xbar)\Big(\sum_{y}W(y|x)\Big(\frac{q(\xbar,y)}{q(x,y)}\Big)^s\Big)^{1/\rho}}{\sum_{x^\prime,\xbar^\prime}Q(x^\prime)Q(\xbar^\prime)\Big(\sum_{y^\prime}W(y^\prime|x^\prime)\Big(\frac{q(\xbar^\prime,y^\prime)}{q(x^\prime,y^\prime)}\Big)^s\Big)^{1/\rho}}, \label{eq:EXP_PXX*}
    \end{equation}
and the conditional variance
    \begin{equation}
    c_0(Q,\rho,s) \triangleq \EE\Big[ \var\big[j_s(X^{*}_s,\Xbar^{*}_s,Y_s^{*}) \big| X^{*}_s,\Xbar^{*}_s \big] \Big], \label{eq:EXP_c0}
    \end{equation}
where $(X_s^*,\Xbar_s^*,Y_s^*) \sim P_{\rho,s}^{*}(x,\xbar)V_s(y|x,\xbar)$.
The arguments to $c_0$ will henceforth be omitted, since their values will be 
understood from the context.

Writing $Y_s \sim V_s(\cdot|x,\xbar)$, the following arguments show that the assumptions in 
\eqref{eq:EXP_Assumption1}--\eqref{eq:EXP_Assumption2} imply that $c_0>0$ whenever $s>0$:
    \begin{align}
    \var[j_{s}(x,\xbar,Y_{s})]=0 & \iff j_{s}(x,\xbar,y)\text{ is independent of } y \text{ wherever }V_{s}(y|x,\xbar)>0 \label{eq:EXP_Var0_0} \\
     & \iff \frac{q(\xbar,y)}{q(x,y)}\text{ is independent of } y \text{ wherever }W(y|x)q(\xbar,y)>0 \label{eq:EXP_Var0_1} \\
     & \iff (x,\xbar) \notin \Ac(Q), \label{eq:EXP_Var0_2}
    \end{align}
where \eqref{eq:EXP_Var0_1} follows from the definitions of $j_s$ and $V_s$, and \eqref{eq:EXP_Var0_2} follows
from the assumption in \eqref{eq:EXP_Assumption1} and the definition of $\Ac(Q)$. Using the assumption 
in \eqref{eq:EXP_Assumption2}, it follows that $c_0>0$.

Finally, we define the set
\begin{equation}
    \Ic_s \triangleq \Big\{ j_s(x,x,y) \,:\, W(y|x)>0, (x,\xbar)\in\Ac(Q) \Big\}
\end{equation}
and the constant
\begin{equation}
    \psi_{s} \triangleq 
        \begin{cases}
            1 & \Ic_s\text{ does not lie on a lattice} \\
            \frac{\hover}{1-e^{-\hover}} & \Ic_s\text{ lies on a lattice with span }\hover.
        \end{cases} \label{eq:SA_Psi_s}
\end{equation}

%%% Statement of the Result %%%
\subsection{Statement of the Result}

\begin{thm} \label{thm:EXP_Prefactor}
    Fix any DMC $W$, decoding metric $q$ and input distribution $Q$ satisfying 
    \eqref{eq:EXP_Assumption1}--\eqref{eq:EXP_Assumption2}.  For any $R>0$, $\rho\ge1$ and
    $s>0$, there exists a sequence of codebooks $\Cc_n$ with $M \ge e^{nR}$ codewords 
    whose maximal error probability satisfies
        \begin{equation}
        p_e(\Cc_n) \le \frac{4^\rho \psi_s}{\sqrt{2\pi nc_0}}e^{-n(\Exiid(Q,\rho,s)-\rho R)}\big(1+o(1)\big) \label{eq:EXP_Prefactor}
        \end{equation}
\end{thm}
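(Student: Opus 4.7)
The plan is to invoke Theorem~\ref{thm:EXP_Finite} with $\rcux_\rho$ and $M' = 2M-1$ under the i.i.d.\ law $P_{\Xv} = Q^n$, yielding
\begin{equation}
p_e(\Cc_n) \le 4^\rho(M-1)^\rho B_n^\rho, \qquad B_n \defeq \EE_{(Q\times Q)^n}\big[p_n(\Xv,\Xvbar)^{1/\rho}\big],
\end{equation}
where $p_n(\xv,\xvbar) \defeq \PP[q^n(\Xvbar,\Yv)\ge q^n(\Xv,\Yv)\mid\Xv=\xv,\Xvbar=\xvbar]$. The naive product-of-MGF bound already gives $B_n^\rho \le e^{-n\Exiid(Q,\rho,s)}$ with an $O(1)$ prefactor, so the entire task reduces to sharpening this to $B_n^\rho \le \frac{\psi_s}{\sqrt{2\pi n c_0}}e^{-n\Exiid(Q,\rho,s)}(1+o(1))$.

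The core technical step is a two-level exponential tilting. First, the identity $W(y|x) = V_s(y|x,\xbar)g_s(x,\xbar)(q(\xbar,y)/q(x,y))^{-s}$ yields
\begin{equation}
p_n(\xv,\xvbar) = \prod_{i=1}^n g_s(x_i,\xbar_i)\cdot\EE_{V_s^n(\cdot|\xv,\xvbar)}\big[e^{-sS_n}\openone\{S_n\ge 0\}\big]
\end{equation}
with $S_n = \sum_i \log(q(\Xbar_i,Y_i)/q(X_i,Y_i))$. Second, the identity $Q(x)Q(\xbar)g_s(x,\xbar)^{1/\rho} = e^{-\Exiid(Q,\rho,s)/\rho}\,P^*_{\rho,s}(x,\xbar)$ absorbs the extra factor $\prod_i g_s(x_i,\xbar_i)^{1/\rho}$ into the outer i.i.d.\ law. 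Raising $B_n$ to the $\rho$-th power and applying Jensen's inequality to the convex map $t\mapsto t^\rho$ (valid for $\rho\ge 1$) pushes the power inside the outer expectation to give
\begin{equation}
B_n^\rho \le e^{-n\Exiid(Q,\rho,s)}\cdot\EE_{(P^*_{\rho,s})^n\times V_s^n}\big[e^{-sS_n}\openone\{S_n\ge 0\}\big].
\end{equation}

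The remaining task is to bound the residual expectation by $\frac{\psi_s}{\sqrt{2\pi n c_0}}(1+o(1))$ via a local central limit theorem. Under the product tilted law, conditioning on $(\Xv^*,\Xvbar^*)$ makes $sS_n$ a sum of independent variables whose total variance $s^2\sum_i\sigma_s^2(X_i^*,\Xbar_i^*)$ concentrates by LLN on $nc_0$ in view of definition \eqref{eq:EXP_c0}; the non-singularity condition \eqref{eq:EXP_Assumption2} guarantees $c_0 > 0$ via the derivation \eqref{eq:EXP_Var0_0}--\eqref{eq:EXP_Var0_2}. A direct computation shows that the conditional-mean LLN limit vanishes exactly at the maximizer of $\Exiid(Q,\rho,\cdot)$. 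In the non-lattice case, an Edgeworth expansion reduces the inner expectation to a half-line Gaussian integral which, after the substitution $u = sv$, evaluates asymptotically to $\frac{1}{\sqrt{2\pi n c_0}}(1+o(1))$. In the lattice case, the integral is replaced by a geometric sum along the lattice of $j_s$ with span $\hover$, producing the correction factor $\hover/(1-e^{-\hover}) = \psi_s$ of \eqref{eq:SA_Psi_s}. Absorbing $(M-1)^\rho \le e^{n\rho R}(1+o(1))$ completes the proof.

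The main obstacle is establishing the local CLT / Edgeworth estimate with enough uniformity over the random conditioning $(\Xv^*,\Xvbar^*)$, controlling the $o(1)$ error terms, and correctly handling the lattice versus non-lattice dichotomy. A secondary subtlety is that for $s$ away from the maximizer of $\Exiid(Q,\rho,\cdot)$ the conditional mean is non-zero and the Gaussian integral picks up an extra exponentially small factor; this only tightens the bound, but the argument must verify that $\psi_s/\sqrt{2\pi nc_0}$ remains a valid upper constant for every $s > 0$.
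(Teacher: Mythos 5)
Your argument is sound and it reproduces exactly the constants in \eqref{eq:EXP_Prefactor}; its skeleton (the $\rcux_\rho$ bound with $P_{\Xv}=Q^n$, tilting $W^n$ to $V_s^n$, the tilted pair law $P^*_{\rho,s}$, a strong large-deviations/local-limit estimate with the lattice dichotomy, and $c_0>0$ via \eqref{eq:EXP_Assumption1}--\eqref{eq:EXP_Assumption2}) is the same as the paper's. Where you genuinely depart is in disposing of the outer expectation over $(\Xv,\Xvbar)$: the paper keeps the $1/\rho$ power on each pair's probability, splits $Q^n\times Q^n$ into $\Fc^n_{\rho,s}(\delta)$ and its complement, and needs part (2) of Lemma \ref{lem:EXP_TechnicalLemma} (a KKT/uniqueness comparison of two exponents) to show the atypical pairs are negligible \emph{relative to} the tilted sum; you instead use the exact identity $Q(x)Q(\xbar)g_s(x,\xbar)^{1/\rho}=e^{-\Exiid(Q,\rho,s)/\rho}P^*_{\rho,s}(x,\xbar)$ (with $g_s(x,\xbar)=\sum_y W(y|x)(q(\xbar,y)/q(x,y))^{s}$) together with concavity of $t\mapsto t^{1/\rho}$ to collapse the whole bound into $B_n^\rho\le e^{-n\Exiid(Q,\rho,s)}\,\EE_{(P^*_{\rho,s})^n\times V_s^n}\big[e^{-sS_n}\openone\{S_n\ge0\}\big]$, after which typicality lives under the product measure $(P^*_{\rho,s})^n$, where the atypical contribution is handled in one line because the integrand is at most $1$. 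This is a real simplification (Lemma \ref{lem:EXP_TechnicalLemma}(2) and its KKT analysis disappear), and the Jensen step costs nothing at the order claimed since the conditional quantity concentrates at $\psi_s/\sqrt{2\pi nc_0}$. Three remarks to finish it cleanly: (i) ``concentrates by LLN'' is not quite enough---the main term is $\Theta(1/\sqrt{n})$, so you need the atypical probability under $(P^*_{\rho,s})^n$ to be $o(1/\sqrt{n})$, which for finite alphabets is immediate from Hoeffding/Sanov; (ii) your discussion of the conditional mean vanishing at the maximizing $s$ is unnecessary, because the estimate you need is uniform in the threshold---this is precisely Lemma \ref{lem:REF_Lem20}, imported from \cite{JournalSU}, so no case distinction in $s$ and no fresh Edgeworth analysis (your ``main obstacle,'' uniformity over the random conditioning, is exactly what that lemma supplies); and (iii) when applying it conditionally on a typical $(\xv,\xvbar)$, write $e^{-sS_n}\openone\{S_n\ge0\}=g_s^n(\xv,\xvbar)^{-1}e^{-j_s^n}\openone\{j_s^n\ge-\log g_s^n(\xv,\xvbar)\}$ and apply the lemma to the variables $j_s(x_i,\xbar_i,Y_{s,i})$, so that the variance lower bound is $\sum_i\var[j_s(x_i,\xbar_i,Y_{s,i})]\ge n(c_0-r(\delta))$ as in \eqref{eq:EXP_VarLB} and the lattice constant is exactly $\psi_s$ of \eqref{eq:SA_Psi_s}; then let $\delta\to0$ as in the paper.
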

\begin{proof}
    See Section \ref{sec:EXP_PREFACTOR_PROOF}.
\end{proof}

It is interesting to note that under ML coding and any rate where the expurgated 
exponent and random-coding exponent coincide (i.e. $\rho=1$ in both cases), 
Theorem \ref{thm:EXP_Prefactor} gives the same prefactor growth rate as 
that of the random-coding exponent \cite{RefinementJournal,JournalSU}.  
There is an extra factor of four in \eqref{eq:EXP_Prefactor},
which can be attributed to the fact that Theorem \ref{thm:EXP_Prefactor} considers
the maximal error rather than the average error. Of course, Theorem \ref{thm:EXP_Prefactor} is
primarily of interest at low rates, where the expurgated exponent exceeds the random-coding exponent.

\subsection{Proof of Theorem \ref{thm:EXP_Prefactor}} \label{sec:EXP_PREFACTOR_PROOF}

The proof makes use of two technical lemmas.  The first is a strong
large deviations result which was proved in \cite{JournalSU}, building
upon the analysis in the proof of \cite[Lemma 47]{Finite}.

\begin{lem} \label{lem:REF_Lem20}
    {\em \cite[Lemma 1]{JournalSU}}
    Fix $K>0$, and for each $n$, let $(n_1,\cdots,n_K)$ be integers such that $\sum_{k}n_k=n$.
    Fix the PMFs $Q_1,\cdots,Q_K$ on
    a finite subset of $\RR$, and let $\sigma_1^2,\cdots,\sigma_K^2$ be the corresponding
    variances.  Let $Z_{1},\cdots,Z_{n}$ be independent random variables, $n_k$ of which
    are distributed according to $Q_k$ for each $k$.  Suppose that 
    $\min_{k}\sigma_k > 0$ and $\min_{k}n_k = \Theta(n)$.  Defining
    \begin{align}
        \Ic_0  &\triangleq \bigcup_{k \,:\, \sigma_{k} > 0}\big\{ z \,:\, Q_k(z)>0 \big\} \label{eq:SA_I0} \\
        \psi_0 &\triangleq 
            \begin{cases}
                1 & \Ic_0\text{ \em does not lie on a lattice} \\
                \frac{h_0}{1-e^{-h_0}} & \Ic_0\text{ \em lies on a lattice with span }h_0,
            \end{cases} \label{eq:SA_Psi_0}
    \end{align}
    the summation $S_n\triangleq\sum_{i}Z_i$ satisfies the 
    following uniformly in $t$:
    \begin{equation}
        \EE\Big[e^{-S_n}\emph{\openone}\big\{S_n \ge t\big\}\Big] \le e^{-t}\bigg(\frac{\psi_0}{\sqrt{2\pi V_n}} + o\Big(\frac{1}{\sqrt{n}}\Big)\bigg), \label{eq:SU_Lemma20}
    \end{equation}
    where $V_n \triangleq \var[S_n]$.
\end{lem}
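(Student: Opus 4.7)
The plan is to factor out $e^{-t}$ from the expectation and reduce the problem to a uniform local density or point-mass estimate for $S_n$, provided by a local central limit theorem (LCLT) tailored to the ``groupwise i.i.d.'' structure of $Z_1,\dotsc,Z_n$.

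First, I would write the expectation as a Stieltjes integral (or sum, in the lattice case) over the support of $S_n$ above $t$:
\begin{equation}
\EE\big[e^{-S_n}\openone\{S_n\ge t\}\big] = \int_{[t,\infty)} e^{-s}\,dF_{S_n}(s).
\end{equation}
In the lattice case with span $h_0$, this becomes $\sum_{z\in\mathrm{supp}(S_n),\,z\ge t} e^{-z}\PP[S_n=z]$. I would invoke an LCLT for sums of independent lattice-valued random variables to get the uniform bound $\PP[S_n=z]\le \frac{h_0}{\sqrt{2\pi V_n}}(1+o(1))$. Pulling this out and computing the remaining geometric sum $\sum_{k\ge0} e^{-(t+kh_0)} = \frac{e^{-t}}{1-e^{-h_0}}$ yields exactly $\frac{\psi_0 e^{-t}}{\sqrt{2\pi V_n}}(1+o(1))$, matching the claim with $\psi_0=\frac{h_0}{1-e^{-h_0}}$.

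In the non-lattice case ($\Ic_0$ not contained in any arithmetic progression), I would partition $[t,\infty)$ into intervals $I_{n,k}=[t+k\Delta_n,\,t+(k+1)\Delta_n)$, where $\Delta_n\to 0$ slowly enough that the LCLT remainder stays $o(1/\sqrt{V_n})$. A Stone-type LCLT for non-lattice sums of independent (possibly non-identically distributed) summands then gives
\begin{equation}
\PP[S_n\in I_{n,k}] \le \frac{\Delta_n}{\sqrt{2\pi V_n}}(1+o(1))
\end{equation}
uniformly in $k$. Bounding $e^{-s}\le e^{-(t+k\Delta_n)}$ on $I_{n,k}$ and summing gives $\frac{e^{-t}}{\sqrt{2\pi V_n}}\cdot\frac{\Delta_n}{1-e^{-\Delta_n}}(1+o(1))$; since $\frac{\Delta_n}{1-e^{-\Delta_n}}\to 1$, we obtain the prefactor $\psi_0=1$.

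The main obstacle is establishing the required LCLT uniformly in the value of $S_n$ despite the non-i.i.d.\ summands. Because the $Z_i$ are drawn from only $K$ distinct distributions with $n_k=\Theta(n)$ samples each, the characteristic function factors as $\varphi_{S_n}(\xi)=\prod_{k=1}^{K}\widehat{Q}_k(\xi)^{n_k}$. Under $\sigma_k>0$ and the lattice/non-lattice dichotomy based on $\Ic_0$, one can bound $|\widehat{Q}_k(\xi)|$ strictly below $1$ outside a neighborhood of the origin (and, in the lattice case, of the lattice-dual points), so the contribution of large $|\xi|$ to the Fourier inversion integral for $\PP[S_n=z]$ or a smoothed local probability decays exponentially in $n$. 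Combined with a quadratic expansion of $\log\varphi_{S_n}$ near the origin, this yields the required uniform LCLT with additive $o(1/\sqrt{n})$ error, using also $V_n=\sum_k n_k\sigma_k^2=\Theta(n)$. The uniformity in $t$ is then automatic, since the bounds on $\PP[S_n=z]$ or $\PP[S_n\in I_{n,k}]$ are uniform in $z$ and $k$.
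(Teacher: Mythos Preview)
The paper does not prove this lemma; it is quoted from \cite[Lemma~1]{JournalSU}, with the remark that the argument there builds on \cite[Lemma~47]{Finite}. Your approach---writing the expectation as an integral/sum over the support of $S_n$, invoking a local CLT uniform in the value of $S_n$, and summing the resulting geometric series in both the lattice and non-lattice cases---is precisely the strategy used in those references, and your handling of the non-i.i.d.\ structure via the factored characteristic function $\prod_{k}\widehat{Q}_k(\xi)^{n_k}$ with $n_k=\Theta(n)$ is the correct way to extend the classical LCLT to this ``groupwise i.i.d.'' setting. So your proposal is both correct and essentially the same as the cited proof.
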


The following lemma ensures the existence of a high probability set of $(\xv,\xvbar)$
pairs such that Lemma \ref{lem:REF_Lem20} can be applied to the inner probability in \eqref{eq:EXP_RCX}. 

\begin{lem} \label{lem:EXP_TechnicalLemma}
    For any $R>0$, $\rho\ge1$, $s>0$ and $(W,q,Q)$ satisfying 
    \eqref{eq:EXP_Assumption1}--\eqref{eq:EXP_Assumption2}, the sequence of sets 
        \begin{equation}
            \Fc^n_{\rho,s}(\delta) \defeq \Big\{ (\xv,\xvbar) \,:\, \max_{x,\xbar} \Big| \hat{P}_{\xv \xvbar}(x,\xbar) - P^{*}_{\rho,s}(x,\xbar)\Big| \le \delta \Big\} \label{eq:EXP_SetFn}
        \end{equation}
    satisfies the following properties:
        \begin{enumerate}
            \item For any $\delta>0$ and $(\xv,\xvbar) \in \Fc^n_{\rho,s}(\delta)$, the random variable 
                  $\Yv_s \sim V_{s}^{n}(\cdot|\xv,\xvbar)$ satisfies
                      \begin{equation}
                      \var[j_{s}^{n}(\xv,\xvbar,\Yv_s)] \ge n(c_0 - r(\delta)), \label{eq:EXP_VarLB}
                      \end{equation}
                  where $r(\delta)\to 0$ as $\delta\to 0$.
            \item For any $\delta>0$, we have
                  \begin{equation}
                      \liminf_{n\to\infty} -\frac{1}{n} \frac{ \sum_{(\xv,\xvbar) \notin \Fc^n_{\rho,s}(\delta)} Q^n(\xv)Q^n(\xvbar)\Big( \sum_{\yv}W^n(\yv|\xv)\Big(\frac{q(\xvbar,\yv)}{q(\xv,\yv)}\Big)^s\Big)^{1/\rho} }{\sum_{\xv,\xvbar} Q^n(\xv)Q^n(\xvbar)\Big( \sum_{\yv}W^n(\yv|\xv)\Big(\frac{q(\xvbar,\yv)}{q(\xv,\yv)}\Big)^s\Big)^{1/\rho}} > 0. \label{eq:EXP_Ratio} 
                  \end{equation}
        \end{enumerate}
\end{lem}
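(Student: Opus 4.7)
The plan is to handle the two parts separately, exploiting in each case the product structure of the relevant measure on the finite alphabet $\Xc\times\Xc$.

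For part (1), I would start from the fact that under $V_s^n(\cdot|\xv,\xvbar)$ the components $Y_{s,i}\sim V_s(\cdot|x_i,\xbar_i)$ are independent, so by \eqref{eq:EXP_jMulti},
\begin{equation*}
\var\big[j_{s}^{n}(\xv,\xvbar,\Yv_s)\big]=\sum_{i=1}^n\var\big[j_s(x_i,\xbar_i,Y_{s,i})\big]=n\sum_{x,\xbar}\hat{P}_{\xv\xvbar}(x,\xbar)\,g(x,\xbar),
\end{equation*}
where $g(x,\xbar)\triangleq\var[j_s(X_s^*,\Xbar_s^*,Y_s^*)\mid X_s^*=x,\Xbar_s^*=\xbar]$. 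By \eqref{eq:EXP_c0} one has $c_0=\sum_{x,\xbar}P_{\rho,s}^{*}(x,\xbar)g(x,\xbar)$, and since $g$ is bounded on the finite set $\Xc\times\Xc$, the constraint $\max_{x,\xbar}|\hat{P}_{\xv\xvbar}(x,\xbar)-P_{\rho,s}^{*}(x,\xbar)|\le\delta$ imposed by $\Fc^n_{\rho,s}(\delta)$ immediately yields \eqref{eq:EXP_VarLB} with $r(\delta)\triangleq|\Xc|^2\delta\max_{x,\xbar}g(x,\xbar)$, which vanishes as $\delta\to 0$.

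For part (2), the key observation is that the inner sum over $\yv$ factorizes across coordinates:
\begin{equation*}
\sum_{\yv}W^n(\yv|\xv)\bigg(\frac{q^n(\xvbar,\yv)}{q^n(\xv,\yv)}\bigg)^s=\prod_{i=1}^n\sum_{y}W(y|x_i)\bigg(\frac{q(\xbar_i,y)}{q(x_i,y)}\bigg)^s.
\end{equation*}
Consequently, setting $h(x,\xbar)\triangleq Q(x)Q(\xbar)\big(\sum_{y}W(y|x)(q(\xbar,y)/q(x,y))^s\big)^{1/\rho}$ and $Z\triangleq\sum_{x,\xbar}h(x,\xbar)$, the summand in both the numerator and the denominator of \eqref{eq:EXP_Ratio} is $\prod_{i=1}^n h(x_i,\xbar_i)$. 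A glance at \eqref{eq:EXP_PXX*} shows $P_{\rho,s}^{*}(x,\xbar)=h(x,\xbar)/Z$, so the ratio in \eqref{eq:EXP_Ratio} is exactly
\begin{equation*}
\PP\Big[(\Xv,\Xvbar)\notin\Fc^n_{\rho,s}(\delta)\Big]\qquad\text{with }\{(X_i,\Xbar_i)\}_{i=1}^n\text{ i.i.d.\ from }P_{\rho,s}^{*}.
\end{equation*}
Since this event is precisely that the empirical joint type lies at $\ell_\infty$-distance strictly greater than $\delta$ from its mean $P_{\rho,s}^{*}$, a standard Sanov-style bound (or Hoeffding applied entry-by-entry to the $|\Xc|^2$ coordinates of $\hat{P}_{\Xv\Xvbar}$) yields an exponential decay $e^{-n\varepsilon(\delta)}$ for some $\varepsilon(\delta)>0$, which is exactly the $\liminf > 0$ claimed in \eqref{eq:EXP_Ratio}.

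I do not anticipate a serious obstacle: once the correct tilted product measure $(P_{\rho,s}^{*})^n$ is identified, part (2) reduces to a textbook large-deviations statement, and part (1) is a bounded-affine perturbation argument on a finite alphabet. The one small care point is to verify that $P_{\rho,s}^{*}$ coming out of the factorization matches the definition \eqref{eq:EXP_PXX*}, which it does by inspection.
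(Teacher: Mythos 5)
Your proposal is correct. Part (1) is essentially the paper's own argument: expand the variance of $j_s^n$ into single-letter terms, rewrite it as $n\sum_{x,\xbar}\hat{P}_{\xv\xvbar}(x,\xbar)\var[j_s(x,\xbar,Y_s)]$, and use the $\ell_\infty$ constraint defining $\Fc^n_{\rho,s}(\delta)$ together with boundedness on the finite alphabet; your explicit choice $r(\delta)=|\Xc|^2\delta\max_{x,\xbar}g(x,\xbar)$ is exactly what the paper leaves implicit. For part (2), however, you take a genuinely different route. The paper evaluates the exponents of the numerator and denominator of \eqref{eq:EXP_Ratio} separately via the method of types, obtaining $\min_{P_{X\Xbar}} D(P_{X\Xbar}\|Q\times Q)+\frac{1}{\rho}\EE_P[d_s(X,\Xbar)]$ with and without the constraint $\max_{x,\xbar}|P_{X\Xbar}-P^*_{\rho,s}|>\delta$, and then argues via the KKT conditions that $P^*_{\rho,s}$ is the \emph{unique} unconstrained minimizer, so the constrained exponent is strictly larger. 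You instead exploit the exact single-letter factorization of the summand, recognize the denominator as $Z^n$ with $Z=\sum_{x,\xbar}h(x,\xbar)$ and $P^*_{\rho,s}=h/Z$ (which indeed matches \eqref{eq:EXP_PXX*}), and thereby identify the ratio exactly as $\PP[(\Xv,\Xvbar)\notin\Fc^n_{\rho,s}(\delta)]$ under i.i.d.\ sampling from the tilted distribution $P^*_{\rho,s}$; Hoeffding's inequality applied entrywise plus a union bound then gives decay at rate at least $2\delta^2$ (both you and the paper read \eqref{eq:EXP_Ratio} as the exponent, i.e.\ $-\frac{1}{n}\log$ of the ratio). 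Your change-of-measure argument is more elementary -- it needs no type counting, no continuity argument, and no KKT analysis -- and it yields an explicit, nonasymptotic exponent; the paper's types-plus-KKT route buys consistency with the machinery used throughout Section \ref{sec:EXP_DISCRETE} and exhibits the variational characterization of $P^*_{\rho,s}$, but gives the positivity of the exponent gap only implicitly. The only points worth making explicit in your write-up are that $Z>0$ (guaranteed, e.g., by the diagonal pairs $x=\xbar$ with $Q(x)>0$, using \eqref{eq:EXP_Assumption1}) so that $P^*_{\rho,s}$ is well defined, and that $\EE[\hat{P}_{\Xv\Xvbar}(x,\xbar)]=P^*_{\rho,s}(x,\xbar)$ under the tilted i.i.d.\ measure, which is what licenses the Hoeffding step.
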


\begin{proof}[Proof of Theorem \ref{thm:EXP_Prefactor} Based on Lemma \ref{lem:EXP_TechnicalLemma}]
    Using the bound $\rcux_{\rho}$ in Theorem \ref{thm:EXP_Finite} with 
    the i.i.d. codeword distribution $P_{\Xv}=Q^n$, we have for any $\delta > 0$ that
        \begin{align}
        \frac{1}{4(M-1)} \rcux_{\rho}(n,M)^{1/\rho} & = \sum_{\xv,\xvbar} Q^n(\xv)Q^n(\xvbar)\PP\Big[q^{n}(\xvbar,\Yv)\ge q^{n}(\xv,\Yv)\Big]^{1/\rho} \\
        & = \sum_{(\xv,\xvbar) \in \Fc^n_{\rho,s}(\delta)} Q^n(\xv)Q^n(\xvbar)\PP\Big[q^{n}(\xvbar,\Yv)\ge q^{n}(\xv,\Yv)\Big]^{1/\rho} \nonumber \\
        & \qquad + \sum_{(\xv,\xvbar) \notin \Fc^n_{\rho,s}(\delta)} Q^n(\xv)Q^n(\xvbar)\PP\Big[q^{n}(\xvbar,\Yv)\ge q^{n}(\xv,\Yv)\Big]^{1/\rho}, \label{eq:EXP_PreProof2}
        \end{align}
    where each probability is implicitly conditioned on $\Xv=\xv$. 
    
    We first analyze the summation over $\Fc^n_{\rho,s}(\delta)$ in \eqref{eq:EXP_PreProof2}. 
    In order to make the inner probability more  amenable to an application of 
    Lemma \ref{lem:REF_Lem20}, we write it as
        \begin{align}
        \PP\Big[q^n(\xvbar,\Yv)\ge q^n(\xv,\Yv)\Big] & =\PP\bigg[\bigg(\frac{q^n(\xvbar,\Yv)}{q^n(\xv,\Yv)}\bigg)^{s}\ge1\bigg]\\
         & =\PP\left[\frac{\Big(\frac{q^n(\xvbar,\Yv)}{q^n(\xv,\Yv)}\Big)^{s}}{\sum_{\yv}W^{n}(\yv|\xv)\Big(\frac{q^n(\xvbar,\yv)}{q^n(\xv,\yv)}\Big)^{s}}\ge\frac{1}{\sum_{\yv}W^{n}(\yv|\xv)\Big(\frac{q^n(\xvbar,\yv)}{q^n(\xv,\yv)}\Big)^{s}}\right]\\
         & =\PP\left[j_{s}^{n}(\xv,\xvbar,\Yv)\ge-\log\sum_{\yv}W^{n}(\yv|\xv)\bigg(\frac{q(\xvbar,\yv)}{q(\xv,\yv)}\bigg)^{s}\right], \label{eq:EXP_PreProof6}
        \end{align}
    where $j_{s}^n$ is defined in \eqref{eq:EXP_jMulti}.  Next, following \cite[Sec. 3.4.5]{FiniteThesis}, we note
    that the following holds when $V_{s}^{n}(\yv|\xv,\xvbar) \ne 0$:
        \begin{align}
        W^n(\yv|\xv) & = W^n(\yv|\xv)\frac{V_{s}^{n}(\yv|\xv,\xvbar)}{V_{s}^{n}(\yv|\xv,\xvbar)} \\
        & = V_{s}^{n}(\yv|\xv,\xvbar)e^{-nj_{s}(\xv,\xvbar,\yv)}. \label{eq:EXP_PreProof7}
        \end{align}
    Summing \eqref{eq:EXP_PreProof7} over all $\yv$ such that $j_{s}(\xv,\xvbar,\yv) \ge t$, we obtain
        \begin{equation}
        \PP\big[j_{s}^{n}(\xv,\xvbar,\Yv)\ge t\big]=\EE\Big[e^{-j_{s}^{n}(\xv,\xvbar,\Yv_{s})}\openone\big\{ j_{s}^{n}(\xv,\xvbar,\Yv_{s})\ge t\big\}\Big],
        \end{equation}
    where $\Yv_{s} \sim V_{s}^{n}(\cdot|\xv,\xvbar)$.  For any $(\xv,\xvbar) \in \Fc^n_{\rho,s}(\delta)$,
    we obtain the following using Lemma \ref{lem:REF_Lem20}, 
    the first part of Lemma \ref{lem:EXP_TechnicalLemma}, and the fact that $c_0>0$ (see the arguments
    following \eqref{eq:EXP_Var0_0}):
        \begin{equation}
        \PP\big[j_{s}^{n}(\xv,\xvbar,\Yv)\ge t\big] \le \frac{\psi_s(1+o(1))}{\sqrt{2\pi n(c_0 - r(\delta))}}e^{-t} \label{eq:EXP_PreProof9}
        \end{equation}
    uniformly in $t$, provided that $\delta$ is sufficiently small so that $r(\delta)<c_0$.  
    Substituting \eqref{eq:EXP_PreProof9} into \eqref{eq:EXP_PreProof6}, we obtain
        \begin{equation}
        \PP\Big[q^n(\xvbar,\Yv)\ge q^n(\xv,\Yv)\Big] \le \frac{\psi_s(1+o(1))}{\sqrt{2\pi n(c_0 - r(\delta))}}\sum_{\yv} W^{n}(\yv|\xv)\bigg(\frac{q(\xvbar,\yv)}{q(\xv,\yv)}\bigg)^{s},
        \end{equation}
    and hence
        \begin{align}
        & \sum_{(\xv,\xvbar) \in \Fc^n_{\rho,s}(\delta)} Q^n(\xv)Q^n(\xvbar)\PP\Big[q^{n}(\xvbar,\Yv)\ge q^{n}(\xv,\Yv)\Big]^{1/\rho} \nonumber \\ 
        & \qquad \le \sum_{\xv,\xvbar} Q^n(\xv)Q^n(\xvbar) \bigg(\frac{\psi_s(1+o(1))}{\sqrt{2\pi n(c_0 - r(\delta))}} \sum_{\yv} W^{n}(\yv|\xv)\bigg(\frac{q(\xvbar,\yv)}{q(\xv,\yv)} \bigg)^{s} \bigg)^{1/\rho}. \label{eq:EXP_PreProof12}
        \end{align}
    We observe that the right-hand side of \eqref{eq:EXP_PreProof12} has the same exponent as 
    the denominator of \eqref{eq:EXP_Ratio}.  Using Markov's inequality, the summation over 
    $\Fc^n_{\rho,s}(\delta)^{c}$ in \eqref{eq:EXP_PreProof2} can be upper bounded by the numerator of
    \eqref{eq:EXP_Ratio}, and thus the second part of Lemma \ref{lem:EXP_TechnicalLemma} implies
        \begin{equation}
        \frac{1}{4(M-1)} \rcux_{\rho,s}(n,M)^{1/\rho} \le \big(1+o(1)\big) \sum_{\xv,\xvbar} Q^n(\xv)Q^n(\xvbar) \bigg(\frac{\psi_s(1+o(1))}{\sqrt{2\pi n(c_0 - r(\delta))}} \sum_{\yv} W^{n}(\yv|\xv)\bigg(\frac{q(\xvbar,\yv)}{q(\xv,\yv)} \bigg)^{s} \bigg)^{1/\rho},
        \end{equation}
    and hence
        \begin{align}
        \rcux_{\rho,s}(n,M) & \le \frac{4^{\rho}\psi_s(1+o(1))}{\sqrt{2\pi n(c_0 - r(\delta))}} M^\rho \Bigg(\sum_{\xv,\xvbar} Q^n(\xv)Q^n(\xvbar) \bigg(\sum_{\yv} W^{n}(\yv|\xv)\bigg(\frac{q(\xvbar,\yv)}{q(\xv,\yv)} \bigg)^{s} \bigg)^{1/\rho}\Bigg)^\rho \\
        & = \frac{4^{\rho}\psi_s(1+o(1))}{\sqrt{2\pi n(c_0 - r(\delta))}} e^{-n(\Exiid(Q,\rho,s)-\rho R)}, \label{eq:EXP_PreProof15}
        \end{align}
    where \eqref{eq:EXP_PreProof15} follows by expanding each term as a product 
    from $1$ to $n$ and using the definition of $\Exiid$.  The proof is concluded by taking
    $\delta\to0$ (and hence $r(\delta)\to0$).
\end{proof}

\begin{proof}[Proof of Lemma \ref{lem:EXP_TechnicalLemma}]
    We obtain \eqref{eq:EXP_VarLB} by expanding the variance as
        \begin{align}
        \var[j_{s}^{n}(\xv,\xvbar,\Yv_s)] & = \sum_{i=1}^{n} \var[j_s(x_i,\xbar_i,Y_{s,i})] \\
        & = \sum_{x,\xbar} n\hat{P}_{\xv \xvbar}(x,\xbar) \var[j_s(x,\xbar,Y_{s})]
        \end{align}
    and substituting the bound in the definition of $\Fc^n_{\rho,s}(\delta)$ in \eqref{eq:EXP_SetFn}.  
    To prove the second property, we note that a nearly identical argument 
    to Section \ref{sec:EXP_DISC_SIMPLE} (based on types) reveals that the exponent of 
    the denominator of \eqref{eq:EXP_Ratio} is equal to
        \begin{equation}
        \min_{P_{X\Xbar}} D(P_{X\Xbar} \| Q \times Q) + \frac{1}{\rho} \EE_{P}[d_s(X,\Xbar)], \label{eq:EXP_Quantity1Exp}
        \end{equation}
    where $d_s$ is defined in \eqref{eq:EXP_ChernoffDist}.
    Similarly, the  exponent of the numerator of \eqref{eq:EXP_Ratio} is given by
        \begin{equation}
        \min_{P_{X\Xbar} \,:\, \max_{x,\xbar}|P_{X\Xbar}(x,\xbar)-P_{\rho,s}^{*}(x,\xbar)|>\delta} D(P_{X\Xbar} \| Q \times Q) + \frac{1}{\rho} \EE_{P}[d_s(X,\Xbar)]. \label{eq:EXP_Quantity2Exp}
        \end{equation}
    A straightforward analysis of the Karush-Kuhn-Tucker (KKT) conditions \cite[Sec. 5.5.3]{Convex} reveals  
    that \eqref{eq:EXP_Quantity1Exp} is uniquely minimized by $P_{\rho,s}^{*}$, defined in \eqref{eq:EXP_PXX*}.
    On the other hand, $P_{\rho,s}^{*}$ does not satisfy the constraint in in \eqref{eq:EXP_Quantity2Exp},
    and thus \eqref{eq:EXP_Quantity2Exp} is strictly higher than \eqref{eq:EXP_Quantity1Exp}.
\end{proof}

%--------------------%
%---- Conclusion ----%
%--------------------%
\section{Discussion and Conclusion} \label{sec:EXP_CONCLUSION}

We have presented asymptotic and non-asymptotic expurgated bounds 
for channels with a given decoding rule.  Several expurgated exponents
have been derived, including that of Csisz\'{a}r and K\"{o}rner \cite{Csiszar1} and its generalization to 
continuous alphabets.  The type class enumeration approach has been shown to 
provide better exponents for some codeword distributions, better guarantees of exponential 
tightness, and the opportunity for deriving expurgated exponents for channels with memory.
By refining the analysis of the i.i.d. ensemble, we have obtained a bound with a 
$O\big(\frac{1}{\sqrt{n}}\big)$ prefactor, thus improving on Gallager's $O(1)$ prefactor.

%------------------%
%---- Appendix ----%
%------------------%
\appendix
\section{Appendix}

%%% Lagrange Duality Proofs %%%
\subsection{Technical Condition of Theorem \ref{thm:EXP_LogBound}} \label{sec:EXP_TECH_COND}

We begin by providing an example of a class of continuous channels and metrics satisfying the
single-letter condition given in \eqref{eq:EXP_SingleLetterCond}.  Consider an additive noise channel $Y=X+Z$,
and let $q(x,y)$ be any decreasing function of $|y-x|$. 
If the cost constraint is of the form $c(x)=|x|^{\beta}$ for some constant $\beta$,
then $c(x) \le \gamma$ if and only if $|x|\le\gamma^{1/\beta}$.  Thus,
any two permissible points are separated by a distance of at most $2\gamma^{1/\beta}$,
and the single-letter condition is satisfied if the additive noise satisfies
$\PP[Z > 2\gamma^{1/\beta}] \ge e^{-E'(\gamma)}$ and $\PP[Z < -2\gamma^{1/\beta}] \ge e^{-E'(\gamma)}$
for some $E'(\gamma)$ growing subexponentially in $\gamma$.  In particular, this holds for 
noise distributions with exponential tails (e.g. Gaussian).
On the other hand, if the cost function is logarithmic, say $c(x) = \log(1+|x|)$, then
\eqref{eq:EXP_SingleLetterCond} fails for additive noise distributions with exponential 
tails, since in this case the limit on the left-hand side of \eqref{eq:EXP_SingleLetterCond} 
equals a positive constant.

For any DMC whose zero-error capacity \cite{ShannonZero} is zero, the condition of
Theorem \ref{thm:EXP_LogBound} is satisfied under ML decoding, since the error probability can only
decay exponentially.  On the other hand, the condition could
fail for  sufficiently ``bad'' metrics (e.g. one for which there exists a pair $(x,\xbar)$   
such that $q(x,y)>q(\xbar,y)$ for all $y$).  Furthermore, the condition fails
under ML decoding whenever the zero-error capacity is positive and $Q$ 
has a support which includes two inputs not sharing a common output. 

%%% Lagrange Duality Proofs %%%
\subsection{Proof of Theorem \ref{thm:EXP_Duality}} \label{sec:EXP_DUALITY_PROOF}

Using the definitions of $\SetScc$ and $\SetTcc$ in \eqref{eq:EXP_SetS}--\eqref{eq:EXP_SetT2},
we write \eqref{eq:EXP_PrimalAlt} as 
    \begin{equation}
    \Eexhatcc(Q,R)=\min_{\substack{\Ptilde_{X\Xbar}\in\SetScc(Q)\\I_{\Ptilde}(X;\Xbar)\le R}}\min_{P_{X \Xbar Y}\in\SetTcc(\Ptilde_{X\Xbar})}D(P_{X \Xbar Y}\|\Ptilde_{X\Xbar}\times W) + I_{\Ptilde}(X;\Xbar) - R,\label{eq:EXP_DualProof1}
    \end{equation}
where the objective follows from \eqref{eq:EXP_ChainRule}.  We will study
\eqref{eq:EXP_DualProof1} one minimization at a time.  

\subsubsection*{Step 1}

For a given $\Ptilde_{X\Xbar}\in\SetScc(Q)$, $I_{\Ptilde}(X;\Xbar)-R$ is constant, 
and we thus consider the optimization problem
    \begin{equation}
    \min_{P_{X \Xbar Y} \in \SetTcc(\Ptilde_{X\Xbar})}D(P_{X \Xbar Y}\|\Ptilde_{X\Xbar}\times W). \label{eq:EXP_DualProof2}
    \end{equation}
The Lagrangian \cite[Sec. 5.1.1]{Convex} is given by
    \begin{multline}
    \Lsf_{1}=\sum_{x,\xbar,y}P_{X \Xbar Y}(x,\xbar,y)\log\frac{P_{X \Xbar Y}(x,\xbar,y)}{\Ptilde_{X\Xbar}(x,\xbar)W(y|x)}\\
    +s\bigg(\sum_{x,y}P_{XY}(x,y)\log q(x,y)-\sum_{\xbar,y}P_{\Xbar Y}(\xbar,y)\log q(\xbar,y)\bigg)+\sum_{x,\xbar}\mu(x,\xbar)\Big(\Ptilde_{X\Xbar}(x,\xbar)-P_{X\Xbar}(x,\xbar)\Big),\label{eq:EXP_DualProof3}
    \end{multline}
where $s\ge0$ and $\mu(\cdot,\cdot)$ are Lagrange multipliers.
The optimization problem is convex with affine constraints, and thus the optimal value is equal 
to $\Lsf_{1}$ for some choice of $P_{X \Xbar Y}$ and the Lagrange
multipliers satisfying the Karush-Kuhn-Tucker (KKT) conditions \cite[Sec. 5.5.3]{Convex}.

The simplification of \eqref{eq:EXP_DualProof3} using the KKT conditions 
uses standard arguments, so we omit some details.
Setting $\frac{\partial \Lsf_{1}}{\partial P_{X \Xbar Y}(x,\xbar,y)}=0$,
using the constraint $P_{X\Xbar}=\Ptilde_{X\Xbar}$
to solve for $\mu(\cdot,\cdot)$, and substituting the resulting expressions back
into \eqref{eq:EXP_DualProof3}, we obtain
    \begin{equation}
    \Lsf_{1} = -\sum_{x,\xbar}\Ptilde_{X\Xbar}(x,\xbar)\log\sum_{y}W(y|x)\bigg(\frac{q(\xbar,y)}{q(x,y)}\bigg)^{s}. \label{eq:EXP_DualProof4}
    \end{equation}
Renaming $\Ptilde_{X\Xbar}$ as $P_{X\Xbar}$, taking the supremum over $s\ge0$, and
adding $I_{P}(X;\Xbar)-R$ (see \eqref{eq:EXP_DualProof1}--\eqref{eq:EXP_DualProof2}), 
we obtain the right-hand side of \eqref{eq:EXP_Duality1} with the 
minimum and supremum in the opposite order. Using Fan's minimax theorem 
\cite{Minimax}, we can safely interchange the two.

Since we have taken the supremum over the parameter $s\ge0$ without 
verifying that it satisfies the KKT conditions,
we have only proved that \eqref{eq:EXP_Duality1} holds with the equality replaced
by an inequality ($\le$).  To prove the reverse inequality, we use
the log-sum inequality \cite[Thm. 2.7.1]{Cover} similarly to \cite[Appendix A]{Merhav}. 
For any $P_{X \Xbar Y}\in\SetTcc(\Ptilde_{X\Xbar})$, we have
\begin{align}
    D(P_{X \Xbar Y}\|\Ptilde_{X\Xbar}\times W) 
        &\ge D(P_{X \Xbar Y}\|\Ptilde_{X\Xbar}\times W) - s\sum_{x,\xbar,y} P_{X \Xbar Y}(x,\xbar,y)\log\frac{q(\xbar,y)}{q(x,y)} \label{eq:EXP_DualLB1} \\
        &= \sum_{x,\xbar,y} P_{X \Xbar Y}(x,\xbar,y)\log\frac{P_{X \Xbar Y}(x,\xbar,y)}{\Ptilde_{X\Xbar}(x,\xbar)W(y|x)\Big(\frac{q(\xbar,y)}{q(x,y)}\Big)^{s}} \label{eq:EXP_DualLB2} \\
        &\ge \sum_{x,\xbar} P_{X \Xbar}(x,\xbar)\log\frac{1}{\sum_{y}W(y|x)\Big(\frac{q(\xbar,y)}{q(x,y)}\Big)^{s}} \label{eq:EXP_DualLB3},
\end{align}
where \eqref{eq:EXP_DualLB1} holds for any $s\ge0$ from the constraint $\EE_{P}[\log q(\Xbar,Y)]\ge\EE_{P}[\log q(X,Y)]$
in \eqref{eq:EXP_SetT}, \eqref{eq:EXP_DualLB2} follows from the definition of divergence, and \eqref{eq:EXP_DualLB3}
follows using the log-sum inequality \cite[Thm. 2.7.1]{Cover} and the constraint $P_{X\Xbar}=\Ptilde_{X\Xbar}$.
Equation \eqref{eq:EXP_DualLB3} coincides with \eqref{eq:EXP_DualProof4}, thus
completing the proof of \eqref{eq:EXP_Duality1}.

\subsubsection*{Step 2}

We now turn to the proof of \eqref{eq:EXP_Eex_CC}.
For any fixed $s\ge0$, the Lagrangian corresponding to \eqref{eq:EXP_Duality1} is given by 
    \begin{multline}
    \Lsf_{2} = -\sum_{x,\xbar}P_{X\Xbar}(x,\xbar)\log\sum_{y}W(y|x)\bigg(\frac{q(\xbar,y)}{q(x,y)}\bigg)^{s}+(1+\lambda)\sum_{x,\xbar}P_{X\Xbar}(x,\xbar)\log\frac{P_{X\Xbar}(x,\xbar)}{Q(x)Q(\xbar)} - (1+\lambda)R \\
    +\sum_{x}\nu_{1}(x)\Big(Q(x)-P_{X}(x)\Big)+\sum_{\xbar}\nu_{2}(\xbar)\Big(Q(\xbar)-P_{\Xbar}(\xbar)\Big),\label{eq:EXP_DualProofB2}
    \end{multline}
where $\lambda\ge0$,  $\nu_{1}(\cdot)$ and $\nu_{2}(\cdot)$ are Lagrange multipliers.
Setting $\frac{\partial \Lsf_{2}}{\partial P_{X\Xbar}(x,\xbar)}=0$, using the constraint 
$P_{X}=Q$ to solve for $\nu_{1}(\cdot)$, and substituting the resulting expressions back
into \eqref{eq:EXP_DualProofB2}, we obtain
    \begin{equation}
    \Lsf_{2} =-(1+\lambda)\sum_{x}Q(x)\log\sum_{\xbar}Q(\xbar)\bigg(\sum_{y}W(y|x)\bigg(\frac{q(\xbar,y)}{q(x,y)}\bigg)^{s}\bigg)^{\frac{1}{1+\lambda}}e^{\frac{1}{1+\lambda}(\nu_{2}(\xbar)-\nu_{2}(x))} - (1+\lambda)R. \label{eq:EXP_DualProofB8}
    \end{equation}
Taking the supremum over $\nu_{2}(\cdot)$, $s\ge0$ and $\lambda \ge 0$, we obtain
the right-hand side of \eqref{eq:EXP_Eex_CC} after suitable renaming. 

Once again, we have only proved that \eqref{eq:EXP_Eex_CC} holds with an inequality ($\le$)
in place of the equality, and we obtain a matching lower bound similarly to \eqref{eq:EXP_DualLB1}--\eqref{eq:EXP_DualLB3}.  
For any $P_{X\Xbar}\in\SetScc(Q)$ with $I_{\Ptilde}(X;\Xbar)\le R$, 
we can lower bound the objective in \eqref{eq:EXP_Duality1} as follows: 
    \begin{align}
     -\sum_{x,\xbar} &P_{X\Xbar}(x,\xbar)\log\sum_{y}W(y|x)\bigg(\frac{q(\xbar,y)}{q(x,y)}\bigg)^{s} + I_{P}(X;\Xbar)-R \nonumber \\
    & \ge -\sum_{x,\xbar}P_{X\Xbar}(x,\xbar)\log\sum_{y}W(y|x)\bigg(\frac{q(\xbar,y)}{q(x,y)}\bigg)^{s} + \rho\big(I_{P}(X;\Xbar)-R\big) \label{eq:EXP_DualProofB10} \\
    &  = -\rho\sum_{x,\xbar}P_{X\Xbar}(x,\xbar)\log\frac{Q(x)Q(\xbar)\bigg(\sum_{y}W(y|x)\Big(\frac{q(\xbar,y)}{q(x,y)}\Big)^{s}e^{a(\xbar)-\phi_a}\bigg)^{1/\rho}}{P_{X\Xbar}(x,\xbar)} - \rho R \label{eq:EXP_DualProofB11} \\
    & \ge -\rho\sum_{x}Q(x)\log\sum_{\xbar}Q(\xbar)\bigg(\sum_{y}W(y|x)\Big(\frac{q(\xbar,y)}{q(x,y)}\Big)^{s}e^{a(\xbar)-\phi_a}\bigg)^{1/\rho}  - \rho R, \label{eq:EXP_DualProofB13}
    \end{align}
where \eqref{eq:EXP_DualProofB10} holds for any $\rho\ge1$ from the constraint $I_{\Ptilde}(X;\Xbar) \le R$,
\eqref{eq:EXP_DualProofB11} holds for any function $a(x)$ with mean $\phi_a = \EE_Q[a(X)]$ by expanding 
the logarithm and applying simple manipulations, and \eqref{eq:EXP_DualProofB13} follows from 
the log-sum inequality \cite[Thm. 2.7.1]{Cover} and the constraint $P_X=Q$.
Using the definition of $\phi_a$ and again expanding the logarithm, it is easily shown that
\eqref{eq:EXP_DualProofB13} is unchanged when $e^{a(\xbar)-\phi_a}$ is replaced by
$\frac{e^{a(\xbar)}}{e^{a(x)}}$, thus completing the proof.

%%% R=0 Proof %%%
\subsection{Proof of Proposition \ref{prop:EXP_RateZero}} \label{sec:EXP_RZERO_PROOF}

The result for the i.i.d. exponent follows similarly to Gallager \cite[Sec 5.7]{Gallager},
so we only explain the differences. Let $\Exiid(Q,\rho,s)$ be the function 
$\Exiid$ in \eqref{eq:EXP_Ex_IID}, with a fixed value of $s$ rather than a supremum.  
We claim that
    \begin{equation}
    \lim_{R\rightarrow0^{+}}\sup_{\rho\geq1,s\ge0}\Exiid(Q,\rho,s)-\rho R
    = \sup_{\rho\geq1,s\ge0}\Exiid(Q,\rho,s). \label{eq:EXP_SupInf1}
    \end{equation}
It is easily seen that the left-hand side of \eqref{eq:EXP_SupInf1} cannot exceed the right-hand
side, since $\rho R$ is positive for any sequence of $R$ values approaching zero from above.  It 
remains to prove the converse. We have for all $R$ that
    \begin{equation}
    \sup_{\rho\ge1,s\ge0} \Exiid(Q,\rho,s) - \rho R \ge \Exiid(Q,\rho,s)-\rho R.
    \end{equation}
Taking $R \to 0$ and then taking the supremum over $s\ge0$ and $\rho\ge1$ yields the desired result.
The remainder of the proof follows using Gallager's argument: For any fixed $s$, the supremum over 
$\rho$ is in the limit as $\rho\to\infty$, and this limit is easily evaluated using L'H\^{o}pital's rule.

The result for the constant-composition exponent follows in the same way 
using the fact that \linebreak $\sup_{s,a_1(\cdot),a_2(\cdot)}\Excost(Q,\rho,\{a_1,a_2\}) = \Excc(Q,\rho)$ 
(see Section \ref{sec:EXP_DISC_SIMPLE}; in particular, $\Excost$ is defined in \eqref{eq:EXP_Ex_Cost}).
Once again, the supremum over $\rho$ is in the limit as $\rho\to\infty$ when the 
remaining parameters are fixed.

\subsection{Derivation of $\Eexcc$ Using Theorem \ref{thm:EXP_General}} \label{sec:EXP_GEN_ANALYSIS}

Using similar arguments to those in Section \ref{sec:EXP_DISC_SIMPLE}, we can evaluate the lower tail
probability of $d_{s}^{n}(\xv,\Xvbar)$ as follows:
    \begin{align}
        \sum_{\xvbar}P_{\Xv}(\xvbar)\openone\big\{d_{s}^{n}(\xv,\xvbar) \le nD\big\}
        &\le \sum_{\xvbar}P_{\Xv}(\xvbar)e^{t(nD-d_{s}^{n}(\xv,\xvbar))} \label{eq:EXP_GenStep2} \\
        &\,\,\dot{\le}\, \sum_{\xvbar}Q^{n}(\xvbar)e^{t(nD-d_{s}^{n}(\xv,\xvbar))}e^{\rbar(a(\xvbar)-n\phi_{a})} \label{eq:EXP_GenStep3} \\
        &= e^{n(tD-\rbar\phi_{a})}\prod_{i=1}^{n}\sum_{\xbar}Q(x)e^{\rbar a(\xbar)-td_{s}(x_{i},\xbar)},  \label{eq:EXP_GenStep4}
    \end{align}
where \eqref{eq:EXP_GenStep2} holds or any $t\ge0$ by upper bounding the indicator function,  
and \eqref{eq:EXP_GenStep3} holds for any $\rbar$ using \eqref{eq:EXP_CostProperty2} and 
\eqref{eq:EXP_CostProperty1}. From \eqref{eq:EXP_GenStep4}, we may set
    \begin{equation}
    R(D,\xv) = \sup_{t\ge0,\rbar} \rbar\phi_{a}-tD-\frac{1}{n}\sum_{i=1}^{n}\theta(x_{i},\rbar,t), \label{eq:EXP_RDx}
    \end{equation}
where 
    \begin{equation}
    \theta(x,\rbar,t) \defeq \log\EE_{Q}\big[e^{\rbar a(\Xbar)-td_{s}(x,\Xbar)}\big]. \label{eq:EXP_Theta}
    \end{equation}
Before proceeding, we present the following proposition.

\begin{prop} \label{prop:EXP_LimitQ}
     Consider the cost-constrained distribution $P_{\Xv}$ in \eqref{eq:EXP_Px_Multi},
     and assume that the input distribution $Q$ and auxiliary costs $\{a_{l}\}_{l=1}^{L}$ are
     such that assumptions of Proposition \ref{prop:EXP_SubExp} are satisfied.  
     For any function $f \,:\, \Xc \to \RR$, we have
        \begin{equation}
        \lim_{n\to\infty} \EE\bigg[\frac{1}{n}\sum_{i=1}^{n}f(X_i)\bigg] = \EE_{Q}[f(X)]
        \end{equation}
    provided that $\EE_{Q}[f(X)]$ exists.
\end{prop}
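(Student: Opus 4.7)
The plan is to reduce the claim to a single-coordinate statement and then compare a conditional probability to an unconditional one via a local central limit theorem.

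First, by the permutation symmetry of $\Dc_n$ in \eqref{eq:EXP_SetDn}, the marginals of $P_{\Xv}$ satisfy $\EE[f(X_i)] = \EE[f(X_1)]$ for every $i$, so it suffices to show that $\EE[f(X_1)] \to \EE_Q[f(X)]$. Letting $\Xv' = (X_1',\dotsc,X_n')$ be i.i.d.\ $\sim Q$, the explicit form of $P_{\Xv}$ in \eqref{eq:EXP_Px_Multi} yields
\[
\EE[f(X_1)] \;=\; \frac{\EE\big[f(X_1')\,\openone\{\Xv'\in\Dc_n\}\big]}{\mu_n} \;=\; \frac{\EE_Q\big[f(X)\,g_n(X)\big]}{\mu_n},
\]
where $g_n(x) \defeq \PP\big[(x,X_2',\dotsc,X_n')\in\Dc_n\big]$.

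The next step is to prove $g_n(x)/\mu_n \to 1$ pointwise on the support of $Q$. Both $g_n(x)$ and $\mu_n$ are probabilities that sums of (nearly) $n$ i.i.d.\ copies of $a_l(X')$ (and of $c(X')$) lie in prescribed windows; for $g_n(x)$ the corresponding windows are shifted by the $O(1)$ quantities $a_l(x)$ relative to those for $\mu_n$. The second-moment assumption $\EE_Q[a_l(X)^2]<\infty$ from Proposition \ref{prop:EXP_SubExp} puts us in the regime where the standard local CLT gives both probabilities as $\Theta(n^{-L/2})$ with equivalent leading constants, since the $O(1)$ shift is negligible on the $\sqrt{n}$ scale of fluctuations, hence $g_n(x)/\mu_n\to1$.

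Finally I would pass the limit through the expectation by dominated convergence. A uniform bound $\sup_x g_n(x) \le C n^{-L/2}$ follows from a standard maximal-probability estimate for sums of independent random variables with finite, non-degenerate variance, while the matching lower bound $\mu_n \ge c n^{-L/2}$ is delivered by the local CLT applied at the mean (this is essentially what underlies Proposition \ref{prop:EXP_SubExp}). Together these give $g_n(x)/\mu_n \le C/c$ uniformly in $x$ and $n$, so dominated convergence with the integrable majorant $(C/c)|f|$ yields the claim.

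The main obstacle will be handling the system cost constraint $\frac{1}{n}\sum c(x_i)\le\Gamma$ alongside the CLT-type auxiliary constraints. When $\EE_Q[c(X)]<\Gamma$ strictly, this constraint is satisfied with probability tending to $1$ and can be absorbed into both numerator and denominator without affecting limits; the boundary case $\EE_Q[c(X)]=\Gamma$ requires more delicate treatment, but invoking the same machinery that proves Proposition \ref{prop:EXP_SubExp} ensures $\mu_n$ and $g_n(x)$ decay at the same polynomial rate so the ratio remains bounded and convergent.
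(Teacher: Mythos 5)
Your argument is sound in outline but proceeds by a genuinely different route from the paper's. The paper keeps the full sum $\frac{1}{n}\sum_i f(X_i)$ intact, writes the constrained expectation as $\frac{d}{d\lambda}\big(\frac{1}{n}\log Z(\lambda)\big)\big|_{\lambda=0}$ for the partition function $Z(\lambda)=\EE\big[e^{\lambda f^{n}(\Xv')}\openone\{|a^{n}(\Xv')-n\phi_{a}|\le\delta\}\big]$, represents the indicator by an inverse Laplace transform, and evaluates the resulting contour integrals by the saddlepoint method; the fact that $\phi_a=\EE_Q[a(X)]$ forces the saddlepoint to $t_0=0$, which is what produces the untilted expectation $\EE_Q[f(X)]$. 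You instead use exchangeability of $P_{\Xv}$ to reduce to the single-letter marginal, write $\EE[f(X_1)]=\EE_Q[f(X)g_n(X)]/\mu_n$, and compare $g_n(x)$ with $\mu_n$ by a local CLT, closing with dominated convergence. Your route is more probabilistic and transparent (it exhibits $g_n/\mu_n$ as a $Q$-density converging to $1$, and even gives convergence of the single-letter marginal itself), and it avoids the complex-analytic machinery; the paper's route treats the tilted family uniformly in $\lambda$ and extends mechanically to several costs. Both ultimately rest on the same sharp fixed-width-window asymptotics, and your level of rigor at that step is comparable to the paper's appeal to ``the saddlepoint method.''

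Two places need more care than you give them. First, the uniform bounds $\sup_x g_n(x)\le Cn^{-L/2}$ and $\mu_n\ge cn^{-L/2}$: the upper bound (an Esseen-type concentration-function estimate) requires the covariance of $(a_1(X),\dotsc,a_L(X))$ under $Q$ to be non-degenerate, so linearly dependent or constant auxiliary costs must first be removed; and the lower bound is \emph{not} what Proposition \ref{prop:EXP_SubExp} states (it only gives subexponential decay), so you would have to establish the polynomial rate yourself, with the usual lattice/non-lattice distinction. The lattice case is harmless for the ratio $g_n(x)/\mu_n\to1$ precisely because both probabilities constrain the same total sums over the same windows, so the supports are lattice-compatible --- worth saying explicitly. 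Second, your absorption of the system cost is too quick when $L\ge2$: ``satisfied with probability tending to one'' does not suffice, since with only a second moment, Chebyshev gives $\PP\big[\frac{1}{n}\sum_i c(X_i')>\Gamma\big]=O(1/n)$, which is not $o(\mu_n)$ once $\mu_n=\Theta(n^{-L/2})$ with $L\ge2$. The clean fix is to include $c$ in the joint local-CLT comparison (equivalently, show the cost-violation probability conditioned on the auxiliary windows vanishes); this also handles the boundary case $\EE_Q[c(X)]=\Gamma$, where both numerator and denominator pick up the same asymptotic constant, which cancels in the ratio.
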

\begin{proof}
    See Appendix \ref{sec:EXP_LIMITQ_PROOF}.
\end{proof}

We can now simplify the exponent in \eqref{eq:EXP_FinalExpGen} as follows:
    \begin{align} \allowdisplaybreaks 
    & \EE\Big[\inf_{D\,:\,R(D,\Xv) \le R} D + R(D,\Xv)-R\Big] \\
    & \qquad = \EE\Big[\inf_{D}\sup_{\rho\ge1} D + \rho\big(R(D,\Xv)-R\big)\Big] \label{eq:EXP_GenStep5} \\
    & \qquad \ge \sup_{\rho\ge1} \EE\Big[\inf_{D} D + \rho\big(R(D,\Xv)-R\big)\Big] \\
    & \qquad = \sup_{\rho\ge1} \EE\Big[\inf_{D}\sup_{t\ge0,\rbar} D(1-\rho t) - \rho\Big(-\rbar\phi_{a}+\frac{1}{n}\sum_{i=1}^{n}\theta(X_{i},\rbar,t)+R\Big)\Big] \label{eq:EXP_GenStep7} \\
    & \qquad \ge \sup_{\rho\ge1}\sup_{\rbar} -\rho\Big(-\rbar\phi_{a}+ \EE\Big[\frac{1}{n}\sum_{i=1}^{n}\theta(X_{i},\rbar,1/\rho)\Big]+R\Big) \label{eq:EXP_GenStep10} \\
    & \qquad \to \sup_{\rho\ge1}\sup_{\rbar} \rho\Big(\rbar\phi_{a} - \EE_{Q}[\theta(X,\rbar,1/\rho)] - R\Big)\Big], \label{eq:EXP_GenStep11}
    \end{align}
where \eqref{eq:EXP_GenStep5} follows from \eqref{eq:EXP_Etilde3}, 
\eqref{eq:EXP_GenStep7} follows from \eqref{eq:EXP_RDx}, 
\eqref{eq:EXP_GenStep10} follows by replacing the supremum over $t\ge0$ by the choice
$t=1/\rho$, and \eqref{eq:EXP_GenStep11} follows from Proposition \ref{prop:EXP_LimitQ}.  

Substituting \eqref{eq:EXP_Theta} into \eqref{eq:EXP_GenStep11} setting 
$\rbar=\frac{1}{\rho}$, and taking the supremum over $a(\cdot)$, we 
obtain \eqref{eq:EXP_Ex_CC}, as desired.

\subsection{Proof of Proposition \ref{prop:EXP_LimitQ}} \label{sec:EXP_LIMITQ_PROOF}

We first present the proof in the case that there is $L=1$ auxiliary cost $a(\cdot)$ (with mean $\phi_{a}$) and 
no system cost constraint, and then discuss the changes required to handle the general case.
Throughout the proof, we define $a^{n}(\xv)\defeq\sum_{i=1}^{n}a(x_i)$ and 
$f^{n}(\xv)\defeq\sum_{i=1}^{n}f(x_i)$.  We use summations to denote averaging with respect to
$Q$, but the proof remains valid in the continuous case upon replacing these by integrals.

Let $\Xv$ be the random cost-constrained codeword, and define $\Xv' \sim Q^{n}(\xv')$.
From \eqref{eq:EXP_Px_Multi}, we have
    \begin{equation}
        \frac{1}{n}\EE\big[f^{n}(\Xv)\big] = \frac{1}{n}\frac{1}{\mu_{n}}\EE\Big[f^{n}(\Xv')\openone\big\{|a^{n}(\Xv')-n\phi_{a}|\le\delta\big\}\Big].
    \end{equation}
By a direct differentiation, this is equal to $\frac{d}{d\lambda}\big(\frac{1}{n}\log Z(\lambda)\big)$
evaluated at $\lambda=0$, where
    \begin{equation}
        Z(\lambda) \defeq \EE\Big[e^{\lambda f^{n}(\Xv')}\openone\big\{|a^{n}(\Xv')-n\phi_{a}|\le\delta\big\}\Big].
    \end{equation}
Expanding the expectation and using the inverse Laplace transform relation
    \begin{equation}
    \openone\{z \ge 0\} = \frac{1}{2\pi j}\int_{u-j\infty}^{u+j\infty}\frac{e^{tz}}{t}dt
    \end{equation} 
for $u>0$, we have the following:
    \begin{align}
        Z(\lambda) &= \sum_{\xv'}Q^{n}(\xv')e^{\lambda f^{n}(\xv')}\Big(\openone\{a^{n}(\xv') \le n\phi_{a}+\delta\}-\openone\{a^{n}(\xv') \le n\phi_{a}-\delta\}\Big) \\
                   &= \frac{1}{2\pi j}\sum_{\xv'}Q^{n}(\xv')e^{\lambda f^{n}(\xv')}\int_{u-j\infty}^{u+j\infty}e^{t(n\phi_{a}-a^{n}(\xv'))}\frac{e^{t\delta}-e^{-t\delta}}{t}dt \\
                   &= \frac{1}{2\pi j}\int_{u-j\infty}^{u+j\infty}\frac{e^{t\delta}-e^{-t\delta}}{t}e^{n\phi_{a}t}\bigg(\sum_{x'}Q(x')e^{-ta(x')+\lambda f(x')}\bigg)^{n}dt.
    \end{align}
Denoting the derivative of $Z(\cdot)$ by $Z'(\cdot)$, we have
    \begin{align}
        Z'(0) &= \frac{n}{2\pi j}\int_{u-j\infty}^{u+j\infty} \frac{e^{t\delta}-e^{-t\delta}}{t}e^{n\phi_{a}t}\bigg(\sum_{x'}Q(x')e^{-ta(x')}\bigg)^{n-1}\sum_{x'}Q(x')f(x')e^{-ta(x')}dt \\
              &= \frac{n}{2\pi j}\int_{u-j\infty}^{u+j\infty} \frac{e^{t\delta}-e^{-t\delta}}{t}e^{n\phi_{a}t}\bigg(\sum_{x'}Q(x')e^{-ta(x')}\bigg)^{n}\frac{\sum_{x'}Q(x')f(x')e^{-ta(x')}}{\sum_{x'}Q(x')e^{-ta(x')}}dt.
    \end{align}
Finally, using the assumption that $\EE_{Q}[a(X)^2]<\infty$ and applying the saddlepoint
method \cite[Ch. 4-5]{deBruijn} (see also \cite[Sec. 4.2-4.3]{MerhavPhysics}), we obtain
    \begin{equation}
        \frac{d}{d\lambda}\Big(\frac{1}{n}\log Z(\lambda)\Big)\Big|_{\lambda=0} = \frac{Z'(0)}{Z(0)} \to \frac{\sum_{x'}Q(x')f(x')e^{-t_{0}a(x')}}{\sum_{x'}Q(x')e^{-t_{0}a(x')}}, \label{eq:EXP_LimitQ_Final}
    \end{equation}
where $t_{0}$ is the zero of the derivative (saddlepoint) of the function $h(t)=\phi_{a}t + \log\EE_{Q}[e^{-ta(X)}]$. 
Since $\phi_{a}=\EE_{Q}[a(X)]$ by definition, it is easily verified that $t_{0}=0$, and thus the right-hand
side of \eqref{eq:EXP_LimitQ_Final} equals $\EE_{Q}[f(X)]$, as desired.

In the case of multiple auxiliary costs, the argument is similar, but with $ta(\cdot)$ 
replaced by $\sum_{l}t_{l}a_{l}(\cdot)$. The system cost $c(x)$ in \eqref{eq:EXP_SetDn} can be
handled similarly provided that $\EE_{Q}[c(X)]\le\Gamma$, which is an assumption of the proposition.

\bibliographystyle{IEEEtran}
\bibliography{18-MultiUser,18-SingleUser,35-Other,12-Paper} 

\end{document}